\newcommand{\CompilingOnStefansSystem}{0}
	\DeclareDocumentCommand{\restrict}{d[] d[]}{\ensuremath{\langle #1 \vert #2 \rangle}}
	\DeclareDocumentCommand{\subLP}{d[] d[]}{\ensuremath{#1 [| #2 |]}}
	\DeclareSymbolFont{stixSymbolFont}{LS1}{stixfrak} {m} {n}
	\DeclareSymbolFont{largesymbolsstix}{LS2}{stixex}{m}{n}
	\DeclareMathDelimiter{\llangle}{\mathopen}{stixSymbolFont}{"28}{stixSymbolFont}{"28}
	\DeclareMathDelimiter{\rrangle}{\mathclose}{stixSymbolFont}{"29}{stixSymbolFont}{"29}
	\DeclareMathDelimiter{\lbrbrak}{\mathopen}{largesymbolsstix}{"EE}{largesymbolsstix}{"14}
	\DeclareMathDelimiter{\rbrbrak}{\mathclose}{largesymbolsstix}{"EF}{largesymbolsstix}{"15}
	\DeclareDocumentCommand{\restrict}{d[] d[]}{\ensuremath{\llangle #1 \vert #2 \rrangle}}
	\DeclareDocumentCommand{\subLP}{d[] d[]}{\ensuremath{#1 \llbracket #2 \rrbracket}}
  \newcolumntype{F}{>{$\displaystyle}r<{$}@{\hspace{0.0em}}}
  \newcolumntype{C}{>{$\displaystyle\,}c<{$}@{\hspace{0.0em}}}
  \newcolumntype{B}{>{$\displaystyle\,}r<{$}@{\hspace{0.0em}}}
  \newcolumntype{R}{>{$\displaystyle}r<{$}@{\hspace{0.2em}}}
  \newcolumntype{S}{>{$\displaystyle}r<{$}@{\hspace{0.2em}}}
  \newcolumntype{L}{>{$\displaystyle}l<{$}@{\hspace{0.2em}}}
  \newcolumntype{Q}{>{$\displaystyle}l<{$}@{\hspace{0.3em}}}
 \newcounter{IPnumber}
 \newcommand{\tagIt}[1]{\refstepcounter{equation}\textnormal{({\theequation})} \label{#1}}
 \newcommand{\eqnum}{\leavevmode\hfill\refstepcounter{equation}\textup{\tagform@{\theequation}}}
 \theoremstyle{plain}
\declaretheorem[style=definition,qed=$\square$,sibling=theorem]{definition}
\newtheorem{observation}[theorem]{Observation}
\title{(FPT-)Approximation Algorithms for the \newline Virtual Network Embedding Problem}
\titlerunning{(FPT-)Approximation Algorithms for the Virtual Network Embedding Problem}%optional, please use if title is longer than one line
\author{Matthias Rost}{TU Berlin, Germany{}}{}{}{}%mandatory, please use full name; only 1 author per \author macro; first two parameters are mandatory, other parameters can be empty.
\author{Stefan Schmid}{University of Vienna, Austria{}}{}{}{}
\authorrunning{M. Rost and S. Schmid}%mandatory. First: Use abbreviated first/middle names. Second (only in severe cases): Use first author plus 'et. al.'
\subjclass{\ccsdesc[500]{Theory of computation~Graph algorithms analysis}}% mandatory: Please choose ACM 2012 classifications from https://www.acm.org/publications/class-2012 or https://dl.acm.org/ccs/ccs_flat.cfm . E.g., cite as "General and reference $\rightarrow$ General literature" or \ccsdesc[100]{General and reference~General literature}. 
\keywords{Graph Embedding, Linear Programming, Approximation Algorithms}%mandatory
\begin{document}

\maketitle

\newcommand{\TODO}[1]{\textcolor{red}{TODO: #1}}

\newcommand\numberthis{\addtocounter{equation}{1}\tag{\theequation}}

\newcommand{\NULL}{\textnormal{\texttt{NULL}}}
\newcommand{\scale}{\ensuremath{\lambda}}

% % % % % % % % % % % % % % % % % % % %  
%		NUMBERS
% % % % % % % % % % % % % % % % % % % %  

\newcommand{\preals}{\ensuremath{\mathbb{R}_{\geq 0}}}

% % % % % % % % % % % % % % % % % % % %  
%		REQUESTS
% % % % % % % % % % % % % % % % % % % %  

\newcommand{\requests}{\ensuremath{\mathcal{R}}}
\newcommand{\requestsP}{\ensuremath{\mathcal{R}'}}
\newcommand{\req}{r}
\newcommand{\types}{\ensuremath{\mathcal{T}}}
\newcommand{\type}{\ensuremath{\tau}}
\newcommand{\SVTypes}[1][\type]{\ensuremath{V^{#1}_{S}}}
\newcommand{\SVTypesCycle}[1][C_k]{\ensuremath{V^{#1}_{S,t}}}

\newcommand{\VG}[1][\req]{\ensuremath{G_{#1}}}
\newcommand{\VV}[1][\req]{\ensuremath{V_{#1}}}
\newcommand{\VE}[1][\req]{\ensuremath{E_{#1}}}
\newcommand{\VGbar}[1][\req]{\ensuremath{\bar{G}_{#1}}}
\newcommand{\VVbar}[1][\req]{\ensuremath{\bar{V}_{#1}}}
\newcommand{\VEbar}[1][\req]{\ensuremath{\bar{E}_{#1}}}
\newcommand{\Vstart}[1][\req]{\ensuremath{s_{#1}}}
\newcommand{\Vend}[1][\req]{\ensuremath{t_{#1}}}

\newcommand{\VGext}[1][\req]{\ensuremath{G^{\textnormal{ext}}_{#1}}}
\newcommand{\VGextFlow}[1][\req]{\ensuremath{G^{\textnormal{ext}}_{#1,f}}}
\NewDocumentCommand{\VGP}{O{\req} O{i} O{j}}{\ensuremath{G^{#2,#3}_{#1}}}
\newcommand{\VVext}[1][\req]{\ensuremath{V^{\textnormal{ext}}_{#1}}}
\newcommand{\VEext}[1][\req]{\ensuremath{E^{\textnormal{ext}}_{#1}}}
\newcommand{\VEextHorizontal}[1][\req]{\ensuremath{E^{\textnormal{ext}}_{#1,u,v}}}
\newcommand{\VEextVertical}[1][\req]{\ensuremath{E^{\textnormal{ext}}_{#1,\type,u}}}

\newcommand{\Vsource}[1][\req]{\ensuremath{o^+_{\req}}}
\newcommand{\Vsink}[1][\req]{\ensuremath{o^-_{\req}}}

\newcommand{\VMultiplicity}[1][\req]{\ensuremath{M_{#1}}}
\newcommand{\Vprofit}[1][\req]{\ensuremath{b_{#1}}}
\newcommand{\VprofitMax}{\ensuremath{b_{\max}}}

\newcommand{\Vcap}[1][\req]{\ensuremath{d_{#1}}}
\newcommand{\VVloc}[1][i]{\ensuremath{V^{{\req,#1}}_{S}}}
\newcommand{\VEloc}[1][i,j]{\ensuremath{E^{{\req,#1}}_{S}}}
\newcommand{\Vtype}[1][\req]{\ensuremath{\tau_{#1}}}

% % % % % % % % % % % % % % % % % % % %  
%		SUBSTRATE
% % % % % % % % % % % % % % % % % % % %  

\newcommand{\SG}{\ensuremath{G_S}}
\newcommand{\SR}{\ensuremath{R_{S}}}
\newcommand{\SRV}{\ensuremath{R^V_{S}}}
\newcommand{\SV}{\ensuremath{V_S}}
\newcommand{\SE}{\ensuremath{E_S}}

\newcommand{\Scap}{\ensuremath{d_{S}}}
\newcommand{\ScapType}[1][\type]{\ensuremath{d^{#1}_{\SV}}}
\newcommand{\ScapTypePrime}[1][\type]{\ensuremath{{{d'}^{#1}_{\SV}}}}

\newcommand{\Scost}{\ensuremath{c_{S}}}
\newcommand{\ScostType}[1][\type]{\ensuremath{c^{#1}_{\SV}}}

% % % % % % % % % % % % % % % % % % % %  
%		EMBEDDING
% % % % % % % % % % % % % % % % % % % %  

\newcommand{\map}[1][\req]{\ensuremath{m_{#1}}}
\newcommand{\mapV}[1][\req]{\ensuremath{m^V_{#1}}}
\newcommand{\mapE}[1][\req]{\ensuremath{m^E_{#1}}}

\newcommand{\FeasibleLP}{\ensuremath{\mathcal{F}^{\textnormal{new}}_{\textnormal{LP}}}}
\newcommand{\FeasibleIP}{\ensuremath{\mathcal{F}_{\textnormal{IP}}}}

% % % % % % % % % % % % % % % % % % % %  
%		IP / LP Environment
% % % % % % % % % % % % % % % % % % % %  

\makeatletter
\newcommand{\nosemic}{\renewcommand{\@endalgocfline}{\relax}}% Drop semi-colon ;
\newcommand{\dosemic}{\renewcommand{\@endalgocfline}{\algocf@endline}}% Reinstate semi-colon ;
\newcommand{\pushline}{\Indp}% Indent
\newcommand{\popline}{\Indm\dosemic}% Undent
\let\oldnl\nl% Store \nl in \oldnl
\newcommand{\nonl}{\renewcommand{\nl}{\let\nl\oldnl}}% Remove line number for one line
\makeatother

\makeatletter
\newcommand{\removelatexerror}{\let\@latex@error\@gobble}
\makeatother

\newcounter{ipCounter}
\NewDocumentEnvironment{IPFormulation}{m}{%
\refstepcounter{ipCounter}
\begin{algorithm}[#1]%
\renewcommand\thealgocf{\arabic{ipCounter}}
% other code
}{%
\end{algorithm}
\addtocounter{algocf}{-1}
}

\NewDocumentEnvironment{IPFormulationStar}{m}{%
\refstepcounter{ipCounter}
\begin{algorithm*}[#1]%
\renewcommand\thealgocf{\arabic{ipCounter}}
% other code
}{%
\end{algorithm*}
\addtocounter{algocf}{-1}
}

%\newcommand{\tagIt}{\refstepcounter{IPnumber}\tag{IP-\theIPnumber}}

% % % % % % % % % % % % % % % % % % % % 
%		Solution Spaces
% % % % % % % % % % % % % % % % % % % % 

\newcommand{\spaceSolReq}[1][\req]{\ensuremath{\mathcal{M}_{#1}}}
\newcommand{\spaceLP}{\ensuremath{\mathcal{F}^{\textnormal{mcf}}_{\textnormal{LP}}}}
\newcommand{\spaceIP}{\ensuremath{\mathcal{F}^{\textnormal{mcf}}_{\textnormal{IP}}}}
\newcommand{\spaceIPCC}{\ensuremath{\mathcal{F}^{\textnormal{}}_{\textnormal{IP}}}}
\newcommand{\spaceLPCC}{\ensuremath{\mathcal{F}^{\textnormal{}}_{\textnormal{LP}}}}
\newcommand{\spaceIPMDK}{\ensuremath{\mathcal{F}^{\textnormal{MDK}}_{\textnormal{IP}}}}
\newcommand{\spaceLPMDK}{\ensuremath{\mathcal{F}^{\textnormal{MDK}}_{\textnormal{LP}}}}
\newcommand{\spaceLPNew}{\ensuremath{\mathcal{F}^{\textnormal{new}}_{\textnormal{LP}}}}
\newcommand{\spaceLPD}{\ensuremath{\mathcal{F}^{\mathcal{D}}_{\textnormal{LP}}}}
\newcommand{\spaceLPDreq}[1][\req]{\ensuremath{\mathcal{F}^{\mathcal{D}}_{\textnormal{LP},#1}}}

\DeclareDocumentCommand{\OptProfit}{}{\ensuremath{\hat{P}}}

% % % % % % % % % % % % % % % % % % % % 
%		Graph Constructions
% % % % % % % % % % % % % % % % % % % %  

\DeclareDocumentCommand{\NodeG}{O{\req} O{\pi}}{\ensuremath{G^N_{#1,#2}}}
\DeclareDocumentCommand{\NodeV}{O{\req} O{\pi}}{\ensuremath{V^N_{#1,#2}}}
\DeclareDocumentCommand{\NodeE}{O{\req} O{\pi}}{\ensuremath{E^N_{#1,#2}}}

\DeclareDocumentCommand{\EdgeG}{O{\req} O{i} O{j} O{u}}{\ensuremath{G^E_{#1,#2,#3,#4}}}
\DeclareDocumentCommand{\EdgeV}{O{\req} O{i} O{j} O{u}}{\ensuremath{V^E_{#1,#2,#3,#4}}}
\DeclareDocumentCommand{\EdgeE}{O{\req} O{i} O{j} O{u}}{\ensuremath{E^E_{#1,#2,#3,#4}}}

\DeclareDocumentCommand{\VESD}{O{\req}}{\ensuremath{\overrightarrow{E}_{#1}}}
\DeclareDocumentCommand{\VEOD}{O{\req}}{\ensuremath{\overleftarrow{E}_{#1}}}
\DeclareDocumentCommand{\VESigmaD}{O{\req} O{\sigma}}{\ensuremath{E_{#1,#2}}}

\DeclareDocumentCommand{\NodeVRange}{O{\req} O{i} O{j}}{\ensuremath{V^N_{#1,#2,#3}}}
\DeclareDocumentCommand{\NodeVRangeRange}{O{\req} O{i} O{j} O{u} O{v}}{\ensuremath{V^N_{#1,#2,#3,#4,#5}}}

\DeclareDocumentCommand{\path}{O{k }}{\ensuremath{P_{#1}}}
\DeclareDocumentCommand{\cycle}{O{k }}{\ensuremath{C_{#1}}}

\DeclareDocumentCommand{\NodePaths}{O{\req}}{\ensuremath{\mathcal{P}_{#1}}}

% % % % % % % % % % % % % % % % % % % % 
%		Load Additions
% % % % % % % % % % % % % % % % % % % %  

\DeclareDocumentCommand{\loadV}{O{\req} O{u}}{\ensuremath{a_{#1,#2}}}
\DeclareDocumentCommand{\loadE}{O{\req} O{u} O{v}}{\ensuremath{a_{#1,#2,#3}}}
\DeclareDocumentCommand{\loadX}{O{\req} O{x}}{\ensuremath{a_{#1,#2}}}
\DeclareDocumentCommand{\decomp}{O{\req} O{k}}{\ensuremath{D_{#1}^{#2}}}
\DeclareDocumentCommand{\decompHat}{O{\req} O{k}}{\ensuremath{{\hat{D}}_{#1}^{#2}}}
\DeclareDocumentCommand{\load}{O{\req} O{k}}{\ensuremath{a_{#1}^{#2}}}
\DeclareDocumentCommand{\prob}{O{\req} O{k}}{\ensuremath{f_{#1}^{#2}}}
\DeclareDocumentCommand{\mapping}{O{\req} O{k}}{\ensuremath{m_{#1}^{#2}}}

\DeclareDocumentCommand{\loadHat}{O{\req} O{k}}{\ensuremath{\hat{a}_{#1}^{#2}}}
\DeclareDocumentCommand{\probHat}{O{\req} O{k}}{\ensuremath{\hat{f}_{#1}^{#2}}}
\DeclareDocumentCommand{\mappingHat}{O{\req} O{k}}{\ensuremath{\hat{m}_{#1}^{#2}}}

\DeclareDocumentCommand{\loadHat}{O{\req} O{k}}{\ensuremath{\hat{a}_{#1}^{#2}}}
\DeclareDocumentCommand{\probHat}{O{\req} O{k}}{\ensuremath{\hat{f}_{#1}^{#2}}}
\DeclareDocumentCommand{\mappingHat}{O{\req} O{k}}{\ensuremath{\hat{m}_{#1}^{#2}}}

\DeclareDocumentCommand{\Exp}{}{\ensuremath{\mathbb{E}}}
\DeclareDocumentCommand{\randVarX}{O{\req} O{k}}{\ensuremath{X_{#1}^{#2}}}
\DeclareDocumentCommand{\randVarY}{O{\req}}{\ensuremath{Y_{#1}}}
\DeclareDocumentCommand{\randVarZ}{O{\req}}{\ensuremath{Z_{#1}}}
\DeclareDocumentCommand{\randVarL}{O{x}}{\ensuremath{L_{x}}}
\DeclareDocumentCommand{\randVarLX}{O{\req} O{x} O{y}}{\ensuremath{L_{#1,#2,#3}}}
\DeclareDocumentCommand{\randVarLNode}{O{\req} O{\type} O{u}}{\ensuremath{L_{#1,#2,#3}}}
\DeclareDocumentCommand{\randVarLEdge}{O{\req} O{u} O{v}}{\ensuremath{L_{#1,#2,#3}}}
\DeclareDocumentCommand{\randVarM}{O{\req}}{\ensuremath{M_{#1}}}
\DeclareDocumentCommand{\randVarC}{O{\req}}{\ensuremath{C_{#1}}}

\DeclareDocumentCommand{\ProbVarX}{O{1}}{\ensuremath{\mathbb{P}(\randVarX = #1)}}
\DeclareDocumentCommand{\ProbVarY}{O{1}}{\ensuremath{\mathbb{P}(\randVarY = #1)}}
\DeclareDocumentCommand{\ProbVarZ}{O{1}}{\ensuremath{\mathbb{P}(\randVarZ = #1)}}
\DeclareDocumentCommand{\ProbVarL}{O{1}}{\ensuremath{\mathbb{P}(\randVarL = #1)}}
\DeclareDocumentCommand{\ProbVarM}{O{1}}{\ensuremath{\mathbb{P}(\randVarM = #1)}}
\DeclareDocumentCommand{\ProbVarC}{O{1}}{\ensuremath{\mathbb{P}(\randVarC = #1)}}

\DeclareDocumentCommand{\randVarObjApprox}{}{\ensuremath{Obj_{\textnormal{Alg}}}}
%\DeclareDocumentCommand{\objectiveLP}{}{\ensuremath{\textnormal{OptLP}_{\textnormal{LP}}}}
\DeclareDocumentCommand{\optLP}{}{\ensuremath{\textnormal{B}_{\textnormal{LP}}}}
\DeclareDocumentCommand{\optLPstd}{}{\ensuremath{\textnormal{Opt}^{\textnormal{mcf}}_{\textnormal{LP}}}}
\DeclareDocumentCommand{\optLPnew}{}{\ensuremath{\textnormal{Opt}^{\textnormal{new}}_{\textnormal{LP}}}}
\DeclareDocumentCommand{\optIP}{}{\ensuremath{\textnormal{B}_{\textnormal{opt}}}}

\DeclareDocumentCommand{\WAC}{O{\req}}{\ensuremath{\textnormal{WC}_{\req}}}

% % % % % % % % % % % % % % % % % % % % 
%		Potential Embeddings
% % % % % % % % % % % % % % % % % % % %  

\DeclareDocumentCommand{\PotEmbeddings}{O{\req}}{\ensuremath{\mathcal{D}_{#1}}}
\DeclareDocumentCommand{\PotEmbeddingsHat}{O{\req}}{\ensuremath{\hat{\mathcal{D}}_{#1}}}

% % % % % % % % % % % % % % % % % % % % 
%		Hoeffding
% % % % % % % % % % % % % % % % % % % %  

%\DeclareDocumentCommand{\maxLoadX}{O{x}}{\ensuremath{\textnormal{max} {A^\sum}(#1)}}
%\DeclareDocumentCommand{\maxDemandV}{O{\req} O{\type} O{u}}{\ensuremath{\textnormal{max} {A}(#1,#2,#3)}}
%\DeclareDocumentCommand{\maxDemandE}{O{\req} O{u} O{v}}{\ensuremath{\textnormal{max} A ({#1,#2,#3})}}
%\DeclareDocumentCommand{\maxAllocV}{O{\req} O{\type} O{u}}{\ensuremath{\textnormal{max} {A^\Sigma}({#1,#2,#3})}}
%\DeclareDocumentCommand{\maxAllocE}{O{u} O{v}}{\ensuremath{\textnormal{max} {A^\Sigma}({#1,#2})}}

%\DeclareDocumentCommand{\maxLoadX}{O{x}}{\ensuremath{{A_{\textnormal{max}}^{\Sigma}}(#1)}}
\DeclareDocumentCommand{\maxDemandV}{O{\req} O{\type} O{u}}{\ensuremath{{d}_{\textnormal{max}}(#1,#2,#3)}}
\DeclareDocumentCommand{\maxDemandE}{O{\req} O{u} O{v}}{\ensuremath{d_{\textnormal{max}} ({#1,#2,#3})}}
\DeclareDocumentCommand{\maxDemandX}{O{\req} O{x} O{y}}{\ensuremath{d_{\textnormal{max}} ({#1,#2,#3})}}
\DeclareDocumentCommand{\maxAllocV}{O{\req} O{\type} O{u}}{\ensuremath{{A_{\textnormal{max}}}({#1,#2,#3})}}
\DeclareDocumentCommand{\maxAllocE}{O{\req} O{u} O{v}}{\ensuremath{ {A_{\textnormal{max}}}({#1,#2,#3})}}
\DeclareDocumentCommand{\maxAllocX}{O{\req} O{x} O{y}}{\ensuremath{ {A_{\textnormal{max}}}({#1,#2,#3})}}

\DeclareDocumentCommand{\DeltaV}{}{\ensuremath{\Delta_V}}
\DeclareDocumentCommand{\DeltaV}{}{\ensuremath{\Delta_R}}
\DeclareDocumentCommand{\DeltaE}{}{\ensuremath{\Delta_E}}

% % % % % % % % % % % % % % % % % % % % 
%		Cactus Graph Extensions
% % % % % % % % % % % % % % % % % % % %  

\DeclareDocumentCommand{\VVroot}{O{\req}}{\ensuremath{s_{#1}}}
\DeclareDocumentCommand{\VVpred}{O{\req}}{\ensuremath{\pi_{#1}}}

\newcommand{\extractionOrderCharacter}{\ensuremath{\mathcal{X}}}

\newcommand{\VGbfs}[1][\req]{\ensuremath{G^{\extractionOrderCharacter}_{#1}}}
\newcommand{\VVbfs}[1][\req]{\ensuremath{V^{\extractionOrderCharacter}_{#1}}}
\newcommand{\VEbfs}[1][\req]{\ensuremath{E^{\extractionOrderCharacter}_{#1}}}

\DeclareDocumentCommand{\Cycles}{O{\req}}{\ensuremath{\mathcal{C}_{#1}}}
\DeclareDocumentCommand{\Paths}{O{\req}}{\ensuremath{\mathcal{P}_{#1}}}

\DeclareDocumentCommand{\VVcycleSource}{O{\req} O{k}}{\ensuremath{s^{C_{#2}}_{#1}}}
\DeclareDocumentCommand{\VVcycleTarget}{O{\req} O{k}}{\ensuremath{t^{C_{#2}}_{#1}}}
\DeclareDocumentCommand{\VVpathSource}{O{\req} O{k}}{\ensuremath{s^{P_{#2}}_{#1}}}
\DeclareDocumentCommand{\VVpathTarget}{O{\req} O{k}}{\ensuremath{t^{P_{#2}}_{#1}}}

\DeclareDocumentCommand{\VGcycle}{O{\req} O{k}}{\ensuremath{G^{\extractionOrderCharacter,C_{#2}}_{#1}}}
\DeclareDocumentCommand{\VVcycle}{O{\req} O{k}}{\ensuremath{V^{\extractionOrderCharacter,C_{#2}}_{#1}}}
\DeclareDocumentCommand{\VEcycle}{O{\req} O{k}}{\ensuremath{E^{\extractionOrderCharacter,C_{#2}}_{#1}}}
\DeclareDocumentCommand{\VEcycleSame}{O{\req} O{k}}{\ensuremath{\overrightarrow{E}^{C_{#2}}_{#1}}}
\DeclareDocumentCommand{\VEcycleDiff}{O{\req} O{k}}{\ensuremath{\overleftarrow{E}^{C_{#2}}_{#1}}}

\DeclareDocumentCommand{\VGcycleOrig}{O{\req} O{k}}{\ensuremath{G^{C_{#2}}_{#1}}}
\DeclareDocumentCommand{\VVcycleOrig}{O{\req} O{k}}{\ensuremath{V^{C_{#2}}_{#1}}}
\DeclareDocumentCommand{\VEcycleOrig}{O{\req} O{k}}{\ensuremath{E^{C_{#2}}_{#1}}}
\DeclareDocumentCommand{\VGpath}{O{\req} O{k}}{\ensuremath{G^{P_{#2}}_{#1}}}
\DeclareDocumentCommand{\VVpath}{O{\req} O{k}}{\ensuremath{V^{P_{#2}}_{#1}}}
\DeclareDocumentCommand{\VEpath}{O{\req} O{k}}{\ensuremath{E^{P_{#2}}_{#1}}}

\DeclareDocumentCommand{\VEpathSame}{O{\req} O{k}}{\ensuremath{\overrightarrow{E}^{P_{#2}}_{#1}}}
\DeclareDocumentCommand{\VEpathDiff}{O{\req} O{k}}{\ensuremath{\overleftarrow{E}^{P_{#2}}_{#1}}}

\DeclareDocumentCommand{\VEDiff}{O{\req}}{\ensuremath{\overleftarrow{E}^{\extractionOrderCharacter}_{#1}}}

\DeclareDocumentCommand{\VEcycles}{O{\req}}{\ensuremath{E^{\mathcal{C}}_{#1}}}
\DeclareDocumentCommand{\VEpaths}{O{\req}}{\ensuremath{E^{\mathcal{P}}_{#1}}}

\DeclareDocumentCommand{\VVcycleSourcesTargets}{O{\req}}{\ensuremath{V^{\mathcal{C},\pm}_{#1}}}
\DeclareDocumentCommand{\VVpathSourcesTargets}{O{\req}}{\ensuremath{V^{\mathcal{P},\pm}_{#1}}}
\DeclareDocumentCommand{\VVSourcesTargets}{O{\req}}{\ensuremath{V^{\pm}_{#1}}}

\DeclareDocumentCommand{\VVcycleSources}{O{\req}}{\ensuremath{V^{\mathcal{C},+}_{#1}}}
\DeclareDocumentCommand{\VVpathSources}{O{\req}}{\ensuremath{V^{\mathcal{P},+}_{#1}}}

\DeclareDocumentCommand{\VVcycleTargets}{O{\req}}{\ensuremath{V^{\mathcal{C},-}_{#1}}}
\DeclareDocumentCommand{\VVpathTargets}{O{\req}}{\ensuremath{V^{\mathcal{P},-}_{#1}}}

\DeclareDocumentCommand{\VGcycleBranchR}{O{\req} O{k}}{\ensuremath{G^{C_{#2}, B_1}_{#1}}}
\DeclareDocumentCommand{\VVcycleBranchR}{O{\req} O{k}}{\ensuremath{V^{C_{#2}, B_1}_{#1}}}
\DeclareDocumentCommand{\VEcycleBranchR}{O{\req} O{k}}{\ensuremath{E^{C_{#2}, B_1}_{#1}}}

\DeclareDocumentCommand{\VGcycleBranchL}{O{\req} O{k}}{\ensuremath{G^{C_{#2}, B_2}_{#1}}}
\DeclareDocumentCommand{\VVcycleBranchL}{O{\req} O{k}}{\ensuremath{V^{C_{#2}, B_2}_{#1}}}
\DeclareDocumentCommand{\VEcycleBranchL}{O{\req} O{k}}{\ensuremath{E^{C_{#2}, B_2}_{#1}}}

\DeclareDocumentCommand{\VGdecomp}{O{\req} O{k}}{\ensuremath{G^{\mathcal{D}}_{#1}}}
\DeclareDocumentCommand{\VVdecomp}{O{\req} O{k}}{\ensuremath{V^{\mathcal{D}}_{#1}}}
\DeclareDocumentCommand{\VEdecomp}{O{\req} O{k}}{\ensuremath{E^{\mathcal{D}}_{#1}}}

\DeclareDocumentCommand{\VVbranching}{O{\req} }{\ensuremath{\mathcal{B}_{#1}}}
\DeclareDocumentCommand{\VVbranchingcycle}{O{\req} O{k}}{\ensuremath{\mathcal{B}^{C_{#2}}_{#1}}}
\DeclareDocumentCommand{\VVbranchingpath}{O{\req} O{k}}{\ensuremath{\mathcal{B}^{P_{k}}_{#1}}}
\DeclareDocumentCommand{\VVjoin}{O{\req} }{\ensuremath{\mathcal{J}_{#1}}}
\DeclareDocumentCommand{\VVaggregation}{O{\req} }{\ensuremath{\mathcal{A}_{#1}}}

\DeclareDocumentCommand{\VGextcycle}{O{\req} O{k}}{\ensuremath{G^{C_{#2}}_{#1,\textnormal{ext}}}}

\DeclareDocumentCommand{\VVextcycle}{O{\req} O{k}}{\ensuremath{V^{C_{#2}}_{#1,\textnormal{ext}}}}
\DeclareDocumentCommand{\VVextcycleSources}{O{\req} O{k}}{\ensuremath{V^{C_{#2}}_{#1,+}}}
\DeclareDocumentCommand{\VVextcycleTargets}{O{\req} O{k}}{\ensuremath{V^{C_{#2}}_{#1,-}}}
\DeclareDocumentCommand{\VVextcycleSubstrate}{O{\req} O{k}}{\ensuremath{V^{C_{#2}}_{#1,S}}}

\DeclareDocumentCommand{\VEextcycle}{O{\req} O{k}}{\ensuremath{E^{C_{#2}}_{#1,\textnormal{ext}}}}
\DeclareDocumentCommand{\VEextcycleSources}{O{\req} O{k}}{\ensuremath{E^{C_{#2}}_{#1,+}}}
\DeclareDocumentCommand{\VEextcycleTargets}{O{\req} O{k}}{\ensuremath{E^{C_{#2}}_{#1,-}}}
\DeclareDocumentCommand{\VEextcycleSubstrate}{O{\req} O{k}}{\ensuremath{E^{C_{#2}}_{#1,S}}}
\DeclareDocumentCommand{\VEextcycleF}{O{\req} O{k}}{\ensuremath{E^{C_{#2}}_{#1,F}}}

\DeclareDocumentCommand{\VGextpath}{O{\req} O{k}}{\ensuremath{G^{{P_{#2}}}_{#1,\textnormal{ext}}}}

\DeclareDocumentCommand{\VVextpath}{O{\req} O{k}}{\ensuremath{V^{P_{#2}}_{#1,\textnormal{ext}}}}
\DeclareDocumentCommand{\VVextpathSources}{O{\req} O{k}}{\ensuremath{V^{P_{#2}}_{#1,+}}}
\DeclareDocumentCommand{\VVextpathTargets}{O{\req} O{k}}{\ensuremath{V^{P_{#2}}_{#1,-}}}
\DeclareDocumentCommand{\VVextpathSubstrate}{O{\req} O{k}}{\ensuremath{V^{P_{#2}}_{#1,S}}}

\DeclareDocumentCommand{\VEextpath}{O{\req} O{k}}{\ensuremath{E^{P_{#2}}_{#1,\textnormal{ext}}}}
\DeclareDocumentCommand{\VEextpathSources}{O{\req} O{k}}{\ensuremath{E^{P_{#2}}_{#1,+}}}
\DeclareDocumentCommand{\VEextpathTargets}{O{\req} O{k}}{\ensuremath{E^{P_{#2}}_{#1,-}}}
\DeclareDocumentCommand{\VEextpathSubstrate}{O{\req} O{k}}{\ensuremath{E^{P_{#2}}_{#1,S}}}
\DeclareDocumentCommand{\VEextpathF}{O{\req} O{k}}{\ensuremath{E^{P_{#2}}_{#1,F}}}

\DeclareDocumentCommand{\forest}{O{\req}}{\ensuremath{\mathcal{F}_{#1}}}

\DeclareDocumentCommand{\VGforest}{O{\req}}{\ensuremath{G^{\mathcal{A},\mathcal{F}}_{#1}}}
\DeclareDocumentCommand{\VVforest}{O{\req}}{\ensuremath{V^{\mathcal{A},\mathcal{F}}_{#1}}}
\DeclareDocumentCommand{\VEforest}{O{\req}}{\ensuremath{E^{\mathcal{A},\mathcal{F}}_{#1}}}

\DeclareDocumentCommand{\VGforestOrig}{O{\req}}{\ensuremath{G^{\mathcal{F}}_{#1}}}
\DeclareDocumentCommand{\VVforestOrig}{O{\req}}{\ensuremath{V^{\mathcal{F}}_{#1}}}
\DeclareDocumentCommand{\VEforestOrig}{O{\req}}{\ensuremath{E^{\mathcal{F}}_{#1}}}

% % % % % % % % % % % % % % % % % % % % % % % % % % % % % % % % % % % % % % % % 
%		Cactus Graph Extensions LINEAR PROGRAM
% % % % % % % % % % % % % % % % % % % % % % % % % % % % % % % % % % % % % % % % 

\DeclareDocumentCommand{\varFlowInput}{O{\req} O{i} O{u}}{\ensuremath{f^+_{#1,#2,#3}}}
\DeclareDocumentCommand{\varFlowOutput}{O{\req} O{i} O{u}}{\ensuremath{f^+_{#1,#2,#3}}}

\DeclareDocumentCommand{\VEextcycleHorizontal}{O{\req} O{k} O{u} O{v}}{\ensuremath{E^{C_{#2}}_{#1,\textnormal{ext},#3,#4}}}
\DeclareDocumentCommand{\VEextpathHorizontal}{O{\req} O{k} O{u} O{v}}{\ensuremath{E^{P_{#2}}_{#1,\textnormal{ext},#3,#4}}}
\DeclareDocumentCommand{\VEextcycleVertical}{O{\req} O{k} O{\type} O{u}}{\ensuremath{E^{C_{#2}}_{#1,\textnormal{ext},#3,#4}}}
\DeclareDocumentCommand{\VEextpathVertical}{O{\req} O{k} O{\type} O{u}}{\ensuremath{E^{P_{#2}}_{#1,\textnormal{ext},#3,#4}}}

\DeclareDocumentCommand{\VEextCGHorizontal}{O{\req} O{u} O{v}}{\ensuremath{E^{\textnormal{ext,SCG}}_{#1,#2,#3}}}
\DeclareDocumentCommand{\VEextCGVertical}{O{\req} O{\type} O{u}}{\ensuremath{E^{\textnormal{ext,SCG}}_{#1,#2,#3}}}

\DeclareDocumentCommand{\VVextCGFlowNodes}{O{\req}}{\ensuremath{V^{\textnormal{ext,SCG}}_{#1,\textnormal{flow}}}}
\DeclareDocumentCommand{\VEextCGFlowEdges}{O{\req}}{\ensuremath{E^{\textnormal{ext,SCG}}_{#1,\textnormal{flow}}}}

% % % % % % % % % % % % % % % % % % % % % % % % % % % % % % % % % % % % % % % % 
%		Decomposition Algorithm Cactus Graph Extension
% % % % % % % % % % % % % % % % % % % % % % % % % % % % % % % % % % % % % % % % 

\DeclareDocumentCommand{\Queue}{}{\ensuremath{\mathcal{Q}}}
\DeclareDocumentCommand{\QueueC}{}{\ensuremath{\mathcal{Q}_{\mathcal{C}}}}
\DeclareDocumentCommand{\QueueP}{}{\ensuremath{\mathcal{Q}_{\mathcal{P}}}}
\DeclareDocumentCommand{\UsedPaths}{}{\ensuremath{\mathcal{P}}}
\DeclareDocumentCommand{\Variables}{}{\ensuremath{\mathcal{V}}}

\DeclareDocumentCommand{\VGextcycleFlow}{O{\req} O{k}}{\ensuremath{G^{C_{#2}}_{#1,\textnormal{ext},f}}}

\DeclareDocumentCommand{\VGextcycleFlowBranchR}{O{\req} O{k}}{\ensuremath{G^{C_{#2},{B}_1}_{#1,\textnormal{ext},f}}}
\DeclareDocumentCommand{\VVextcycleFlowBranchR}{O{\req} O{k}}{\ensuremath{V^{C_{#2},{B}_1}_{#1,\textnormal{ext},f}}}
\DeclareDocumentCommand{\VEextcycleFlowBranchR}{O{\req} O{k}}{\ensuremath{E^{C_{#2},{B}_1}_{#1,\textnormal{ext},f}}}

\DeclareDocumentCommand{\VGextcycleFlowBranchL}{O{\req} O{k}}{\ensuremath{G^{C_{#2},{B}_2}_{#1,\textnormal{ext},f}}}
\DeclareDocumentCommand{\VVextcycleFlowBranchL}{O{\req} O{k}}{\ensuremath{V^{C_{#2},{B}_2}_{#1,\textnormal{ext},f}}}
\DeclareDocumentCommand{\VEextcycleFlowBranchL}{O{\req} O{k}}{\ensuremath{E^{C_{#2},{B}_2}_{#1,\textnormal{ext},f}}}

\DeclareDocumentCommand{\VGextcycleBranchR}{O{\req} O{k}}{\ensuremath{G^{C_{#2},{B}_1}_{#1,\textnormal{ext}}}}
\DeclareDocumentCommand{\VVextcycleBranchR}{O{\req} O{k}}{\ensuremath{V^{C_{#2},{B}_1}_{#1,\textnormal{ext}}}}
\DeclareDocumentCommand{\VEextcycleBranchR}{O{\req} O{k}}{\ensuremath{E^{C_{#2},{B}_1}_{#1,\textnormal{ext}}}}

\DeclareDocumentCommand{\VGextcycleBranchL}{O{\req} O{k}}{\ensuremath{G^{C_{#2},{B}_2}_{#1,\textnormal{ext}}}}
\DeclareDocumentCommand{\VVextcycleBranchL}{O{\req} O{k}}{\ensuremath{V^{C_{#2},{B}_2}_{#1,\textnormal{ext}}}}
\DeclareDocumentCommand{\VEextcycleBranchL}{O{\req} O{k}}{\ensuremath{E^{C_{#2},{B}_2}_{#1,\textnormal{ext}}}}

\DeclareDocumentCommand{\VGextpathFlow}{O{\req} O{k}}{\ensuremath{G^{P_{#2}}_{#1,\textnormal{ext},f}}}
\DeclareDocumentCommand{\VVextpathFlow}{O{\req} O{k}}{\ensuremath{V^{P_{#2}}_{#1,\textnormal{ext},f}}}
\DeclareDocumentCommand{\VEextpathFlow}{O{\req} O{k}}{\ensuremath{E^{P_{#2}}_{#1,\textnormal{ext},f}}}

\DeclareDocumentCommand{\VVKSource}{O{\req} O{K}}{\ensuremath{s^{K}_{#1}}}
\DeclareDocumentCommand{\VVKTarget}{O{\req} O{K}}{\ensuremath{t^{K}_{#1}}}
\DeclareDocumentCommand{\VVKSourcesTargets}{O{\req}}{\ensuremath{V^{K,\pm}_{#1}}}

\newcommand{\fracSol}{\ensuremath{\mathsf{opt}_{\textnormal{LP}}}}
\newcommand{\intSol}{\ensuremath{\mathsf{opt}_{\textnormal{IP}}}}
\newcommand{\Prob}{\ensuremath{\mathsf{Pr}}}

% % % % % % % % % % % % % % % % % % % % % % % % % % % % % % % % % % % % % % % % 
%		CONFLUENCE BASICS
% % % % % % % % % % % % % % % % % % % % % % % % % % % % % % % % % % % % % % % % 

%\DeclareDocumentCommand{\VVbfsPre}{O{\req}}{\ensuremath{V^{\mathcal{A},\mathrm{pre}}_{#1,#2}}}
%\DeclareDocumentCommand{\VVbfsSuc}{O{\req}}{\ensuremath{V^{\mathcal{A},\mathrm{suc}}_{#1,#2}}}

%\DeclareDocumentCommand{\VVbfsPre}{O{\req} O{i}}{\ensuremath{V^{\extractionOrderCharacter,\mathrm{pre}}_{#1,#2}}}
\DeclareDocumentCommand{\VEbfsPre}{O{j} O{\req}}{\ensuremath{E^{\extractionOrderCharacter,\mathrm{pre}}_{#2,#1}}}
%\DeclareDocumentCommand{\VVbfsSuc}{O{\req} O{i}}{\ensuremath{V^{\extractionOrderCharacter,\mathrm{suc}}_{#1,#2}}}
\DeclareDocumentCommand{\VEbfsSuc}{O{i} O{\req}}{\ensuremath{E^{\extractionOrderCharacter,\mathrm{suc}}_{#2,#1}}}

\DeclareDocumentCommand{\VEbfsInter}{O{i} O{j} O{\req}}{\ensuremath{E^{\extractionOrderCharacter}_{#3,#1\leadsto #2}}}

\DeclareDocumentCommand{\VEbfsLabels}{O{e} O{\req} }{\ensuremath{\mathcal{L}^{\extractionOrderCharacter}_{#2,#1}}}

\DeclareDocumentCommand{\VEbfsLabelsOrig}{O{e} O{\req} }{\ensuremath{\mathcal{L}_{#2,#1}}}

\DeclareDocumentCommand{\VEbfsAC}{O{i} O{j} O{\req}}{\ensuremath{C^{\extractionOrderCharacter}_{#1,#2}}}

\DeclareDocumentCommand{\VEbfsACL}{O{i} O{j} O{\req} }{\ensuremath{P^{1}_{#1,#2}}}
\DeclareDocumentCommand{\VEbfsACR}{O{i} O{j} O{\req} }{\ensuremath{P^{2}_{#1,#2}}}

\DeclareDocumentCommand{\VEbfsBags}{O{i} O{\req} }{\ensuremath{\mathcal{B}^{\extractionOrderCharacter}_{#2,#1}}}
\DeclareDocumentCommand{\VEbfsBagIterator}{}{\ensuremath{b}}
\DeclareDocumentCommand{\VEbfsBag}{O{\VEbfsBagIterator} O{i} O{\req}}{\ensuremath{B^{\extractionOrderCharacter,#1}_{#3,#2}}}

\DeclareDocumentCommand{\ewX}{}{\ensuremath{\mathrm{ew}_{\extractionOrderCharacter}}}
\DeclareDocumentCommand{\ew}{}{\ensuremath{\mathrm{ew}}}

\DeclareDocumentCommand{\deltaMinusA}{O{i}}{\ensuremath{\delta^-_{\extractionOrderCharacter}(#1)}}
\DeclareDocumentCommand{\deltaPlusA}{O{i}}{\ensuremath{\delta^+_{\extractionOrderCharacter}(#1)}}

\newcommand{\compP}{\ensuremath{\mathcal{P}}}
\newcommand{\compNP}{\ensuremath{\mathcal{NP}}}
\newcommand{\compPeqNP}{\ensuremath{\compP{\,=\,}\compNP}}
\newcommand{\compPneqNP}{\ensuremath{\compP{\,\neq\,}\compNP}}

\newcommand{\NPhard}{\ensuremath{\compNP\text{-hard}}}
\newcommand{\NPcomplete}{\ensuremath{\compNP\text{-complete}}}
\newcommand{\NPhardness}{\ensuremath{\compNP\text{-hardness}}}
\newcommand{\NPcompleteness}{\ensuremath{\compNP\text{-completeness}}}

\newcommand{\VNEP}{\ensuremath{\textsc{VNEP}}}

\newcommand{\customParagraphStar}[1]{\subsection{{#1}}}

%\pagenumbering{arabic}
%\setcounter{page}{1}%Leave this line commented out.

\begin{abstract}
Many resource allocation problems in the cloud can
be described as a basic \emph{Virtual Network Embedding Problem} (VNEP): finding mappings of
\emph{request graphs} (describing the workloads)
onto a \emph{substrate graph} (describing the 
physical infrastructure). In the offline setting, the two natural objectives
are \emph{profit maximization}, i.e., embedding a maximal number of 
request graphs subject to resource constraints,
and \emph{cost minimization}, i.e., embedding all requests 
at minimal overall cost. Hence, the VNEP can be seen as a generalization of classic routing and call admission problems, in which requests are arbitrary graphs whose communication endpoints are not fixed. Due to its applications, the problem has been studied intensively in the networking community. However, the underlying algorithmic problem is hardly understood.

This paper presents the first fixed-parameter tractable approximation
algorithms for the VNEP.
Our algorithms are based on randomized rounding.
Due to the flexible mapping options and the arbitrary request graph topologies,
we show that a novel linear program formulation is required. Only using this novel formulation the computation of convex combinations of valid mappings is enabled, as the formulation needs to account for the structure of the request graphs. Accordingly, to capture the structure of request graphs, we introduce the graph-theoretic notion of extraction orders and extraction width and show that our algorithms have exponential runtime in the request graphs' maximal width. Hence, for request graphs of fixed extraction width, we obtain the first polynomial-time approximations.

Studying the new notion of extraction orders we show that (i) computing extraction orders of minimal width is $\NPhard$ and (ii) that computing decomposable LP solutions is in general $\NPhard$, even when restricting request graphs to planar ones.
\end{abstract}

\maketitle
\allowdisplaybreaks

\section{Introduction}
\label{sec:introduction}

At the heart of the cloud computing paradigm lies the idea of 
efficient resource sharing: due to virtualization, 
multiple workloads can
co-habit and use a given resource infrastructure simultaneously.
Indeed, cloud computing introduces great flexibilities in terms of
\emph{where} workloads can be mapped and accordingly where resources are allocated. At the same time, exploiting this 
mapping flexibility poses a fundamental algorithmic challenge.

The underlying algorithmic problem is essentially a graph theoretical one:
both the workload as well as the infrastructure can be modeled as \emph{graphs}.
The former, the so-called \emph{request graph}, describes the resource requirements
both on the nodes (e.g., the virtual machines) as well as on the interconnecting network.
The latter, the so-called \emph{substrate graph}, describes the physical infrastructure
and its resources (servers and links).

The problem is known in the networking community under the name
\emph{Virtual Network Embedding Problem (VNEP)} and has been studied
intensively in recent years~\cite{vnep,vnep-survey}. The problem arises in many settings, and is also studied in the realm of embedding service chains~\cite{sirocco15,merlin} and virtual clusters~\cite{ccr15emb}.

The online variant in which a minimal cost embedding for a single request is sought after is most prominently studied in the literature. In this work, we study the offline generalization in which multiple requests are given and the objective is to either maximize the profit by selecting a maximal subset of requests to embed or to minimize the cumulative embedding costs. Thus, the offline cost minimization variant reduces to the online problem when considering only a single request.

The design of approximation algorithms for the Virtual Network Embedding Problem has been an open problem
for over a decade~\cite{vnep}.

\subsection{Incarnations of the VNEP in Practice}
\label{app:sec1}
\label{app:sec1:incarnations-of-the-vnep}

To highlight the practical relevance of the VNEP, we
present two examples in  Figure~\ref{fig:vnet-examples-sc-vc}. On the left, a \emph{service chain}  is depicted, which composes existing
network functions (such as a cache, a proxy, or a firewall)
into a more advanced network service. The virtualization of network functions
enables the faster and more flexible allocation in provider networks~\cite{sherry2012making}. 
Concretely, the depicted example is envisioned in the context of 
mobile operators~\cite{ietf-sfc-use-case-mobility-06}: load-balancers (LB$_1$,LB$_2$) are used to
route (parts) of the traffic through 
a cache to optimize the user experience, the firewall (FW) is used to provide security and the network-address translation (NAT) function is used to provide private IP addresses to the customers.
\begin{figure}[tb]
\centering
\includegraphics[width=0.7\columnwidth]{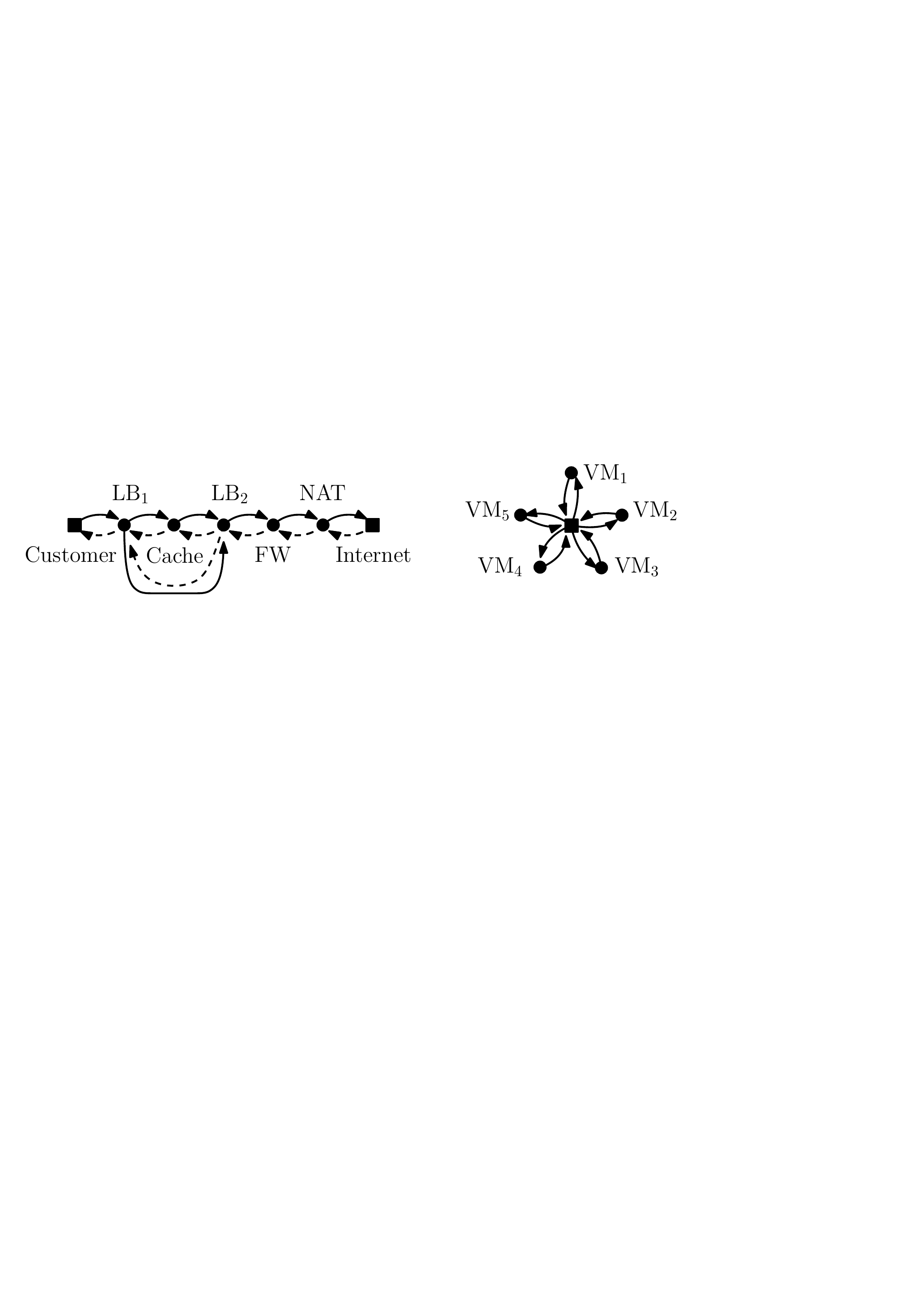}
\caption{Examples for virtual networks (i.e., request graphs). Left: A service chain envisioned in 5G networks~\cite{ietf-sfc-use-case-mobility-06}. Right: a virtual cluster abstraction envisioned in batch processing applications~\cite{oktopus}.}
\label{fig:vnet-examples-sc-vc}
\end{figure}

Depicted on the right of Figure~\ref{fig:vnet-examples-sc-vc} is a \emph{virtual cluster}, which was proposed as an abstraction for batch processing applications
in the cloud~\cite{oktopus}. Concretely, a virtual cluster consists of a set of virtual machines (VMs) and a  single logical switch which connects all virtual machines. As originally proposed, all virtual machines and all links have the same  computational and bandwidth demands, respectively.  The abstraction is attractive due to its simplicity: users only need to specify three numbers, namely the number of VMs together with their uniform demands and the bandwidth to the logical switch. 

\subsection{Problem Statement}
Given is a physical  network $\SG=(\SV,\SE)$ offering a set~$\types$ of computational types. We refer to the physical network as the \emph{substrate} network. For a type $\type \in \types$, the set $\SVTypes \subseteq \SV$ denotes the substrate nodes that can host functionality of the type $\type$. Denoting the node resources by $\SRV = \{(\type, u)~| \type \in \types, u \in \SVTypes\}$ and all substrate resources by $\SR = \SRV \cup \SE$, the capacity of nodes and edges is denoted by $\Scap(x,y) > 0$ for  $(x,y) \in \SR$.

The set of request is denoted by $\requests$. 
For each request $\req \in \requests$, a directed graph $\VG=(\VV,\VE)$ is given. We refer to the nodes of these graphs as virtual or request nodes and to the edges as virtual or request edges.
The type of a virtual node is given via the function $\Vtype : \VV \to \types$. We allow for node and edge mapping restrictions. Concretely, we assume that the mapping of a virtual node $i \in \VV$ is restricted to a set $\VVloc \subseteq \SVTypes[\Vtype(i)]$, while the mapping of a virtual edge $(i,j)$ is restricted to a set $\VEloc \subseteq \SE$. Each virtual node $i \in \VV$ and each edge $(i,j) \in \VE$ is attributed with a resource demand $\Vcap(i) \geq 0$ and $\Vcap(i,j) \geq 0$, respectively.
Virtual nodes and edges can only be mapped on substrate nodes and edges of sufficient capacity and we have $\VVloc \subseteq \{u \in \SVTypes[\Vtype(i)] | \Scap(u) \geq \Vcap(i)\}$ and $\VEloc \subseteq \{(u,v) \in \SE | \Scap(u,v) \geq \Vcap(i,j)\}$. We denote by $d_{\max}(r,x,y)$ the maximal demand that a request $\req$ may impose on a resource $(x,y) \in \SR$, i.e. $\maxDemandV  =  \max_{i \in \VV: u \in \VVloc} \Vcap(i)$ for $(\tau,u) \in \SRV$ and $\maxDemandE = \max_{(i,j) \in  \VE: (u,v) \in \VEloc} \Vcap(i,j)$ for $(u,v) \in \SE$.

The mapping of a request onto the substrate graph is captured by the following definition.

\begin{definition}[Valid Mapping]
\label{def:valid-mapping}
A valid mapping~$\map$ of request~$\req \in \requests$ is a tuple~$(\mapV, \mapE)$ of functions $\mapV : \VV \to \SV$ and $\mapE : \VE \to \mathcal{P}(\SE)$, such that the following conditions hold:
\begin{itemize}
\item Virtual nodes are mapped to allowed substrate nodes: $\mapV(i) \in  \VVloc$  holds for all $i \in \VV$.
\item The mapping $\mapE(i,j)$ of virtual edge $(i,j) \in \VE$ is a path connecting~$\mapV(i)$ to $\mapV(j)$, which only uses allowed edges, i.e., $\mapE(i,j) \subseteq \mathcal{P}(\VEloc)$ holds. 
\end{itemize} 
We denote by $\spaceSolReq$ the set of \emph{all} valid mappings of request $\req \in \requests$. 
\end{definition}

Note that the edge mapping $\mapE(i,j)$ may be empty, iff. $\mapV(i) = \mapV(j)$ holds for edges $(i,j) \in \VE$. Next, we introduce the notion of allocations induced by a valid mapping.

\begin{definition}[Allocations of a Valid Mapping]
The allocation $A(\map,x,y)$ induced by mapping $\map$ 
on resource $(x,y) \in \SR$ is defined as follows: $A(\map, \tau,u)= \sum_{i \in \VV, \type(i)=\tau, \mapV(i)=u } \Vcap(i)$ holds for $(\tau,u) \in \SRV$ and $A(\map,u,v) = \sum_{ (i,j)\in \VE, (u,v) \in \mapE(i,j)}   \Vcap(i,j)$ holds for $(u,v) \in \SE$, respectively. 
The maximal allocation that a valid mapping of request $\req$ may impose on a substrate resource $(x,y) \in \SR$ is denoted by $\maxAllocX  =  \max_{\mapping \in \spaceSolReq} A(\map,x,y)$.
\end{definition}

Given a collection of valid mappings $\{\map\}_{\req \in \requestsP}$ for a subset of requests $\requestsP \subseteq \requests$, we refer to this collection as \emph{feasible} if the cumulative allocations obey substrate capacities, i.e., $\sum_{\req \in \requestsP} A(\map,x,y) \leq \Scap(x,y)$ holds for all resources $(x,y) \in \SR$. 

\begin{definition}[Virtual Network Embedding Problem]
The profit variant of the Virtual Network Embedding Problem (VNEP) asks for finding a feasible collection $\{\map\}_{\req \in \requestsP}$ of mappings while maximizing the overall profit $\sum_{\req \in \requestsP} \Vprofit$, where $\Vprofit > 0$ denotes the benefit obtained for embedding request $\req$. 
For the cost variant all of the given requests $\requests$ must be feasibly embedded while minimizing the resource costs $\sum_{(x,y) \in \SR} \Scost(x,y) \cdot \sum_{\req \in \requests} A(\map, x,y)$, where $\Scost(x,y) \geq 0$ denotes the resource cost of $(x,y) \in \SR$.
\end{definition}

\begin{figure}[h]

 {
  \LinesNotNumbered
  \renewcommand{\arraystretch}{0}
 
 \removelatexerror

  \begin{IPFormulation}{H}
  
  \popline
 
  \SetAlgorithmName{Formulation}{}{{}}

  \newcommand{\spaceIt}{\qquad\quad\quad}
  \newcommand{\miniSpace}{\hspace{1.5pt}}

%\scalebox{0.9}{
%\begin{minipage}{1.1\textwidth}
\noindent

  \begin{tabular}{LFRLQBFRLQB}
  \multicolumn{5}{r}{\parbox{0.42\textwidth}{~}} & \multicolumn{6}{r}{\parbox{0.52\textwidth}{~}} \\[-2pt]
%  \hline

\hspace{-4pt}\tagIt{alg:exp:profit:obj} 
~\hspace{0.2cm}~ &   \multicolumn{4}{C}{\textnormal{max~}  \sum \limits_{\req \in \requests, \map \in \spaceSolReq} \hspace{-12pt} \prob \cdot \Vprofit   } &    &  \multicolumn{4}{C}{\textnormal{min~}  \sum_{(x,y) \in \SR} \hspace{-4pt}  \Scost(x,y) \sum \limits_{\req \in \requests, \map \in \spaceSolReq} \hspace{-12pt}  \prob \cdot A(\map,x,y)  } &   \tagIt{alg:exp:cost:obj}\\
   
% & ~  & ~ & & & &  & &  & &  \\
    	
\hspace{-4pt}\tagIt{alg:exp:profit:embedding-choice} & ~\hspace{1.0cm}~ \sum \limits_{\mapping \in \spaceSolReq} \prob  & \leq  & 1 ~~~~& \forall \req \in \requests &  & ~\hspace{1.1cm}~\sum \limits_{\mapping \in \spaceSolReq} \prob  & =  & 1 ~~~~& \forall \req \in \requests &  ~\hspace{0.2cm}~\tagIt{alg:exp:cost:embedding-choice} \\[20pt]

\cline{2-10} \\[4pt]
    	
 	\multicolumn{9}{C}{\sum \limits_{\req \in \requests, \mapping \in \spaceSolReq} \prob \cdot A(\mapping, x,y,)   \leq   \Scap(x,y) } & \forall (x,y) \in  \SR & \tagIt{alg:exp:capacities} 
 	
% 	\\
% 	
%	 \multicolumn{9}{C}{\prob  \in   \{0,1\}  }&  \forall \req \in \requests, \mapping \in \spaceSolReq & \tagIt{alg:exp:variables}
 	 	
\end{tabular}
%\end{minipage}
%}
  \caption{Enumerative Formulation for the VNEP (left: profit, right: cost)}
  \label{alg:mapping-formulation}
  \end{IPFormulation}
  }
\end{figure}

The VNEP can be expressed as the non-polynomial sized Formulation~\ref{alg:mapping-formulation}, which explicitly enumerates \emph{all} valid mappings: for each request $\req \in \requests$ and each mapping $\mapping \in \spaceSolReq$ a variable $\prob$ is introduced. Setting $\prob \in \{0,1\}$ yields integer programs and setting $\prob \in [0,1]$ yields the respective linear programs. We refer to the problem over the linear variables, in which convex combinations of valid mappings are allowed, as the \emph{fractional} VNEP.

The Virtual Network Embedding Problem is known to be strongly $\NPhard$~\cite{rostSchmidVNEPComplexity}.

\subsection{Putting the VNEP Into Perspective}
The VNEP can be seen as a generalization of many well-studied problems. The profit variant is e.g. related to \emph{routing requests}~\cite{awerbuch1993throughput,bartal1997line} and  \emph{virtual circuits}~\cite{aspnes1997line,plotkin1995competitive}, and \emph{the unsplittable flow problem}~\cite{bansal2009logarithmic}, while the cost variant is related to the shortest $k$-disjoint paths problem~\cite{bjorklund2014shortest,robertson1995graph}, and the subgraph isomorphism problem~\cite{eppstein1995subgraph}. 

The most notable differences to the aforementioned problems are (i) that request node locations are not fixed a priori and that (ii) a single request represents a graph instead of e.g. a single link as in the unsplittable flow problem. Accordingly, the key challenge we face when designing approximation algorithms, is that virtual nodes can in principle be mapped on \emph{any}
substrate node and each virtual edge may traverse
any substrate edge.

\subsection{Our Results and Techniques}
In this paper we set out to initiate the study of approximation 
algorithms for the VNEP for \emph{arbitrary} request graphs. 
Leverging the VNEP's connection to multi-commodity flow problems, 
we employ randomized rounding to obtain our results.
This technique has proven both simple and effective: 
given an Integer Program (IP) for a problem, 
solutions of its Linear Program (LP) are decomposed into convex combinations (cf. Formulation~\ref{alg:mapping-formulation}) and then \emph{rounded} according to their weight.

While in many contexts the natural LP, obtained by relaxing the corresponding integer program, is sufficiently 
strong to extract convex combinations of solutions, this is not the case for the VNEP. 
As the mapping of flow endpoints is flexible in the VNEP, 
we prove that the natural Multi-Commodity Flow (MCF) formulation for the 
VNEP fails to ensure the decomposability into convex combinations. 
In fact, it fails to capture the structure of valid mappings and we prove that the MCF formulation's integrality gap is unbounded. 

Analyzing the shortcomings of the MCF formulation, 
we obtain sufficient conditions to ensure decomposability.
Accordingly, we develop a novel LP formulation for the VNEP 
which incorporates the requests' individual structure. 
The dependency of our formulation on the underlying request graphs 
comes at the price that the size of the formulation grows exponentially
in the `complexity' of the request graphs. Our formulation relies on acyclic \mbox{(re-)orientations} of request graphs called \emph{extraction orders} to guide the process of extracting valid mappings. Based on confluences in these extraction orders, i.e. disjoint paths, we introduce the notion of \emph{extraction width}. In turn, we show that the size of our LP formulation, and hence also the runtime of our approximations, 
are fixed-parameter tractable (FPT) in the \emph{extraction width} of the
given \emph{extraction orders}.

Hence, finding efficient approximations boils down to finding extraction orders 
of small width. Our initial results are quite intriguing: we show that depending on the chosen extraction order the width can differ by a factor of $\Omega(|\VG|)$ (which is maximal) and that finding the minimal extraction width is itself $\NPhard$. While this may raise questions about the sensibility of our graph-theoretic notions, we also show that there cannot exist any polynomial-time algorithm (neither linear nor combinatorial) that can solve the fractional VNEP (even when restricting the requests to planar graphs).

Having set out to obtain approximations for the VNEP, we eventually derive the first (FPT-)approximations for the profit and cost variants of the VNEP for arbitrary request graphs  by using our novel LP formulation. The presented approximations provide constant approximation guarantees for the cost and the profit while exceeding resource capacities by a factor of $\mathcal{O}(1 + \varepsilon \cdot \sqrt{2\cdot  \Delta(\SR) \cdot \log |\SV|})$, where $\varepsilon \leq 1$ is the ratio of maximal demand to minimal capacity and $\Delta(\SR) = \max_{(x,y) \in \SR} \sum_{\req \in \requests} \left(\frac{\maxAllocX}{ \maxDemandX} \right)^2$ captures the (sum of squared) ratios of the maximal cumulative allocation divided by the maximal allocation.

\subsection{Related Work}
In the last decade, the VNEP has attracted much attention due to its numerous applications.
A survey from 2013 lists more than 80 different algorithms for its many 
variations~\cite{vnep-survey}. 
A large fraction of the existing literature 
considers heuristics without giving approximation 
guarantees~\cite{vnep,vnep-rethink}. 
Other works proposed exact methods as integer or constraint programming, 
coming at the cost of an exponential runtime~\cite{HVSBFTF15,rostSchmidFeldmann2014}.

In contrast, we initiate the study of (FPT-)approximation algorithms for the VNEP with provable approximation guarantees for \emph{arbitrary substrate and request graphs}.
The works closest to ours are by Even et al.~\cite{sss-guy,sirocco16path} and Bansal et al.~\cite{bansal2011minimum}. Even et al. studied approximation algorithms and competitive online algorithms for the embedding of request chains. Bansal et al. consider the setting of embedding tree request graphs under the objective to minimize the maximum load and also provide approximations and competitive online algorithms.
Their main result is a $n^{O(d)}$ time~$O(d^2 \log{(nd)})$-approximation algorithm for the embedding of a single tree of depth $d$ on a substrate with $n$ nodes, which is based on a strong LP relaxation
inspired by the Sherali-Adams hierarchy. By considering only tree requests, Bansal et al. do not address the problem of computing valid mappings for request graphs containing cycles. However, and importantly, the approach of Bansal et al. is complementary to ours and may hence potentially be combined with our results in the future to obtain stronger approximations and also derive competitive online algorithms.

\noindent\textbf{Bibliographic Note.} This work significantly extends the authors' previous technical report~\cite{rostSchmidRandomizedRoundingFirstTR2016} as well as the publication~\cite{rostSchmidIFIPLeveragingRandomizedRoundingWithPreprint}, which only consider approximation algorithms for cactus request graphs and are hence not applicable for arbitrary request graphs.

\subsection{Organization}
The remainder of this paper is organized as follows.
Section~\ref{sec:classic-mcf-and-its-limits} studies
the classic multi-commodity flow formulation 
and shows its limitations.
In Section~\ref{sec:decomp} we present our decomposable
LP formulation and introduce graph-theoretic notions as extraction confluences and extraction width. 
In Section~\ref{sec:approximation-via-randround}
we present our FPT-approximations for the VNEP.
In Section~\ref{sec:novel-formulation:size-of-formulation-and-graph-classifications} we shortly study properties of the novel extraction width concept and show that cactus request graphs have a constant extraction width.
We conclude our paper in Section~\ref{sec:conclusion}.

\section{Limitations of Classic Multi-Commodity Formulations for VNEP}
\label{sec:classic-mcf-and-its-limits}

In this section, we study the Multi-Commodity Flow (MCF) formulation for solving the VNEP (see Formulation~\ref{alg:VNEP-IP-old}), which is widely used~\cite{vnep,mehraghdam2014specifying,rostSchmidVNEPComplexity,rostSchmidFeldmann2014}. We first show  the positive result that the formulation is sufficiently strong to compute solutions to the fractional VNEP when requests are \emph{trees}. Subsequently, we show that the formulation fails to allow for the decomposition of \emph{cyclic} request graphs into convex combinations of valid mappings.
% of $\Omega(|\SV|)$ at best.
%\stefan{matthias: at best with omega notation is confusing here. can we either explain or simply say it is arbitrary?}

\begin{figure}[t!]

 {
  \LinesNotNumbered
  \renewcommand{\arraystretch}{0.0}
 
 \removelatexerror

  \begin{IPFormulation}{H}
  
  \popline
 
  \SetAlgorithmName{Formulation}{}{{}}

  \newcommand{\spaceIt}{\qquad\quad\quad}
  \newcommand{\miniSpace}{\hspace{1.5pt}}
 
  \begin{tabular}{FRLQB}
  \multicolumn{5}{r}{\parbox{0.975\textwidth}{~}} \\[-2pt]

   \sum \limits_{u \in \VVloc} y^u_{\req, i} & = & x_{\req} & \forall \req \in \requests, i \in \VV   &  \tagIt{alg:VNEP-old:node-embedding} \\
 y^u_{\req, i} & = & 0 & \forall \req \in \requests, i \in \VV, u \in \SV \setminus \VVloc &  \tagIt{alg:VNEP-old:node-embedding-non-mappable} \\[8pt]
   \sum \limits_{(u,v) \in  \delta^+(u)} z^{u,v}_{\req,i,j} - \sum \limits_{(v,u) \in  \delta^-(u)}  z^{v,u}_{\req,i,j} & = & y^u_{\req, i} - y^u_{\req,j} ~~ \quad \quad\quad &  \forall \req \in \requests, (i,j) \in  \VE, u \in \SV &  \tagIt{alg:VNEP-old:edge-embedding}\\
   
      z^{u,v}_{\req,i,j} & = & 0 &  \forall \req \in \requests, (i,j) \in  \VE, (u,v) \in \SE \setminus \VEloc &  \tagIt{alg:VNEP-old:edge-embedding-non-mappable}\\[8pt]
      
      	\sum \limits_{(i,j) \in  \VE } \Vcap(i,j) \cdot z^{u,v}_{\req,i,j} & = &  a^{u,v}_{\req} & \forall \req \in \requests, (u,v) \in  \SE & \tagIt{alg:VNEP-old:load-edge} \\

 	\sum \limits_{i \in \VV, \Vtype(i) = \type}  \Vcap(i) \cdot y^u_{\req,i}  & =  & a^{\type,u}_{\req} & \forall \req \in \requests, (\type,u) \in  \SRV &  \tagIt{alg:VNEP-old:load-node}\\
 	
 	\sum \limits_{\req \in \requests} a^{x,y}_{\req}  & \leq  & \Scap(x,y) & \forall (x,y) \in  \SR &  \tagIt{alg:VNEP-old:capacities} 
% 	\\[12pt]
% 	
% 	\multicolumn{4}{C}{
% 	
% 	\begin{array}{rcllcrrcll}
%	 x_{\req} & \hspace{-8pt} \in  &  \hspace{-8pt} \{0,1\} & \forall \req \in \requests & & a^{x,y}_{\req} & \hspace{-8pt}  \geq   & \hspace{-8pt} 0 & \forall \req \in \requests, (x,y) \in \SR \hfill \\
%	 y^u_{\req,i} & \hspace{-8pt}  \in   & \hspace{-8pt} \{0,1\} & \forall \req \in \requests, i \in \VV, u \in \VVloc &  & z^{u,v}_{\req,i,j} & \hspace{-8pt} \in & \hspace{-8pt}  \{0,1\}  &\forall \req \in \requests, (i,j) \in \VE, (u,v) \in \SE\\
% 	\end{array}
% 	} & \tagIt{foo}\\[4pt]
  \end{tabular}
%\end{minipage}
%}
  \caption{Multi-Commodity Flow Base Formulation for the VNEP}
  \label{alg:VNEP-IP-old}
  \end{IPFormulation}
  }
\end{figure}

\subsection{The Multi-Commodity Formulation}
We explain the formulation by considering its integer variant. The variable $x_{\req} \in \{0,1\}$ indicates whether request $\req \in \requests$ is embedded or not. The variable $y^u_{\req,i} \in \{0,1\}$ indicates whether virtual node $i \in \VV$ is mapped on substrate node $u$. Similarly, the flow variable $z^{u,v}_{\req,i,j} \in \{0,1\}$ indicates whether the substrate edge $(u,v) \in \SE$ is part of the path of the virtual edge $(i,j) \in \VE$. The variable $a^{x,y}_{\req} \geq 0$ denotes the cumulative allocations that the embedding of request $\req$ induces on resource $(x,y) \in \SR$.

By Constraints~\ref{alg:VNEP-old:node-embedding} and Constraint~\ref{alg:VNEP-old:node-embedding-non-mappable}, virtual nodes are only mapped on suitable substrate nodes when $x_\req = 1$ holds. Constraint~\ref{alg:VNEP-old:edge-embedding} induces an unsplittable unit flow for each virtual edge $(i,j) \in  \VE$ from the substrate location onto which $i$ was mapped to the substrate location onto which $j$ was mapped. By Constraint~\ref{alg:VNEP-old:edge-embedding-non-mappable} the mapping of virtual edges may only consist of \emph{allowed} substrate edges.
Constraints~\ref{alg:VNEP-old:load-edge} and~\ref{alg:VNEP-old:load-node} compute the cumulative allocations while Constraint~\ref{alg:VNEP-old:capacities} enforces that resource capacities are respected. 
Applying the objective $\max \sum_{\req \in \requests} \Vprofit \cdot x_{\req}$ the profit variant is obtained. Setting $\min \sum_{(x,y) \in \SR, \req \in \requests} \Scost(x,y) \cdot a^{x,y}_{\req}$ and enforcing $x_{\req} = 1$ for all requests $\req \in \requests$ the cost variant is obtained.

The LP formulation is obtained by relaxing the domain of the above introduced binary variables to $[0,1]$. The following lemma states that whenever a virtual node $i \in \VV$ is (fractionally) mapped on a certain substrate node, suitable mappings for all incident edges and their endpoints can be found. 

\begin{restatable}[Local Connectivity Property of the MCF Formulation]{lemm}{localConnectivity}~\\
\label{lem:local-connectivity-property}
Consider a fractional solution $(x_{\req},\vec{y}_{\req},\vec{z}_{\req},\vec{a}_{\req})$ to the LP Formulation~\ref{alg:VNEP-IP-old} for request $\req \in \requests$.
If $y^u_{\req,i} > 0$ holds for  $i \in \VV$ and $u \in  \VVloc$, then for incoming edges $(k,i) \in \VE$ and outgoing edges $(i,j) \in \VE$ there exist substrate paths $P^{v,u}_{\req,k,i}$ and $P^{u,w}_{\req,i,j}$, such that:
\begin{enumerate}
\item $P^{v,u}_{\req,k,i}$ is a path from $v$ to $u$, such that $y^v_{\req,k} > 0$ and $z^{e}_{\req,k,i} > 0$ holds for $e \in P^{v,u}_{\req,k,i}$.
\item $P^{u,w}_{\req,i,j}$ is a path from $u$ to $w$, such that $y^w_{\req,j} > 0$ and $z^{e}_{\req,i,j} > 0$ holds for $e \in P^{u,w}_{\req,i,j}$.
\end{enumerate}
The respective paths $P^{v,u}_{\req,k,i}$ and $P^{u,w}_{\req,i,j}$ can be found in time $\mathcal{O}(|\SE|)$ by a simple graph search.
\end{restatable}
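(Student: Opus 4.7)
The plan is to prove existence of $P^{u,w}_{\req,i,j}$ via a reachability argument on the per-commodity flow for $(i,j)$, and to obtain $P^{v,u}_{\req,k,i}$ by the symmetric argument with the direction of reachability reversed. The key observation is that Constraint~\ref{alg:VNEP-old:edge-embedding} is precisely a flow-conservation law for commodity $(i,j)$ in which $y^{\cdot}_{\req,i}$ acts as a node-wise supply and $y^{\cdot}_{\req,j}$ as a node-wise demand; the sought path is essentially one summand of a flow decomposition, but I will avoid an explicit decomposition in favor of a direct cut argument.

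For the outgoing edge $(i,j) \in \VE$, let $H_{ij}$ be the substrate subgraph consisting of those edges $(p,q) \in \SE$ with $z^{p,q}_{\req,i,j} > 0$, and let $S \subseteq \SV$ be the set of nodes reachable from $u$ in $H_{ij}$; a BFS in $H_{ij}$ computes $S$ together with a witnessing $u$-rooted tree in $\mathcal{O}(|\SE|)$ time. If $y^u_{\req,j} > 0$, the trivial empty path with $w = u$ already satisfies the claim. Otherwise I would assume towards contradiction that $y^w_{\req,j} = 0$ for every $w \in S$, sum Constraint~\ref{alg:VNEP-old:edge-embedding} over all $s \in S$, and use that edges internal to $S$ cancel to obtain
\[
\sum_{\substack{(s,q) \in \SE \\ s \in S,\, q \notin S}} z^{s,q}_{\req,i,j} \;-\; \sum_{\substack{(q,s) \in \SE \\ q \notin S,\, s \in S}} z^{q,s}_{\req,i,j} \;=\; \sum_{s \in S} y^s_{\req,i} \;-\; \sum_{s \in S} y^s_{\req,j}.
\]
By maximality of $S$, every edge leaving $S$ has $z^{s,q}_{\req,i,j} = 0$, so the left-hand side is $\leq 0$; the right-hand side, under the contradiction hypothesis, equals $\sum_{s \in S} y^s_{\req,i} \geq y^u_{\req,i} > 0$, a contradiction. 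Hence some $w \in S$ has $y^w_{\req,j} > 0$, and the BFS tree yields a positive-flow $u \to w$ path along edges of $H_{ij}$.

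For the incoming edge $(k,i) \in \VE$ I would run the same argument with \emph{reverse} reachability: let $S'$ be the set of nodes from which $u$ can be reached in the positive-flow subgraph of commodity $(k,i)$. Now the maximality property flips sign: every edge $(q,s)$ with $q \notin S'$ and $s \in S'$ must carry zero flow (else $q$ would reach $u$ via $s$ and belong to $S'$), so the left-hand side of the analogous cut identity is $\geq 0$, whereas under the contradiction hypothesis $y^v_{\req,k} = 0$ for all $v \in S'$ the right-hand side equals $-\sum_{s \in S'} y^s_{\req,i} \leq -y^u_{\req,i} < 0$, giving the required $v \in S'$ with $y^v_{\req,k} > 0$; a reverse BFS from $u$ in the positive-flow subgraph recovers the path in $\mathcal{O}(|\SE|)$ time.

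The only item requiring care is aligning the direction of reachability with the correct sign of the cut in each case and handling the degenerate $u = w$ (resp.\ $u = v$) situation up front via the trivial empty path; neither step is technically deep. All other ingredients --- flow conservation, edge-wise non-negativity of $z$, and the definitions of $\VVloc$ and $\VEloc$ that make Constraint~\ref{alg:VNEP-old:edge-embedding-non-mappable} compatible with restricting to allowed edges --- are already present in Formulation~\ref{alg:VNEP-IP-old}, so no further structural facts about the LP are needed beyond the single flow-conservation equality.
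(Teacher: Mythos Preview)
Your proposal is correct and follows essentially the same approach as the paper. Both proofs exploit that Constraint~\ref{alg:VNEP-old:edge-embedding} is a flow-conservation law with $y_{\req,i}$ as supply and $y_{\req,j}$ as demand, handle the degenerate $y^u_{\req,j}>0$ case via the empty path, and conclude that a positive-flow path from $u$ must reach some $w$ with $y^w_{\req,j}>0$; the only difference is that you spell out the reachable-set/cut argument explicitly, whereas the paper argues more informally that ``the flow emitted at node $u$ must eventually reach'' such a node.
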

\begin{proof}
Fix any substrate node $u \in \SV$ for which $y^u_{\req,i} > 0$ holds. We first consider the outgoing edges $(i,j) \in \VE$.  By Constraint~\ref{alg:VNEP-old:node-embedding}, $\sum_{u \in \VVloc} y^u_{\req,i} = \sum_{v \in \VVloc[j]} y^v_{\req,j}$ holds. Hence, the virtual node $j \in \VV$ must be mapped also at least with value $y^u_{\req,i}$. If $j$ is also partially mapped on $u$, i.e., if $y^u_{\req,j} > 0$ holds, then the result follows directly, as $u$ connects to $u$ using (and allowing) the empty path $P^{u,u}_{\req,i,j} = \langle \rangle$.
If, on the other hand, $y^u_{\req,j} = 0$ holds, then Constraint~\ref{alg:VNEP-old:edge-embedding} induces an flow of value $y^u_{\req,i}$ at substrate node $u$ with respect to the commodity $z_{\req,i,j}$. As the right hand side of Constraint~\ref{alg:VNEP-old:edge-embedding} may only attain negative  values at nodes $w \in \VVloc[j]$ for which $y^w_{\req,j} > 0$ holds, the flow (of commodity $z_{\req,i,j}$) emitted at node $u$ must eventually reach a node $w \in \VVloc[j]$ with $y^w_{\req,j} > 0$ and hence the result follows for any outgoing edge $(i,j) \in \VE$.
Note that the corresponding path $P^{u,w}_{\req,i,j}$ can be constructed in time $\mathcal{O}(|\SE|)$ by a simple breadth-first search, which only considers edges $(u',v') \in \SE$ for which $z^{u',v'}_{\req,i,j} > 0$ holds.

The argument for incoming edges $(k,i) \in \VE$ is the same and the respective paths $P^{v,u}_{\req,k,i}$ can be recovered by breadth-first searches traversing substrate edges $(u,v) \in \SE$ in their opposite direction when $z^{u,v}_{\req,k,i} > 0$ holds.
 \end{proof}

\subsection{Decomposing Solutions to the MCF Formulation}
\label{sec:decomposing-mcf-solution-trees}
Given the connectivity property of Lemma~\ref{lem:local-connectivity-property}, we argue how solutions to the LP relaxation of the MCF formulation can be decomposed into convex combinations $\PotEmbeddings = \{(\prob,\mapping)\}_k$ (cf.~LP Formulation~\ref{alg:mapping-formulation}) \emph{as long as the request graphs are trees}. The ideas presented henceforth will also apply for the decomposition of our novel formulation presented in Section~\ref{sec:decomp}.

\DeclareDocumentCommand{\origEdge}{O{e}}{\ensuremath{\overrightarrow{E}_{\hspace{-1pt}\req}\hspace{-1pt} (#1)}}
\DeclareDocumentCommand{\origEdgeFun}{}{\ensuremath{ \overrightarrow{E}_{\req} }}

\DeclareDocumentCommand{\extractionEdge}{O{e}}{\ensuremath{\overrightarrow{E}^{\hspace{-1pt}\extractionOrderCharacter}_{\hspace{-1pt}\req}\hspace{-2pt}(#1)}}
\DeclareDocumentCommand{\extractionEdgeFun}{}{\ensuremath{ \overrightarrow{E}^{\hspace{-1pt}\extractionOrderCharacter}_{\hspace{-1pt}\req} }}

We naturally apply the idea of Ford and Fulkerson~\cite{ford1962flows} for decomposing $s-t$ flows into paths to our setting. Given a LP solution $(x_{\req},\vec{y}_{\req}, \vec{z}_{\req},\vec{a}_{\req})$ for request $\req \in \requests$, we need to find a valid mapping $\map=(\mapV,\mapE) \in \spaceSolReq$ which is \emph{covered} by the embedding variables. Concretely, letting $\mathcal{V}(\map) = \{y^{\mapV(i)}_{\req,i}| i\in \VV \} \cup \{z^{u,v}_{\req,i,j} | (i,j) \in \VE, (u,v) \in \mapE(i,j) \}$ denote all the LP variables involved under mapping $\map$, we say that the mapping $\map$ is covered by the LP solution iff. $f_r = \min \mathcal{V} = > 0$ holds. Accordingly, the mapping $\map$ of weight $f_{\req}$ can be \emph{extracted} by reducing the variables in $\mathcal{V}$ by $f_{\req}$ while adding $(f_{\req}, \map)$ to the set of convex combinations $\PotEmbeddings$. Importantly, after the extraction, the now adapted LP solution is still feasible and hence the extraction process can be repeated. To find a mapping in the first place, the mapping of nodes and edges has to be done in some order. We refer to this order as the extraction order:

\begin{definition}[Extraction Order $\VGbfs$]
\label{def:orientation-graph}
Given a virtual network $\VG=(\VV,\VE)$, we refer to any rooted graph $\VGbfs = (\VV,\VEbfs,\VVroot)$ as an \emph{extraction order}, if the following holds:
\begin{enumerate}
\item $\VGbfs$ is a directed acyclic graph, s.t. each node is reachable from the root $\VVroot \in \VV$, and
\item $\VEbfs$ is obtained from $\VE$ by (potentially) reversing the orientation of some edges.
\end{enumerate}
We denote by $\origEdgeFun: \VEbfs \to \VE$ the function yielding the edge's original orientation and by $\extractionEdgeFun: \VE \to \VEbfs$ its inverse. We write  $\deltaPlusA = \{(i,j) \in \VEbfs \}$ and $\deltaMinusA = \{(j,i) \in \VEbfs \}$ to denote the outgoing and incoming edges with respect to the edge set $\VEbfs$.
\end{definition}

Given the extraction order $\VGbfs$, the extraction process works by first choosing a suitable mapping location for the root $\VVroot$. Given this location, Lemma~\ref{lem:local-connectivity-property} is applied to obtain mappings for all outgoing edges of $\VVroot$ (according to $\VEbfs$) \emph{together} with their heads. Continuing to apply Lemma~\ref{lem:local-connectivity-property} for each of the newly mapped nodes, a complete mapping in which all virtual nodes and edges are mapped on suitable substrate nodes and edges is constructed.

\begin{figure}[t!]
 
 \scalebox{0.92}{
 \begin{minipage}{1.07\columnwidth}
 
 %\begingroup
 \removelatexerror
 
 \begin{algorithm*}[H]

 \SetKwInOut{Input}{Input}\SetKwInOut{Output}{Output}
 \SetKwFunction{ProcessPath}{ProcessPath}{}{}
 \SetKwFunction{reverse}{reverse}{}{}
 \SetKwFunction{LP}{LP}
 \SetKwFunction{LP}{LP}
 
 \newcommand{\SET}{\textbf{set~}}
 \newcommand{\ADD}{\textbf{add~}}
 \newcommand{\DEFINE}{\textbf{define~}}
 \newcommand{\AND}{\textbf{and~}}
 \newcommand{\LET}{\textbf{let~}}
 \newcommand{\WITH}{\textbf{with~}}
 \newcommand{\COMPUTE}{\textbf{compute~}}
 \newcommand{\FIND}{\textbf{find~}}
 \newcommand{\CHOOSE}{\textbf{choose~}}
 \newcommand{\DECOMPOSE}{\textbf{decompose~}}
 \newcommand{\FORALL}{\textbf{for all~}}
 \newcommand{\OBTAIN}{\textbf{obtain~}}
 \newcommand{\WITHPROBABILITY}{\textbf{with probability~}}

 \Input{Tree request $\req \in \requests$ together with a solution~$(x_{\req},\vec{y}_{\req},\vec{z}_{\req},\vec{a}_{\req})$ for Formulation~\ref{alg:VNEP-IP-old}\\
 Extraction order $\VGbfs = (\VV,\VEbfs, \VVroot)$ }
 \Output{Convex combination~$\PotEmbeddings = \{\decomp = (\prob,\mapping)\}_k$ of valid mappings}
 \BlankLine
 
 %	\COMPUTE {arborescence }$\VGbfs = (\VV,\VEbfs, \VVroot)$ of the undirected representation of $\VG$\\
 
 	\SET $\PotEmbeddings  \gets \emptyset$ \AND $k \gets 1$\\
 	\While{$x_{\req} > 0$ }
 	{
 		
 		\SET $\mapping \gets (\mapV,\mapE)~\gets (\emptyset,\emptyset)$ \label{alg:sc-decomposition:init-maps}\\
 		
 		\SET $\Queue \gets \{\VVroot \}$\\
 		
 		\CHOOSE $u \in \VVloc[\VVroot]$ \WITH $y^u_{\req,\VVroot} > 0$ \AND \SET $\mapV(\VVroot)~\gets u$\\
 		
 		\While{$|\Queue| > 0$}{	\label{alg:decomposition:begin-while-q}
 			\CHOOSE $i \in \Queue$ \AND \SET$\Queue \gets \Queue \setminus \{i\}$\\
 			\ForEach{$(i,j) \in \delta^+_{\VEbfs}(i)$}{
 				\eIf{$(i,j) = \origEdge[i,j]$}{
 %					\COMPUTE $\overrightarrow{P}^{u,v}_{\req,i,j}$, s.t. $\mapV(i)=u$, $v \in \VVloc[j]$ and $z^{u',v'}_{\req,i,j} > 0$ for $(u',v') \in \overrightarrow{P}^{u,v}_{\req,i,j}$\\
 					\COMPUTE path ${P}^{u,v}_{\req,i,j}$ from $\mapV(i)=u$ to  $v \in \VVloc[j]$ according to Lemma~\ref{lem:local-connectivity-property}\\
 				\pushline\pushline\pushline \nonl such that $y^v_{\req,j} > 0$ and 
 				  $z^{u',v'}_{\req,i,j} > 0$ hold for $(u',v') \in {P}^{u,v}_{\req,i,j}$\\
 				  					\vspace{2pt}
 				  \popline\popline\popline
 				  
 					\SET $\mapV(j) \gets v$ \AND $\mapE(i,j) \gets {P}^{u,v}_{\req,i,j}$\\
 				}{
 %					\LET $z^{v',u'}_{\req,i,j} \triangleq z^{u',v'}_{\req,j,i}$ \FORALL $(u',v') \in \SE$\\
 %					\COMPUTE $\overleftarrow{P}^{v,u}_{\req,i,j}$, s.t. $\mapV(i)=v$, $u \in \VVloc[j]$ and $z^{v',u'}_{\req,i,j} > 0$ for $(v',u') \in \overleftarrow{P}^{v,u}_{\req,i,j}$\\
 					\COMPUTE path ${P}^{v,u}_{\req,j,i}$ from $v \in \VVloc[j]$ to $\mapV(i)=u$ according to Lemma~\ref{lem:local-connectivity-property}\\	\pushline\pushline\pushline \nonl such that $y^v_{\req,j} > 0$ and 
 									  $z^{u',v'}_{\req,j,i} > 0$ hold for $(u',v') \in {P}^{v,u}_{\req,j,i}$\\
 					\vspace{2pt}
 					\popline\popline\popline
 					\SET $\mapV(j) \gets v$ \AND  $\mapE(\origEdge[i,j]) \gets {P}^{u,v}_{\req,j,i} $\\

 				}
 				\SET $\mathcal{Q} \gets \mathcal{Q} \cup \{j\}$\\
 			}
 		}
 		\SET $\mathcal{V}_k \gets \{x_\req\} \cup \{y^{\mapV(i)}_{\req,i} | i \in \VV\} \cup \{z^{u,v}_{\req,i,j} | (i,j) \in \VE, (u,v) \in \mapE(i,j)\}$ \label{alg:decomposition:compute-Vk}\\
 		\SET $\prob \gets \min \mathcal{V}_k$ \label{alg:decomposition:computing-prob} \\
 		\SET $v \gets v - \prob$ \FORALL $v \in \mathcal{V}_k$ \AND \SET $a^{x,y}_{\req} \gets a^{x,y}_{\req} - \prob \cdot A(\mapping,x,y)$ \FORALL $(x,y) \in \SR$ \label{alg:decomposition:adapt-variables}\\	
 		\ADD $\decomp = (f^k_{\req},m^k_{\req})$ to $\PotEmbeddings$ \AND \SET $k \gets k + 1$\\
 	}

 \KwRet{$\PotEmbeddings$}
 \caption{Decomposition algorithm of MCF solutions for Tree Requests}
 \label{alg:decompositionAlgorithm-MCF-Tree}
 \end{algorithm*}
% \endgroup
 \end{minipage}}
 
 \end{figure}

Algorithm~\ref{alg:decompositionAlgorithm-MCF-Tree} formalizes the decomposition scheme to extract convex combinations of valid mappings from solutions to Formulation~\ref{alg:VNEP-IP-old}.
 The algorithm extracts mappings $\mapping$ of value $\prob$ iteratively, as long as $x_{\req} > 0$ holds. Initially, in the $k$-th iteration, none of the virtual nodes and edges are mapped. As $x_\req > 0$ holds, the root node $\VVroot$ must be mapped accordingly by Constraint~\ref{alg:VNEP-old:node-embedding}, i.e. there must exist a node $u \in \VVloc[\VVroot]$ with $y^{u}_{\req,\VVroot} > 0$ and the algorithm sets $\mapV(\VVroot) = u$.  Given this initial fixing, the algorithm iteratively extracts nodes from the queue $\mathcal{Q}$ which have already been mapped and considers all outgoing  virtual edges $(i,j) \in \VEbfs$. If the orientation of edge $(i,j)$ was not changed, i.e., if $(i,j)=\origEdge[i,j]$ holds, then Lemma~\ref{lem:local-connectivity-property} is applied to obtain a mapping of the edge $(i,j)$ together with its head $j$.
 If the edge's orientation was reversed, i.e.~iff. $(i,j) \neq \origEdge[i,j]$ holds, Lemma~\ref{lem:local-connectivity-property} can be applied again, only now a path from the mapping of \emph{the head} $i$ (according to the edge's original orientation) to some mapping of the tail $j$ is obtained.
 Lastly, the minimum mapping value $\prob$ is computed and the variables of the LP (including the allocation variables) are decreased accordingly. The formal correctness of the algorithm is proven in Lemma~\ref{lem:decomposability-mcf-trees}.
 
 \begin{restatable}{lemm}{decomposabilityOfMcfTrees}
 \label{lem:decomposability-mcf-trees}
 Given a virtual network request $\req \in \requests$, whose 
 underlying undirected graph is a tree, and a solution $(x_\req, \vec{y}_{\req}, \vec{z}_{\req}, \vec{a}_{\req})$ to the LP Formulation~\ref{alg:VNEP-IP-old}, the solution can be decomposed into convex combinations of valid mappings $\PotEmbeddings = \{(\prob,\mapping)\}_k$, 
 such that the following holds:
 \begin{itemize}
 \item The decomposition is complete, i.e., $x_{\req} = \sum_k \prob$ holds.
 \item The decomposition's resource allocations are bounded by $\vec{a}_{\req}$, i.e., 
 $a^{x,y}_{\req} \geq \sum_k \prob \cdot A(\mapping,x,y)$ holds for each resource $(x,y) \in \SR$.
 \end{itemize}
 \end{restatable}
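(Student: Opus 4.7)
The plan is to verify that Algorithm~\ref{alg:decompositionAlgorithm-MCF-Tree} is well-defined, that each iteration extracts a \emph{valid} mapping, that the adjusted LP solution remains feasible for Formulation~\ref{alg:VNEP-IP-old}, and that the loop terminates; the two bulleted claims then fall out as book-keeping identities. The tree hypothesis enters crucially in two places: (i) the extraction order $\VGbfs$ can itself be taken as a rooted directed tree, so every non-root node $j \in \VV$ has exactly one in-neighbor under $\VEbfs$, and hence $\mapV(j)$ (and the corresponding edge-mapping $\mapE(\origEdge[i,j])$) is written exactly once during the inner while-loop; no write-conflict can occur. (ii) Because $\VGbfs$ is acyclic with every node reachable from $\VVroot$, the queue-based traversal visits each virtual node exactly once, so $\mapV$ is eventually defined on all of $\VV$ and $\mapE$ on all of $\VE$.

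To show validity of each extracted $\mapping$ in the sense of Definition~\ref{def:valid-mapping}, I would argue inductively along the traversal order. The initial assignment $\mapV(\VVroot) \in \VVloc[\VVroot]$ with $y^{\mapV(\VVroot)}_{\req,\VVroot} > 0$ exists whenever $x_\req > 0$ by Constraint~\ref{alg:VNEP-old:node-embedding}. Assuming $y^{\mapV(i)}_{\req,i} > 0$ when $i$ is extracted from $\Queue$, Lemma~\ref{lem:local-connectivity-property} supplies — both for edges whose orientation has been preserved in $\VEbfs$ and for those whose orientation has been reversed — a substrate path along which every relevant $z$-variable is positive, ending at some $v \in \VVloc[j]$ with $y^v_{\req,j} > 0$. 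Constraint~\ref{alg:VNEP-old:edge-embedding-non-mappable} guarantees that this path uses only allowed edges, so both conditions of Definition~\ref{def:valid-mapping} are met.

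The central invariant is that after the updates of line~\ref{alg:decomposition:adapt-variables} the tuple $(x_\req,\vec{y}_\req,\vec{z}_\req,\vec{a}_\req)$ is again feasible for Formulation~\ref{alg:VNEP-IP-old}. I would check this constraint by constraint: because $\mapping$ is a valid mapping in which each virtual node and each virtual edge are mapped exactly once, subtracting $\prob$ from $x_\req$, from every $y^{\mapV(i)}_{\req,i}$, and from every $z^{u,v}_{\req,i,j}$ with $(u,v) \in \mapE(i,j)$ preserves the linear equalities of Constraints~\ref{alg:VNEP-old:node-embedding} and~\ref{alg:VNEP-old:edge-embedding}; non-negativity of all decremented variables holds because $\prob = \min \mathcal{V}_k$; and the explicit update $a^{x,y}_\req \gets a^{x,y}_\req - \prob \cdot A(\mapping,x,y)$ preserves Constraints~\ref{alg:VNEP-old:load-edge}–\ref{alg:VNEP-old:load-node}, while Constraint~\ref{alg:VNEP-old:capacities} stays satisfied because the $a$-variables only decrease. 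Since each iteration drives at least one variable of $\mathcal{V}_k$ to zero, the total number of iterations is bounded by the number of variables in Formulation~\ref{alg:VNEP-IP-old}, so the algorithm terminates.

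Completeness and the allocation bound then follow by telescoping: $x_\req$ is decremented by exactly $\prob$ per iteration and the loop ends when $x_\req = 0$, giving $x_\req^{\text{initial}} = \sum_k \prob$; likewise the total amount subtracted from $a^{x,y}_\req$ is $\sum_k \prob \cdot A(\mapping,x,y)$, and non-negativity of the final $a^{x,y}_\req$ yields the desired inequality. I expect the only real subtlety to be the handling of reversed edges in $\VEbfs$: one must verify that applying Lemma~\ref{lem:local-connectivity-property} in the ``backwards'' direction still produces a path which, after being recorded as $\mapE(\origEdge[i,j])$, is consistent with the already-fixed endpoint $\mapV(i)$. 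This is exactly where the tree assumption is indispensable — for a general graph, a second path arriving at an already-mapped node could contradict its previously fixed image, whereas in a tree each node is first mapped when discovered and never revisited, so no such consistency requirement can fail.
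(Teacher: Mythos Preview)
Your proposal is correct and follows essentially the same approach as the paper's proof: verify that Algorithm~\ref{alg:decompositionAlgorithm-MCF-Tree} extracts valid mappings (using Lemma~\ref{lem:local-connectivity-property} and the fact that $\VGbfs$ is an arborescence so nodes are mapped exactly once), that the residual LP solution stays feasible after each extraction, and that termination and the two bulleted claims follow from the at-least-one-variable-hits-zero argument. Your write-up is in fact more explicit than the paper's own proof, which sketches the same points tersely; in particular, your constraint-by-constraint check of the invariant and the telescoping argument for the allocation bound spell out details the paper leaves to the reader.
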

 \begin{proof}
 Note that the mapping of each virtual node and each virtual edge is valid by 
 construction: Constraints~(\ref{alg:VNEP-old:node-embedding-non-mappable}) 
 and (\ref{alg:VNEP-old:edge-embedding-non-mappable}) enforce that a node and 
 an edge can only be mapped in a valid fashion. 
 Furthermore, as $\VGbfs$ is an arborescence, node mappings are never revoked and 
 each node of $\VG$ will eventually be mapped.
 The mapping value $\prob$ is computed as the minimum of the mapping variables $\mathcal{V}_k$ 
 used for constructing $\mapping$. Reducing the values of the mapping variables together with the load variables $\vec{a}_{\req}$, the Constraints~\ref{alg:VNEP-old:node-embedding}-\ref{alg:VNEP-old:load-edge} continue to hold.
 
 As the decomposition process continues as long as $x_{\req} > 0$ holds and in the $k$-th 
 step at least one variable's value is set to $0$, it is easy to check that (i) the algorithm terminates 
 with a complete decomposition for which $\sum_k \prob = x_{\req}$ holds and (ii) the algorithm has 
 polynomial runtime, as the number of variables for request $\req$ is bounded by $\mathcal{O}(|\VG| \cdot |\SE|)$. 
 \end{proof}

\subsection{Limitations of the MCF Formulation}
Having shown the decomposability of LP solutions for tree requests, we now show that this does not hold, if the request graphs contain \emph{cycles}. Figure~\ref{fig:non-decomp} gives an example for an LP solution of Formulation~\ref{alg:VNEP-IP-old} from which no valid mapping (that is covered) can be extracted. Concretely, considering the mapping of $i$ on $u_1$ and following the depicted extraction order, $k$ and $j$ must be mapped on $u_6$ and $u_2$, respectively. However, the mapping of $j$ on $u_2$ only allows for the mapping of $k$ on $u_3$ and no valid mapping can be extracted and we obtain:

\begin{figure}[b!]
\centering
\includegraphics[height=0.11\textheight]{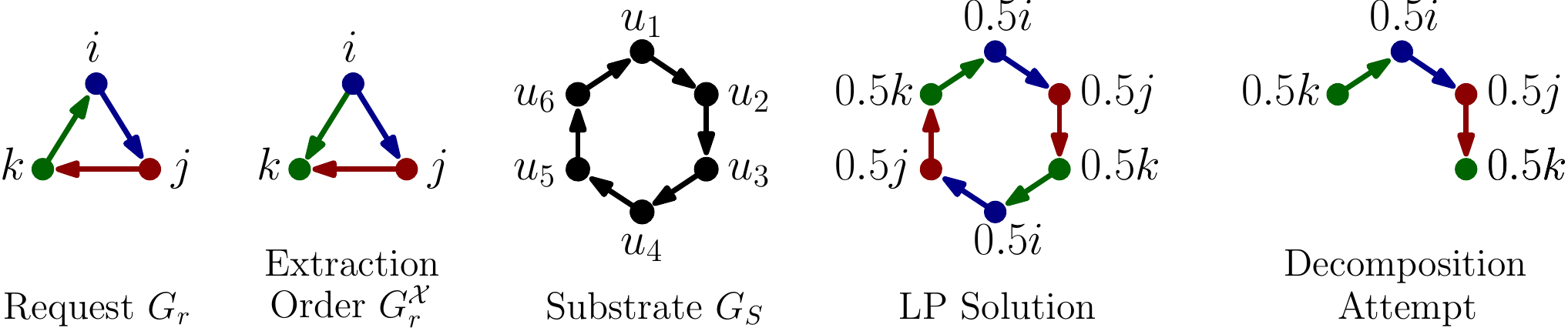}
\caption{Example showing that solutions to the LP Formulation~\ref{alg:VNEP-IP-old} cannot be decomposed into convex combinations of valid mappings. 
The LP solution with $x_\req=1$ is depicted as follows. 
Substrate nodes are annotated with virtual node mappings: $0.5i$ at node $u_1$ 
indicates $y^{u_1}_{r,i} = 1/2$. 
Substrate edge colors match the color of the virtual edges mapped to it. All virtual edges are also mapped using flow values $1/2$. The color of substrate edge $(u_1,u_2)$ therefore implies that $z^{u_1,u_2}_{r,i,j} = 1/2$ holds. }
\label{fig:non-decomp}
\end{figure}

\begin{restatable}{theorem}{thmNonDecomposabilityAndItsImplications}
Solutions to the LP Formulation~\ref{alg:VNEP-IP-old} can (in general) not be decomposed 
into convex combinations of valid mappings, if request graphs contain cycles. Accordingly, the integrality gap of the LP Formulation~\ref{alg:VNEP-IP-old} is unbounded for cyclic request graphs.
\label{thm:non-decomposability}
\end{restatable}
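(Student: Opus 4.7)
The plan is to establish both claims at once using the concrete instance depicted in Figure~\ref{fig:non-decomp}, where $\VG$ is a small cycle over the virtual nodes $\{i,j,k\}$.

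First, for non-decomposability, I would explicitly transcribe the LP values suggested by the figure: $x_{\req}=1$, together with $y$-variables of value $1/2$ on the labelled substrate nodes and $z$-variables of value $1/2$ on the color-matched substrate edges. The first step is a routine verification that this assignment satisfies Constraints~(\ref{alg:VNEP-old:node-embedding})--(\ref{alg:VNEP-old:capacities}): flow conservation (\ref{alg:VNEP-old:edge-embedding}) needs to be checked commodity-by-commodity at every substrate node, and the remaining constraints follow from summing the fractional node values. Having established LP feasibility, I would argue by exhaustive case analysis that no valid mapping $\mapping \in \spaceSolReq$ is \emph{covered} by this solution in the sense of Section~\ref{sec:decomposing-mcf-solution-trees}. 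Branching on $\mapV(i) \in \VVloc[i]$, which is restricted to the two substrate nodes with positive $y^{\cdot}_{\req,i}$ (namely $u_1$ and its symmetric counterpart), the support of $z_{\req,i,j}$ and $z_{\req,i,k}$ forces unique choices for $\mapV(j)$ and $\mapV(k)$ in each branch. The key step is then to show that in every branch the closing virtual edge (between the images of $j$ and $k$) cannot be realized: the substrate paths available with strictly positive $z$-support terminate at the ``wrong'' node, exactly as sketched in the paragraph preceding the theorem. Hence $\min \mathcal{V}_k = 0$ for every candidate mapping, so no convex combination $\{(\prob,\mapping)\}_k$ of valid mappings can attain $\sum_k \prob = x_{\req} = 1$.

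Second, for the integrality gap, I would strengthen the same instance by setting the mapping restrictions $\VVloc[i]$, $\VEloc[i,j]$, etc., to exactly the support of the LP solution (or, equivalently, by removing the remaining substrate edges/capacities). Under these restrictions \emph{no} valid integer mapping exists at all, by the very same case analysis as above applied with all fractional values treated as indicators. Thus for the profit variant we have $\fracSol \ge \Vprofit[\req]$ while $\intSol = 0$, and sending $\Vprofit[\req]\to\infty$ drives the ratio to infinity; for the cost variant, infeasibility of the IP against a feasible LP yields an infinite ratio immediately. In either case the integrality gap of Formulation~\ref{alg:VNEP-IP-old} is unbounded whenever request graphs may contain cycles.

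The main obstacle I expect is the closing-edge case analysis: one must carefully enumerate all candidate $(\mapV(i),\mapV(j),\mapV(k))$ triples whose component $y$-variables are strictly positive, and for each verify that the unique paths available in the $z$-support are \emph{geometrically} incompatible with a triangle embedding. Everything else is mechanical --- LP feasibility is a direct substitution, and the integrality-gap step is a standard consequence of infeasibility of the IP on the same restricted instance --- so the proof's content really lives in showing that the ``color mismatch'' in the figure is provably unavoidable rather than an artifact of one chosen extraction order.
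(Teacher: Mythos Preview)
Your approach for non-decomposability is essentially identical to the paper's: both use the instance of Figure~\ref{fig:non-decomp}, verify LP feasibility, and argue by case analysis on $\mapV(i)\in\{u_1,u_4\}$ that the $z$-support forces node mappings around the cycle that cannot close consistently. For the profit integrality gap the paper likewise restricts the allowed edge sets $\VEloc[\cdot,\cdot]$ to the support of the fractional solution, obtaining $\fracSol=\Vprofit$ against $\intSol=0$; your extra step of sending $\Vprofit\to\infty$ is harmless but unnecessary, since any positive-over-zero ratio is already unbounded.

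The one substantive difference is the cost variant. You argue that under the support restrictions the IP is simply infeasible while the LP is feasible, and call this an infinite gap. That is a somewhat degenerate reading of ``integrality gap'', and a referee might object that the ratio is undefined rather than unbounded. The paper sidesteps this by augmenting the substrate with one additional edge $(u_3,u_1)$ of arbitrarily high cost, added to $\VEloc[k,i]$: this makes the IP \emph{feasible} (with a unique expensive valid mapping through that edge) while the LP can still use the cheap fractional solution of Figure~\ref{fig:non-decomp}. The resulting ratio is then a well-defined finite number that can be made arbitrarily large, which is the cleaner demonstration of an unbounded gap.
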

\begin{proof}
  Figure~\ref{fig:non-decomp} depicts an example solution to the LP Formulation~\ref{alg:VNEP-IP-old} from which \emph{not a single} valid mapping can be extracted. The validity of the depicted solution is easy to check. As virtual node $i \in \VV$ is mapped onto substrate node $u_1 \in \SV$, and $u_2 \in \SV$ is the only neighboring node with respect to the commodity $z_{\req, i,j}$ that hosts $j \in \VV$, a mapping $(\mapV, \mapE)$ with $\mapV(i)=u_1$ and $\mapV(j)=u_2$ must exist. Similarly, $\mapV(k)=u_3$ must hold. However, the flow of virtual edge $(k,i) \in \VE$ leaving $u_3 \in \SV$ only leads to $u_4 \in \SV$. Hence the virtual node $i \in \VV$ must be mapped both on $u_1$ and $u_4$. As the same argument applies when considering the mapping of $i$ onto $u_4$, no valid mapping can be extracted.
  
  We now show that the formulation exhibits an unbounded integrality gap.
  Consider the following restrictions for mapping the virtual links: $\VEloc[i,j] = \{(u_1,u_2), (u_4,u_5)\}$,
   $\VEloc[j,k] = \{(u_2,u_3), (u_5,u_6)\}$, 
  $\VEloc[k,i] = \{(u_3,u_4), (u_6,u_1)\}$.
  Note that the solution depicted in Figure~\ref{fig:non-decomp} is still feasible for the MCF LP. Considering the profit variant of the MCF formulation, the LP will attain an objective of $\Vprofit$. As on the other hand, there does not exist a valid mapping of request $\req$ on $\SG$, the optimal solution achieves a profit of $0$. Hence, the integrality gap of the profit formulation is unbounded.
  
  For the cost variant, we add an edge $(u_3,u_1)$ of arbitrarily high cost to the substrate and include this edge in the set of allowed edges for the virtual edge $(k,i) \in \VE$. Hence, there exists only a single valid mapping, which uses this edge $(u_3,u_1)$ while the MCF formulation might still use the LP solution depicted in Figure~\ref{fig:non-decomp}. Hence, as the cost of the edge $(u_3,u_1)$ can be arbitrarily high, the integrality gap is unbounded.
  \end{proof}

\section{Novel Decomposable LP Formulation}\label{sec:decomp}

\DeclareDocumentCommand{\VEbfsLabelsBag}{O{b} O{i} O{\req}}{\ensuremath{\mathcal{L}^{\extractionOrderCharacter,#1}_{#3,#2}}}

\DeclareDocumentCommand{\MappingSpace}{d[]}{\ensuremath{\mathcal{M}(#1)}}

%\DeclareDocumentCommand{\VVbfsPlus}{O{i} O{\req} }{\ensuremath{V^{\mathcal{A,+}}_{\req,i}}}
%\DeclareDocumentCommand{\VVbfsMinus}{O{i} O{\req} }{\ensuremath{V^{\mathcal{A,-}}_{\req,i}}}

\DeclareDocumentCommand{\VGe}{O{e} O{\req}}{\ensuremath{G_{\req,e}}}
\DeclareDocumentCommand{\VVe}{O{e} O{\req}}{\ensuremath{V_{\req,e}}}
\DeclareDocumentCommand{\VEe}{O{e} O{\req}}{\ensuremath{E_{\req,e}}}

\DeclareDocumentCommand{\mapVedge}{O{e}}{\ensuremath{m^{\mathcal{L}}_{#1}}}
\DeclareDocumentCommand{\mapVbag}{O{a}}{\ensuremath{m^{\mathcal{L}}_{#1}}}
\DeclareDocumentCommand{\mapVinter}{O{b} O{e}}{\ensuremath{m^{\mathcal{L}}_{#1 \cap #2}}}

\DeclareDocumentCommand{\VEbfsLabelsEdgeCapBag}{O{b} O{e}}{\ensuremath{\mathcal{L}^{\extractionOrderCharacter}_{#1 \cap #2}}}

\label{sec:novel-formulation}

In this section we present our novel LP formulation to solve the fractional VNEP and is the basis for our randomized rounding approximation algorithms for the VNEP. We first present the high-level idea of our formulation and introduce crucial concepts as extraction confluences and the extraction width. After introducing further notation, we formally present the LP formulation and show the decomposability of its solutions.

\subsection{Idea and Definitions}
\label{sec:novel-formulation:idea}
We shortly outline the key idea of our formulation by analyzing the shortcomings of the MCF formulation. Considering the example of Figure~\ref{fig:non-decomp}, we observe that there exist two virtual paths towards $k$ in $\VGbfs$, namely $\langle (i,k) \rangle$ and $\langle (i,j), (j,k) \rangle$. We refer to the combination of two such paths in $\VGbfs$ leading from a common virtual node to another common node as an \emph{extraction confluence}:
\begin{definition}[Extraction Confluence $\VEbfsAC$]
Given an extraction order $\VGbfs$, an \emph{extraction confluence} $\VEbfsAC = \VEbfsACL \sqcup \VEbfsACR$ connects $i \in \VV$ to $j \in \VV$ using two otherwise node-disjoint paths $\VEbfsACL, \VEbfsACR \subseteq \VEbfs$. We refer to $i$ as the source and to $j$ as the target of the confluence $\VEbfsAC$.
\end{definition}
According to the connectivity property of the MCF formulation (cf.~Lemma~\ref{lem:local-connectivity-property}), (partial) mappings can always be extended, but the disjoint paths of a confluence might lead to \emph{different} mappings of the confluence's target as depicted in Figure~\ref{fig:non-decomp}. However, this \emph{divergence} is only possible when the confluence's target can be mapped on multiple locations and is not fixed.

\begin{wrapfigure}{R}{0.37\textwidth}
\vspace{-12pt}
\includegraphics[width=0.35\textwidth]{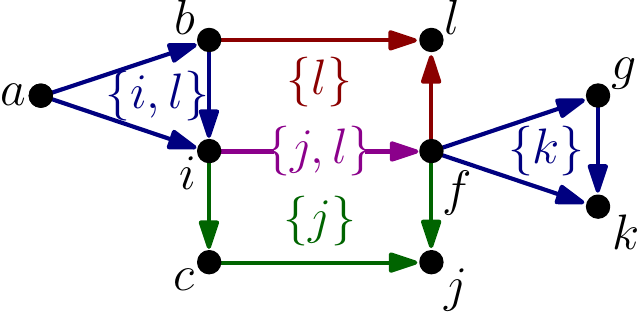}
\vspace{-6pt}
\caption{Exemplary labeled $\VGbfs$.}
\label{fig:labeling-example}
\vspace{-12pt}
\end{wrapfigure}

We use this as follows. Considering a confluence $\VEbfsAC$, our LP formulation considers multiple copies of the MCF formulation \emph{for each potential mapping location of the confluence's target}. In each of these copies, the mapping of the confluence's target is fixed to a \emph{specific} substrate node. To generalize this idea to multiple confluences, we label edges with confluence targets as follows.

\begin{definition}[Extraction Edge Labels] % $\VEbfsLabels$, $\VEbfsLabelsOrig$ ]
\label{def:edge-labels}
We introduce edge labels $\VEbfsLabels \subseteq \VV$ for $e \in \VEbfs$ as follows. The extraction order edge $e$ is labeled with node $j$, i.e., $j \in \VEbfsLabels$ holds, iff. a confluence $\VEbfsAC$ with target $j$ exists that contains $e$.
%\addpicCon[0.3\textwidth]{novel_lp/labeling-example_redone.pdf}{fig:lala}[0.0cm]
We also label the edges in their original orientation accordingly: for edge $e \in \VE$ we set $\VEbfsLabelsOrig \triangleq \VEbfsLabels[e']$ with $e'=\extractionEdge$.
\end{definition}

The edge labels will be used in our novel LP formulation to instantiate copies of the MCF formulation. Additionally, we introduce \emph{confluence edge bags} which partition outgoing edges.

\begin{definition}[Confluence Edge Bags] % $\VEbfsBags$]
\label{def:edge-bags}
Given an extraction order $\VGbfs$, the outgoing edges $\deltaPlusA$ of each node $i \in \VV$ are partitioned into a set of edge bags $\VEbfsBags = \{\VEbfsBag\}_\VEbfsBagIterator$, such that two edges $e_1,e_n \in \deltaPlusA$ are placed in the same bag $\VEbfsBag$, iff.~there exists a series of edges $e_2,e_3\dots,e_{n-1} \in \deltaPlusA$ such that $\VEbfsLabels[e_l] \cap \VEbfsLabels[e_{l+1}] \neq \emptyset$ holds for $l \in \{1,\ldots,n-1\}$. 

We denote by $\VEbfsLabelsBag = \bigcup_{e \in \VEbfsBag} \VEbfsLabels$ the union of labels contained in a bag $\VEbfsBag \in \VEbfsBags$ and by $\VEbfsLabelsEdgeCapBag = \VEbfsLabels \cap \VEbfsLabelsBag$ the intersection of labels of the bag $\VEbfsBag$ and the edge $e \in \VEbfs$.
\end{definition}

The size of our formulation will be proven to be exponential in the maximal number of labels contained in any edge bag, and we define the notion of \emph{extraction width} accordingly:

\begin{definition}[Extraction Width] % $\ew(\VG)$]
The width $\ewX(\VGbfs)$ of a given extraction order $\VGbfs$ is the maximal number of labels contained in an edge bag plus one: $\ewX(\VGbfs) = 1 + \max_{i \in \VV, \VEbfsBag \in \VEbfsBags} |\VEbfsLabelsBag|$.
Denoting by $\extractionOrderCharacter(\VG)$ the set of all extraction orders of a graph $\VG$, the extraction width of an arbitrary graph $\VG$ is the minimum width of any extraction order: $\ew(\VG) = \min_{\VGbfs \in \extractionOrderCharacter(\VG)} \ewX(\VGbfs)$.
\end{definition}

Figure~\ref{fig:labeling-example} depicts an example extraction order containing 5 confluences, which can be uniquely identified by their sources and targets: $\VEbfsAC[a][i]$, $\VEbfsAC[i][j]$, $\VEbfsAC[a][l]$, $\VEbfsAC[b][l]$, $\VEbfsAC[f][k]$ with e.g. $\VEbfsAC[b][l] = \{(b,i),(i,f),(f,l)\} \sqcup \{(b,d),(d,l)\}$.  
 According to Definition~\ref{def:edge-bags}, the edge bags of node $f$ are  $\VEbfsBags[f] = \{ \VEbfsBag[1][f]=\{(f,j)\}, \VEbfsBag[2][f]=\{(f,g),(f,k)\}, \VEbfsBag[3][f]=\{(f,l)\} \}$ with the corresponding label sets being $\VEbfsLabelsBag[1][f]=\{j\}$, $\VEbfsLabelsBag[2][f]=\{k\}$, and $\VEbfsLabelsBag[3][f]=\{l\}$. For node $i$, only a single edge bag $\VEbfsBags[i] = \{ \VEbfsBag[1][i]=\{(i,c),(i,f)\}\}$with label set $\VEbfsLabelsBag[1][i] = \{j,l\}$ exists. 
% As the edge bag of node $i$ contains two labels, the width $\ewX(\VGbfs)$ is at least $3$.

\subsection{Structure of Edge Labels}

In the following, we study the structure of extraction confluences and of the edge labels. We employ the following notation for indicating edges being reachable from and/or by nodes in the extraction order.

\begin{definition}[Reachable Edge Sets] % $\VEbfsSuc$, $\VEbfsPre$, $\VEbfsInter$]
Given an extraction order $\VGbfs$, we denote by $\VEbfsSuc, \VEbfsPre \subseteq \VEbfs$ the set of edges which can be reached from $i \in \VV$ and which may lead to $j \in \VV$. We denote by $\VEbfsInter = \VEbfsSuc \cap \VEbfsPre $ the edges lying on a path from $i$ to $j$.
\end{definition}

The following lemma forms the basis for efficiently computing edge labels.

\begin{restatable}{lemma}{lemEdgeLabelingConditions}
\label{lem:labels:edge-labeling-conditions}
Edge $e \in \VEbfs$ is labeled with $j\in \VV$ iff.~ there exists a node $i \in \VV$, such that (i)~$e$~lies on a path from $i$ to $j$, i.e. $e \in \VEbfsInter$, and (ii) a confluence $\VEbfsAC$ from $i$ to $j$ exists.
\end{restatable}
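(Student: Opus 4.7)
The forward direction ($\Rightarrow$) is immediate from Definition~\ref{def:edge-labels}: if $j \in \VEbfsLabels$, there exists a confluence $\VEbfsAC[i'][j] = \VEbfsACL[i'][j] \sqcup \VEbfsACR[i'][j]$ containing $e$. Since $\VEbfsACL[i'][j]$ and $\VEbfsACR[i'][j]$ are both paths from $i'$ to $j$ and $e$ belongs to at least one of them, $e$ lies on a path from $i'$ to $j$, so $e \in \VEbfsInter[i'][j]$. Taking $i := i'$ supplies both conditions (i) and (ii).

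For the reverse direction ($\Leftarrow$), suppose $i \in \VV$ satisfies $e \in \VEbfsInter$ and a confluence $\VEbfsAC$ exists via internally disjoint paths $P_L := \VEbfsACL$ and $P_R := \VEbfsACR$. Let $P$ be a path in $\VGbfs$ from $i$ to $j$ that contains $e = (u, v)$. My plan is to splice $P_L, P_R, P$ into a confluence with target $j$ containing $e$, which by Definition~\ref{def:edge-labels} would yield $j \in \VEbfsLabels$. To localize the portion of $P$ carrying $e$, I would let $w$ be the last vertex on the $i$-to-$u$ prefix of $P$ lying in $V(P_L) \cup V(P_R)$, and $w'$ the first vertex on the $v$-to-$j$ suffix of $P$ lying in $V(P_L) \cup V(P_R)$. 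Both are well-defined because $i, j \in V(P_L) \cap V(P_R)$, and by construction the sub-path $P^{w \to w'}$ contains $e$ while its internal vertices avoid $V(P_L) \cup V(P_R)$.

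Since $V(P_L) \cap V(P_R) = \{i, j\}$, I may assume WLOG (by symmetry of $P_L, P_R$ and by picking a side for $w$ or $w'$ when they coincide with $i$ or $j$) that $w \in V(P_L)$. Two cases then arise. If $w' \in V(P_L)$, the DAG property of $\VGbfs$ forces $w$ to precede $w'$ on $P_L$, and I would build the confluence $\VEbfsAC[i][j]$ with first path $P_L^{i \to w} \cdot P^{w \to w'} \cdot P_L^{w' \to j}$ (containing $e$) and second path $P_R$. If instead $w' \in V(P_R) \setminus V(P_L)$, I would build the confluence $\VEbfsAC[w][j]$ with first path $P^{w \to w'} \cdot P_R^{w' \to j}$ (containing $e$) and second path $P_L^{w \to j}$. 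In both cases, internal node-disjointness of the two new paths follows from how $w, w'$ were chosen together with $V(P_L) \cap V(P_R) = \{i, j\}$: the internal vertices of the $P^{w \to w'}$ piece lie outside $V(P_L) \cup V(P_R)$, the internal vertices of any $P_L$-piece lie in $V(P_L) \setminus \{i,j\}$, and the internal vertices of any $P_R$-piece lie in $V(P_R) \setminus \{i,j\}$.

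The main obstacle will be the case analysis together with its boundary subtleties: when $w = i$ or $w' = j$, the vertex lies in both $V(P_L)$ and $V(P_R)$, and one must assign it to the ``right'' side so that the splicing goes through. The DAG property of $\VGbfs$ is essential throughout -- both to linearly order $w$ before $w'$ on $P_L$ in the first case and to ensure the concatenated paths remain simple.
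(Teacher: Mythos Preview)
Your proposal is correct and follows essentially the same approach as the paper. Both arguments pick the last vertex before $e$ and the first vertex after $e$ on $P$ that meet $V(P_L)\cup V(P_R)$ (the paper calls them $k',l'$, you call them $w,w'$) and then case-split on whether these two points lie on the same branch of the confluence or on different branches, splicing accordingly. Your version is slightly more streamlined in that you do not separate out the preliminary cases ``$e$ already lies on $P_L$ or $P_R$'' and ``$P$ meets only one branch''; your uniform definition of $w,w'$ absorbs these automatically.
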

\begin{proof}
It is easy to see that the above two conditions are necessary.
Clearly, if the first condition does not hold for some node $i \in \VV$, then there cannot exist a confluence from $i$ to $j$ covering the edge $e$. Secondly, if there does not exist any confluence between $i$ and $j$, then there cannot exist a confluence from $i$ to $j$ covering $e$.

We now show that these conditions are also sufficient.
First, note that any path from $i$ to $j$ must be contained in $\VEbfsInter$. Let $e \in \VEbfsInter$ denote any edge for which the above conditions hold. We show that edge $e$ lies on a confluence with target $j$. By the second condition, there exist two node-disjoint paths $P^1_{i,j}, P^2_{i,j} \subseteq \VEbfsInter$ from $i$ to $j$. Now, if $e$ lies on either of these paths, then $P^1_{i,j} \sqcup P^2_{i,j}$ already constitutes a confluence. Hence, assume that $e$ does not lie on either of these paths. Let $e=(k,l)$, i.e. $k$ is the tail and $l$ the head. Furthermore, let $P_{i,k} \subseteq \VEbfsInter$ denote any path from $i$ to $k$ and denote by $P_{l,j} \subseteq \VEbfsInter$  any path from $l$ to $j$. Let $P_{i, e, j}$ denote the path obtained from joining $P_{i,k}$, $e=(k,l)$, and $P_{l,j}$.

If $P_{i,e,j}$ only intersects with $P^1_{i,j}$ (or $P^2_{i,j}$), then $P_{i,e,j}$ together with $P^2_{i,j}$ (or $P^1_{i,j}$) constitutes a confluence towards $j$ which covers $e$, proving our claim. Hence, assume that $P_{i,e,j}$ intersects with both paths. Let $k'$ be the last node on path $P_{i,k}$ which also lies on $P^1_{i,j}$ or $P^2_{i,j}$ and let $l'$ denote the first node on $P_{l,j}$ which also lies on $P^1_{i,j}$ or $P^2_{i,j}$. Assume now  w.l.o.g.~that both $k'$ and $l'$ lie on path $P^1_{i,j}$, then the subpath of $P^1_{i,j}$ from $k'$ to $l'$ can be substituted with the subpath from $k'$ to $l'$ of $P_{i,e,j}$, yielding the confluence depicted on the left of Figure~\ref{fig:labeling_proof}. On the other hand, if $k'$ lies on $P^1_{i,j}$ while $l'$ lies on path $P^2_{i,j}$, then there exists a confluence from $k'$ to $j$ covering the edge $e$: using the suffix of path $P^1_{i,j}$ starting at node $k'$ as first path and the subpath of $P_{i,e,k}$ from $k'$ to $l'$ together with the suffix of $P^2_{i,j}$ starting at $l'$, a confluence is found that covers $e$. By construction, as the nodes of path $P_{i,e,j}$ between $k'$ and $l'$ do neither lie on $P^1_{i,j}$ nor $P^2_{i,j}$, the paths of the constructed confluence are disjoint (see Figure~\ref{fig:labeling_proof} (center) for a visualization).
Hence, the two conditions stated in the lemma are also sufficient to decide whether an edge $e \in \VEbfs$ is covered by a confluence towards $j \in \VV$ holds.
\end{proof}

\begin{figure}[tb!]
\hspace{32pt}
\includegraphics[height=0.15\textheight]{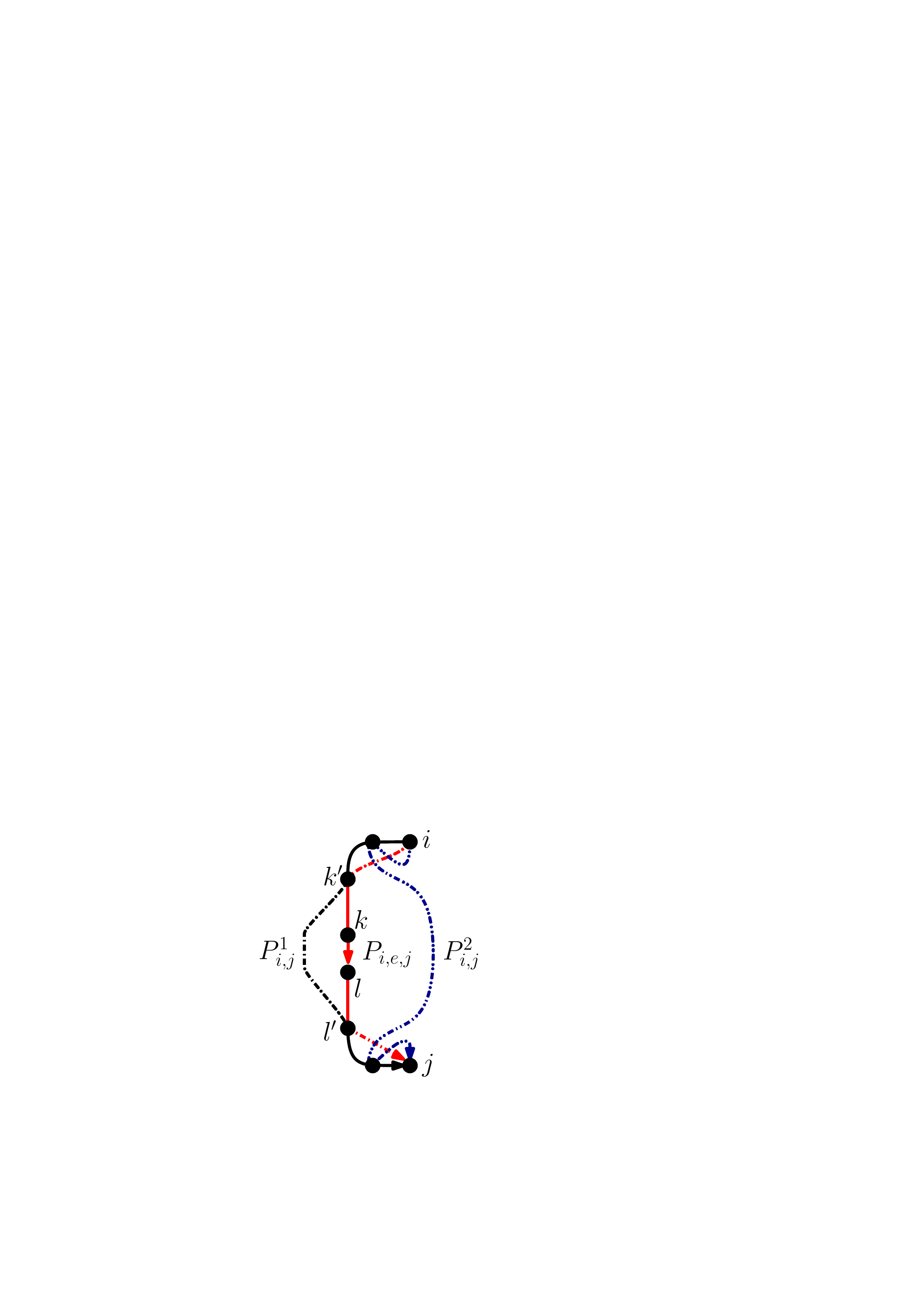}
\hfill
\includegraphics[height=0.15\textheight]{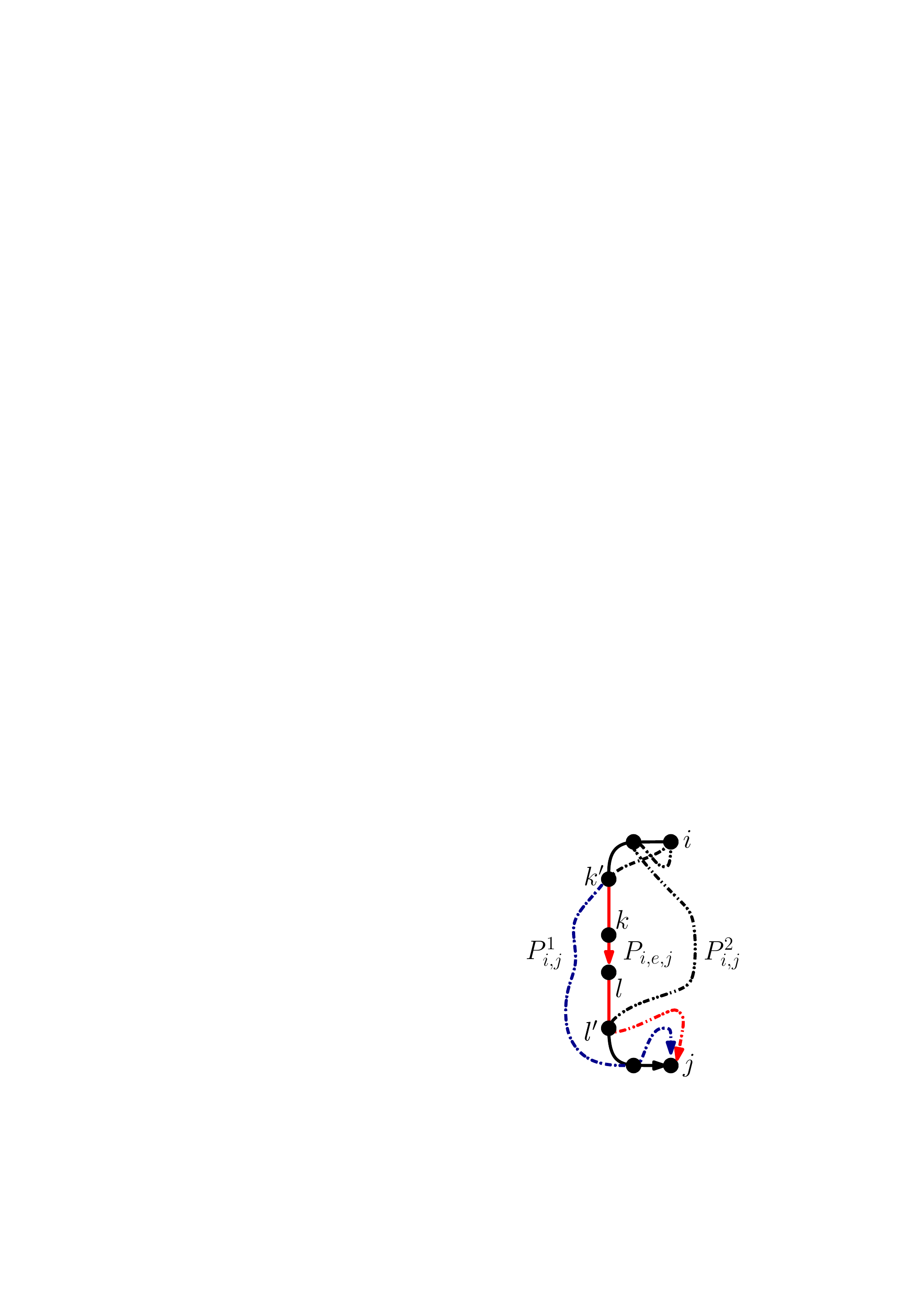}
\hfill
\includegraphics[height=0.15\textheight]{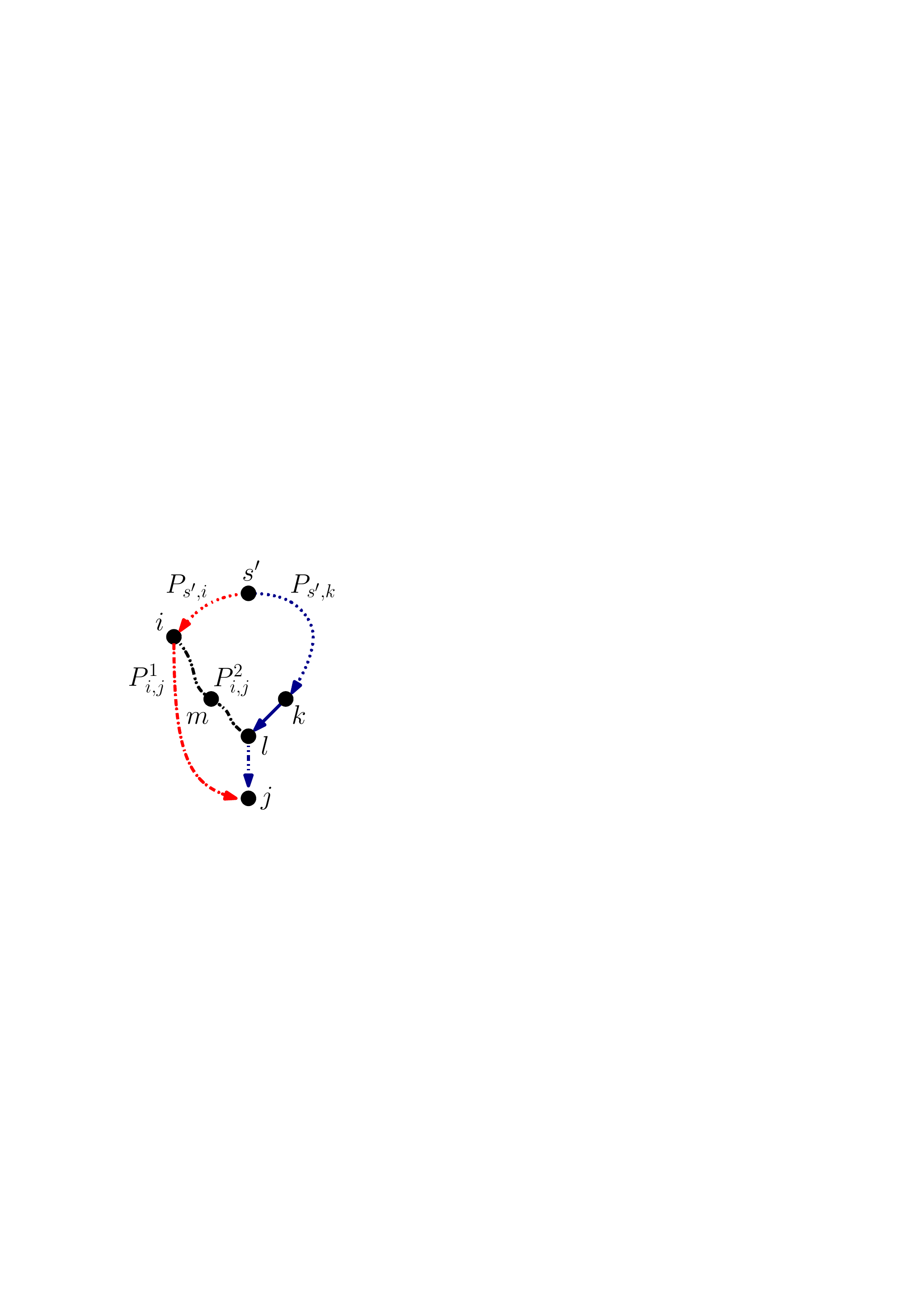}
\hspace{32pt}
\caption{Visualizations of the constructions used in the proofs of Lemma~\ref{lem:labels:edge-labeling-conditions} (left and center) and Lemma~\ref{lem:labels:in-edge-labeled-same} (right) to show that an edge $(k,l)$ is covered by a confluence. The confluence path $P^1_{i,j}$ is dashed with a single dot and the confluence path $P^2_{i,j}$ is dashed with two dots. The constructed confluences consist of the highlighted paths in red and blue. \newline
Lemma~\ref{lem:labels:edge-labeling-conditions} (left and center): Construction of a confluence when $P^1_{i,j}$ intersects with $k'$ and $l'$ (left). Construction of a confluence when $k'$ lies on $P^1_{i,j}$, while $l'$ lies on $P^2_{i,j}$ (center). \newline
Lemma~\ref{lem:labels:in-edge-labeled-same} (right): As there must exist a node $r'$ reaching both $i$ and $m$, a confluence with target $j$ is constructed covering the edge $(m,l)$.}
\label{fig:labeling_proof}
\end{figure}

Based on Lemma~\ref{lem:labels:edge-labeling-conditions}, the edge labels can be computed in polynomial-time. Concretely we apply Menger's theorem~\cite{menger1927allgemeinen} to decide for any combination of virtual nodes $i,j \in \VV$, whether two disjoint paths exist from $i$ to $j$. If this is the case, then all edges lying in $\VEbfsInter$ must be labeled with $j$ and we obtain:

\begin{restatable}{lemma}{lemEdgeLabelComputationViaMenger}
\label{lem:labels:edge-label-computation-via-menger}
The edge labels $\VEbfsLabels$ can be computed in time $\mathcal{O}(|\VV|^3 \cdot |\VE|$).
\end{restatable}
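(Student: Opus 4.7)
The plan is to directly leverage Lemma~\ref{lem:labels:edge-labeling-conditions}, which reduces the labeling task to checking, for every ordered pair of virtual nodes $(i,j)\in \VV \times \VV$, whether a confluence from $i$ to $j$ exists, and (if so) identifying the edges in $\VEbfsInter$ that lie on some $i$-to-$j$ path in $\VGbfs$. First, I would iterate over all such pairs. For each pair, the existence of a confluence is equivalent, by Menger's theorem, to the existence of two internally node-disjoint directed paths from $i$ to $j$ in $\VGbfs$; this is tested by the standard vertex-splitting reduction (replace each node $v\notin\{i,j\}$ by $v^-,v^+$ with a unit-capacity edge, and rewrite every $(u,v)\in\VEbfs$ as $(u^+,v^-)$), followed by a maximum-flow computation that checks whether a flow of value at least $2$ from $i$ to $j$ exists.

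If a confluence from $i$ to $j$ exists, I would compute $\VEbfsInter$ by one forward BFS from $i$ and one backward BFS to $j$ in $\VGbfs$, intersecting the two edge sets visited; this is $O(|\VE|)$. By Lemma~\ref{lem:labels:edge-labeling-conditions}, adding $j$ to $\VEbfsLabels$ for every $e\in\VEbfsInter$ yields exactly the desired labeling for target $j$ from source $i$; taking the union over all sources $i$ for fixed $j$ produces all labels of value $j$, and ranging over $j$ produces the full label sets.

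For the runtime, there are $O(|\VV|^2)$ pairs. Using a standard max-flow algorithm on the vertex-split graph, each Menger test is bounded by $O(|\VV| \cdot |\VE|)$ (two BFS-based augmentations already suffice, but the conservative bound is what the claimed complexity requires); computing $\VEbfsInter$ and updating labels adds $O(|\VE|)$ per pair. Summing over the $O(|\VV|^2)$ pairs gives $O(|\VV|^3 \cdot |\VE|)$, matching the claim.

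The main point to verify carefully is correctness rather than efficiency: one has to ensure that the vertex-splitting reduction translates the ``two internally node-disjoint paths'' question into a standard integral max-flow of value $\geq 2$, and that Lemma~\ref{lem:labels:edge-labeling-conditions} is applied in both directions (so that no spurious labels are created and no required labels are missed). Both are immediate from the classical Menger reduction and the sufficiency proof already given for Lemma~\ref{lem:labels:edge-labeling-conditions}, so no genuinely new argument is needed beyond carefully bookkeeping the iteration.
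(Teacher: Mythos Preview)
Your proposal is correct and follows essentially the same approach as the paper: iterate over all $O(|\VV|^2)$ pairs $(i,j)$, use Menger's theorem to test for a confluence in time $O(|\VV|\cdot|\VE|)$ per pair, and if one exists label every edge of $\VEbfsInter$ with $j$. The only cosmetic difference is that the paper implements the Menger test via the cut side (removing each intermediate node $k$ and checking reachability) rather than your flow side (vertex splitting plus two augmentations); both yield the stated $O(|\VV|\cdot|\VE|)$ bound per pair.
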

\begin{proof}
We argue that the conditions of Lemma~\ref{lem:labels:edge-labeling-conditions} can be checked in polynomial time. For each potential target node $j \in \VV$ and each source node $i \in \VV$, we check whether two node-disjoint paths exist from $i$ to $j$ by applying Menger's theorem~\cite{menger1927allgemeinen}: for each node $k$ lying on a path from $i$ to $j$, we decide whether $j$ is still reachable from $i$ when $k$ is removed. If this is true for each intermediate node, then by Menger's theorem, there exist at least two node-disjoint paths from $i$ to $j$ and hence there must exist a confluence $\VEbfsAC$. Hence, given the existence of a confluence, all edges in $\VEbfsInter$ are labeled by $j$. At most $\mathcal{O}(|\VV|^2)$ many node pairs need to be considered and the check whether two node-disjoint paths exist can be implemented in time $\mathcal{O}(|\VV|\cdot |\VE|)$. Hence, the overall runtime to compute all labels is bounded by $\mathcal{O}(|\VV|^3\cdot|\VE|)$.
\end{proof}

The two following lemmas state important structural properties for edge labels, namely that incoming edges are always labeled the same and that each label has a unique source.

\begin{restatable}{lemma}{lemInEdgeLabelsSame}
\label{lem:labels:in-edge-labeled-same}
$\VEbfsLabels[e] = \VEbfsLabels[e']$ holds for any pair of 
incoming edges $e,e' \in \deltaMinusA[l]$ of node $l \in \VV$.
\end{restatable}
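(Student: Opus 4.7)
My plan is to invoke Lemma~\ref{lem:labels:edge-labeling-conditions} in both directions: I will show $\VEbfsLabels[e] \subseteq \VEbfsLabels[e']$ for any $e = (m,l), e' = (m',l) \in \deltaMinusA[l]$, and equality then follows by symmetry. Fixing $j \in \VEbfsLabels[e]$, the lemma supplies a source $i \in \VV$ and a confluence $\VEbfsAC[i][j] = P^1 \sqcup P^2$ covering the edge $(m,l) \in P^1$. I will split $P^1$ as $P^1_{i,m} + \langle (m,l)\rangle + P^1_{l,j}$; the key observation for the sequel is that the suffix $P^1_{l,j}$ and the partner path $P^2$ are already node-disjoint except at the shared endpoint $j$.

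Next, I will use Lemma~\ref{lem:labels:edge-labeling-conditions} in the opposite direction: to conclude $j \in \VEbfsLabels[e']$ it suffices to exhibit a node $r' \in \VV$ for which (i)~$(m',l)$ lies on some $r'$-to-$j$ path in $\VGbfs$ and (ii)~a confluence from $r'$ to $j$ exists. I will pick $r'$ as any common ancestor of $i$ and $m'$ in $\VGbfs$; such an $r'$ always exists because the root $\VVroot$ reaches every node. Condition~(i) is immediate, since $r'$ reaches $m'$ and $l$ reaches $j$ along $P^1_{l,j}$. For condition~(ii), I will choose arbitrary simple paths $Q_i$ from $r'$ to $i$ and $Q_{m'}$ from $r'$ to $m'$ in $\VGbfs$, and assemble the candidate paths
\[
  R^1 := Q_{m'} + \langle (m',l)\rangle + P^1_{l,j}, \qquad R^2 := Q_i + P^2,
\]
both going from $r'$ to $j$ with $(m',l)$ on $R^1$.

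The main obstacle will be forcing $R^1$ and $R^2$ to be internally node-disjoint, since $Q_i$ and $Q_{m'}$ may intersect each other or may enter $P^1_{l,j}$, $P^2$, $\{l\}$, or $\{m'\}$. I plan to handle this by an iterated surgery argument entirely analogous to the one used in the proof of Lemma~\ref{lem:labels:edge-labeling-conditions}: at each shared internal node $v \in R^1 \cap R^2$, I will swap the appropriate prefix/suffix segments of the two paths through $v$ (exactly as in the left and center panels of Figure~\ref{fig:labeling_proof}), strictly decreasing the number of shared internal nodes while preserving the property that $(m',l)$ remains on $R^1$. Finiteness of $\VGbfs$ then guarantees termination. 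Degenerate situations such as $r' \in P^1 \cup P^2$ will be ruled out by the acyclicity of $\VEbfs$: that would force $r'$ to be both an ancestor and a descendant of $i$. The resulting internally node-disjoint pair witnesses the confluence $\VEbfsAC[r'][j]$ of condition~(ii), as depicted in the right panel of Figure~\ref{fig:labeling_proof}, and Lemma~\ref{lem:labels:edge-labeling-conditions} then yields $j \in \VEbfsLabels[e']$, completing the proof.
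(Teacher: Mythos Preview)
Your overall construction of the two candidate paths $R^1,R^2$ is the same as the paper's, but the surgery step contains a genuine gap. The claim that swapping prefix/suffix segments at a shared internal node ``strictly decreases the number of shared internal nodes'' is false in general: if you take $r'$ to be ``any'' common ancestor (say the root) and $Q_i,Q_{m'}$ both begin $r'\to a\to\cdots$, then $a$ is a shared internal node and every swap at $a$ leaves $a$ shared. More fundamentally, two distinct $r'$-to-$j$ paths need not witness any confluence \emph{from~$r'$} to $j$ at all---there may simply be a cut vertex between $r'$ and $j$ in $\VGbfs$. The surgery of Lemma~\ref{lem:labels:edge-labeling-conditions} you invoke is not a generic uncrossing procedure; it relies on one of the two objects already being a full confluence $P^1\sqcup P^2$, which is not the situation for $R^1$ versus $R^2$.

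The paper avoids this problem by arguing by contradiction. Assuming $j\notin\VEbfsLabels[e']$, Lemma~\ref{lem:labels:edge-labeling-conditions} (contrapositive) yields $e'\notin\VEbfsInter$, hence the tail $m'$ of $e'$ is \emph{not reachable from~$i$}. This single unreachability fact, together with picking the common ancestor as the \emph{last} shared node $s'$ on fixed root-paths to $i$ and to $m'$, forces both $P_{s',i}$ and $P_{s',m'}$ to avoid all edges of $\VEbfsInter$; the two assembled paths are then automatically internally disjoint and no surgery is needed. Your direct approach can be repaired---take $v$ to be the topologically \emph{last} shared internal node of $R^1,R^2$, observe that the suffixes $R^1[v{:}j]$ and $R^2[v{:}j]$ form a confluence $\VEbfsAC[v][j]$, and check that $v$ must lie in $Q_{m'}$ so that $e'\in\VEbfsInter[v][j]$---but this changes the confluence source from $r'$ to $v$, which is precisely what your stated surgery does not allow.
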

\begin{proof}
Assume for the sake of contradiction that an edge $e=(m,l)$ is labeled with $j$, i.e.~$j \in \VEbfsLabels$, and that some other incoming edge $e'=(k,l)$ is not labeled with $j$, i.e.~$j \notin \VEbfsLabels[e']$.
As the edge $e$ is labeled with $j$, there must exist some confluence $\VEbfsAC$ covering $e$. 
As the edge $e'=(k,l)$ is not labeled with $j$, we obtain from Lemma~\ref{lem:labels:edge-labeling-conditions} that the edge $e'$ is not reachable from $i$, i.e., $e' \notin \VEbfsInter$ holds. As both $i$ and $k$ are reachable from the root $\VVroot$ of $\VGbfs$, there must exist paths $P_{\VVroot,i}$ and $P_{\VVroot,k}$ leading from the root $\VVroot$ to $i$ and $k$, respectively. Let $s'$ denote the last node that lies on both of these paths. The subpaths $P_{s',i}$ and $P_{s',k}$ of $P_{\VVroot,i}$ and $P_{\VVroot,k}$ starting at $s'$ do not use any of the edges in $\VEbfsInter$. Hence, joining $P_{s',i}$ with $P^1_{i,j}$ and joining $P_{s',k}$ with $e'=(k,l)$ and the subpath of $P^2_{i,j}$ beginning at node $l$, a confluence is constructed that covers edge $e'$ (see Figure~\ref{fig:labeling_proof} (right) for an visualization of the construction). Therefore, also $e'$ must be labeled with $j$, yielding a contradiction to our assumption that $e'$ was not labeled by $j$ and all incoming edges must be labeled the same.
\end{proof}

Lastly, the following lemma shows that any label is introduced only once.

\begin{restatable}{lemma}{lemUniqueRootForLabel}
\label{lem:labels:unique-root-for-label}
For each label $j \in \VV$ there exists a unique root node $s_j \in \VV$, such that:
\begin{enumerate}
%\item At least two outgoing edges of $j'$ are labeled with $j$.
\item Any edge $e \in \VEbfs$ being labeled with $j \in \VEbfsLabels$ is contained in $\VEbfsInter[s_j][j]$.
\item Any path from $\VVroot$ (the root of the extraction order)  to $j$ passes through $s_j$.
\end{enumerate}
\end{restatable}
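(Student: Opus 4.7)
The plan is to construct $s_j$ as a specific cut vertex in the extraction order and then verify that it satisfies both conditions. Let $C(\VVroot,j) \subseteq \VV$ denote the set of all nodes through which every directed path from $\VVroot$ to $j$ in $\VGbfs$ must pass. I will first observe that $C(\VVroot,j)$ is linearly ordered by reachability in $\VGbfs$: if $u,v \in C(\VVroot,j)$ were incomparable, fix any $\VVroot$-to-$j$ path $P$ (which by definition passes through both); without loss of generality $u$ precedes $v$ on $P$, so $u$ reaches $v$, contradicting incomparability. Since $C(\VVroot,j)$ is a finite chain containing $j$, I define $s_j$ to be its unique immediate predecessor of $j$. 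Statement~2 and the uniqueness of $s_j$ then fall out directly from this definition.

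For statement~1 I will show that every edge $e$ labeled with $j$ satisfies $e \in \VEbfsInter[s_j][j]$. By Lemma~\ref{lem:labels:edge-labeling-conditions} there is some $i \in \VV$ admitting a confluence $\VEbfsAC = \VEbfsACL \sqcup \VEbfsACR$ from $i$ to $j$ with $e \in \VEbfsInter$. Once I know $s_j$ reaches $i$, the composition of an $s_j$-to-$i$ path, an $i$-to-$\mathrm{tail}(e)$ subpath, the edge $e$ itself, and a $\mathrm{head}(e)$-to-$j$ subpath certifies $e \in \VEbfsInter[s_j][j]$. To show that $s_j$ reaches $i$, pick any $\VVroot$-to-$i$ path $Q$ in $\VGbfs$; then both $Q \cdot \VEbfsACL$ and $Q \cdot \VEbfsACR$ are $\VVroot$-to-$j$ paths, each of which passes through $s_j$. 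Node-disjointness of the two confluence paths forbids $s_j$ from lying simultaneously in the interior of both, and if $s_j$ lay in the interior of only one of them, say $\VEbfsACL$, it would have to appear on $Q$ in order to lie on $Q \cdot \VEbfsACR$, producing the DAG cycle $s_j \leadsto i \leadsto s_j$. Hence $s_j \in V(Q) \cup \{i,j\}$, and so $s_j$ reaches $i$ provided $s_j \neq j$.

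The main obstacle is ruling out the degenerate case $s_j = j$. Under the hypothesis that $j$ is actually a label (so that some edge is labeled with $j$, otherwise statement~1 is vacuous), the confluence above has source $i \neq j$ because a DAG admits no self-confluence, and its two internally node-disjoint paths reconverge only at $j$. Consequently, no node strictly between $i$ and $j$ along either $\VEbfsACL$ or $\VEbfsACR$ can be a $\VVroot$-$j$ cut vertex, so the immediate predecessor of $j$ in $C(\VVroot,j)$ is forced to be $i$ or an ancestor of $i$; in particular $s_j \neq j$. With this boundary case handled, the argument of the previous paragraph shows $s_j$ reaches $i$, and statement~1 follows; combined with statement~2 and the chain-based uniqueness, this proves the lemma.
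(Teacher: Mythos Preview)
Your proof is correct and takes a genuinely different route from the paper's. The paper defines $s_j$ \emph{from the confluence side}: it argues that the set of confluence sources toward $j$ is upward-directed (for any two incomparable sources $i,i'$ one can find a common ancestor that is itself a confluence source), takes $s_j$ to be the top such source, and then proves statement~2 by a separate contradiction argument. You instead define $s_j$ \emph{from the cut-vertex side} as the last element before $j$ in the chain $C(\VVroot,j)$ of $\VVroot$--$j$ separators, which makes statement~2 and uniqueness immediate; the work then shifts to statement~1, which you obtain by observing that the two internally disjoint paths of any confluence $\VEbfsAC$ exclude all their interior nodes from $C(\VVroot,j)$, forcing $s_j$ to lie on the prefix $Q$ and hence reach $i$. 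Both arguments ultimately hinge on the same geometric fact (two disjoint $i$-to-$j$ paths kill intermediate separators), but your packaging avoids the paper's somewhat informal ``there must exist a node $s'$ spawning a confluence'' step.

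One presentational remark: your ``degenerate case $s_j = j$'' is not really a case at all, since by your own definition $s_j$ is the \emph{predecessor} of $j$ in the chain and hence $s_j \neq j$ automatically. What actually needs saying there is that this predecessor \emph{exists}, i.e.\ that $C(\VVroot,j)\neq\{j\}$; this follows because $\VVroot\in C(\VVroot,j)$ and, since $j$ being a label forces $j$ to have incoming edges, $j\neq\VVroot$. Once that is noted, paragraph~3 is essentially a (cleaner) reprise of the reasoning in paragraph~2, and you could merge them.
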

\begin{proof}
Consider two nodes $i,i' \in \VV$ being the sources of confluences $\VEbfsAC[i][j]$ and $\VEbfsAC[i'][j]$ towards $j$. Assume that neither $i$ occurs in $\VEbfsInter[i'][j]$ nor that $i'$ occurs in $\VEbfsInter[i][j]$. As the graph $\VGbfs$ is rooted, there must exist a node $s'$ reaching both $i$ and $i'$ and spawning a confluence towards $j$. Furthermore, $(\VEbfsInter[i][j] \cup \VEbfsInter[i'][j]) \subseteq \VEbfsInter[s'][j]$ holds in this case.
Hence, for any pair of nodes $i,i'$ being sources of confluences towards $j$, either one of the nodes is reachable from the other, or there exists another node $s' \in \VV$ such that $\VEbfsInter[i][j]$ and $\VEbfsInter[i'][j]$ are contained in $\VEbfsInter[s'][j]$. Clearly, as either $i$ dominates $i'$ or $i'$ dominates $i$, or there exists some other node $s'$ dominating both, there must exist a single unique root node $s_j \in \VV$ such that all edges labeled with $j$ lie in $\VEbfsInter[s_j][j]$.

The second claim is immediate: if there was to exist some path from the root $\VVroot$ to an edge being labeled with $j$ without passing through the unique root  $s_j \in \VV$, then there must exist a confluence $\VEbfsAC[s'][j]$ starting at some other node $s' \in \VV$, such that $s'$ reaches $s_j$ but $s_j$ does not reach $s'$. Hence, by our above observation $s'$ dominates $s_j$ and $s_j$ cannot be the unique root. Thus, all paths from the root must pass through $s_j$ on their way to $j$.
\end{proof}

\subsection{Novel Decomposable LP Formulation}
\label{sec:novel-formulation:novel-formulation}
Our novel Linear Programming Formulation~\ref{IP:novel-AC} is based on the idea to decide the mapping locations of confluence targets a priori. We do so by considering copies or sub LPs of the MCF formulation (see Constraint~\ref{LP:novel:subformulations}) and we employ the following notation. For an edge $e=(i,j) \in \VE$, we denote by $\VGe = (\VVe, \VEe)$ with $\VVe = \{i,j\}$ and $\VEe = \{e\}$ the subgraph of $\VG$ containing only edge $e$. Variables of sub LPs are named as before, but are now additionally indexed: $\subLP[\alpha][\beta]$ denotes the variable $\alpha$ in the copy identified by $\beta$. To denote the combinations of mapping possibilities of labels, we employ $\MappingSpace[X]$ to denote the function space from the set $X$ to $\SV$, i.e., $\MappingSpace[X] = [X \to \SV]$. Accordingly, considering an edge $e \in \VE$ of request $\req \in \requests$ being labeled by $\VEbfsLabelsOrig$, we instantiate one copy of the MCF formulation per edge label mapping $\mapVedge \in \MappingSpace[\VEbfsLabelsOrig]$ (cf.~Constraint~\ref{LP:novel:subformulations}).
For better readability, we write $\restrict[f][Z]:Z\to Y$ to denote $f_{|Z}$, i.e., the (standard) restriction of the function $f: X \to Y$ on the subset $Z \subseteq X$. Hence, $\restrict[f][Z](z) = f(z)$ holds for $z \in Z$.

To link the LP copies, we employ two types of \emph{global} node mapping variables. We use the (global) $y^u_{\req,i}$ variables already presented in Formulation~\ref{alg:VNEP-IP-old} as well as node mapping variables $\gamma^u_{\req,i,b,a} \in [0,1]$ for \emph{edge bags} $\VEbfsBag \in \VEbfsBags$, each mapping $\mapVbag \in \MappingSpace[\VEbfsLabelsBag]$ of the labels contained in the respective edge bag, and the mapping locations $u \in \VVloc$. The classic variables $y^u_{\req,i}$ are used for coupling the embedding variable $x_{\req}$ and the sub-LP node mapping variables (see Constraints~\ref{LP:novel:node-embedding} and \ref{LP:novel:node-to-sub-node-mapping}) as well as for computing the node load (see Constraint~\ref{LP:novel:node-load}). 
The node mapping variables for edge bags  $\gamma^u_{\req,i,b,a}$ are defined for all mappings of their label set  $\mapVbag \in \MappingSpace[\VEbfsLabelsBag]$. As $\VEbfsLabels \subseteq \VEbfsLabelsBag$ holds for all edges $e \in \VEbfsBag$, the node mapping variables of an edge bag directly induce node mappings for all edges contained in the respective bag (see Constraint~\ref{LP:novel:gamma-to-outgoing-edges}).

\begin{figure}[tb]

 {
 \LinesNotNumbered
  \renewcommand{\arraystretch}{0}
 
 \removelatexerror

  \begin{IPFormulation}{H}
  
  \popline
 
  \SetAlgorithmName{Formulation}{}{{}}

  \newcommand{\spaceIt}{\qquad\quad\quad}
  \newcommand{\miniSpace}{\hspace{1.5pt}}

%\scalebox{0.85}{
%\hspace{-24pt}
%\begin{minipage}{1.2\columnwidth}
  \begin{tabular}{FRLQB}
    \multicolumn{5}{r}{\parbox{0.92\textwidth}{~}} \\ %[-16pt]
%   \multicolumn{4}{C}{\textnormal{max~}  \sum \limits_{\req \in \requests}  \Vprofit x_{\req}~~~~~~~~~~~~~~~~  } & \tagIt{LP:novel:obj}\\

   \multicolumn{3}{L}{\textnormal{  \hspace{-10pt}(\ref{alg:VNEP-old:node-embedding}) -    (\ref{alg:VNEP-old:load-edge}) for $\VGe$ on variables } \subLP[(x_{\req},\vec{y}_{\req},\vec{z}_{\req},\vec{a}_{\req})][e,\mapVedge]} ~~& \forall \req \in \requests, e \in \VE, \mapVedge \in \MappingSpace[\VEbfsLabelsOrig]   & \tagIt{LP:novel:subformulations}
   \\[6pt]
   
  x_{\req} & ~=~ & \sum \limits_{u \in \VVloc} y^u_{\req, i} & \forall \req \in \requests &  \tagIt{LP:novel:node-embedding} \\
   
   y^u_{\req,i} & ~=~ & 	\sum_{\mapVedge \in \MappingSpace[\VEbfsLabelsOrig]} \hspace{-12pt} \subLP[y^u_{\req,i}][e,\mapVedge] & \forall \req \in \requests,  i \in \VV,  u \in \VVloc, e \hspace{-1pt} \in \hspace{-1pt}\VE \hspace{-1pt}: \hspace{-1pt}i \in \VVe &  \tagIt{LP:novel:node-to-sub-node-mapping} \\
       
   \subLP[y^u_{\req,i}][\origEdge,\mapVedge]    & ~=~ & \sum_{\begin{subarray}{c}
            \mapVbag \in \MappingSpace[\VEbfsLabelsBag]: \\
            \restrict[\mapVbag][\VEbfsLabelsEdgeCapBag] = \mapVedge
            \end{subarray}} \gamma^u_{\req,i,b,a}  & \hspace{-5pt}\begin{array}{r}
     \forall \req \in \requests, i \in \VV, u \in \VVloc, \VEbfsBag \in \VEbfsBags, \\ e \in \VEbfsBag, \mapVedge \in \MappingSpace[\VEbfsLabels]
     \end{array} & \tagIt{LP:novel:gamma-to-outgoing-edges}\\

   \sum_{\begin{subarray}{c}
   \mapVedge \in \MappingSpace[\VEbfsLabels]: \\
   \restrict[\mapVedge][\VEbfsLabelsEdgeCapBag] =  \mapVinter
   \end{subarray}} \hspace{-16pt} \subLP[y^u_{\req,i}][\origEdge,\mapVedge] & ~=~ & \sum_{\begin{subarray}{c}
      \mapVbag \in \MappingSpace[\VEbfsLabelsBag]: \\
      \restrict[\mapVbag][\VEbfsLabelsEdgeCapBag] =  \mapVinter
      \end{subarray}} \hspace{-16pt} \gamma^u_{\req,i,b,a}  &  \hspace{-5pt}\begin{array}{r}
            \forall \req \in \requests, i \in \VV, u \in \VVloc, e \in \deltaMinusA,\\ \VEbfsBag \in \VEbfsBags, \mapVinter \in \MappingSpace[\VEbfsLabelsEdgeCapBag]
            \end{array}  & \tagIt{LP:novel:incoming-edges-to-gamma-variables}\\

     \subLP[y^u_{\req,i}][\origEdge,\mapVedge] & ~=~ & 0  & \hspace{-5pt}\begin{array}{r}
              \forall \req \in \requests, i \in \VV, e \in \deltaMinusA: i \in \VEbfsLabels,\\ 
              \mapVedge \in \MappingSpace[\VEbfsLabels], u \in \VVloc \setminus \{\mapVedge(i)\}
              \end{array}  & \tagIt{LP:novel:forbidding-nodes-in-sub-lps}\\
	
a^{\type,u}_{\req}  & ~=~  & 	 	\sum \limits_{i \in \VV, \Vtype(i) = \type}  \hspace{-12pt}  \Vcap(i) \cdot y^u_{\req,i} & \forall \req \in \requests, (\type,u) \in  \SRV &  \tagIt{LP:novel:node-load}\\
   	
	a^{u,v}_{\req} & ~=~ & \sum_{\begin{subarray}{c}
e \in \VE\\ \mapVedge \in \MappingSpace[\VEbfsLabelsOrig]
	\end{subarray}} \hspace{-12pt} \subLP[a^{u,v}_{\req}][e,\mapVedge]    \quad & \forall \req \in \requests, (u,v) \in  \SE&  \tagIt{LP:novel:edge-load}\\
	
	\sum \limits_{\req \in \requests} a^{x,y}_{\req}  & ~\leq~ & \Scap(x,y) & \forall (x,y) \in  \SR &  \tagIt{LP:novel:capacities}  \\[12pt]

%x_{\req} & \in & \{0,1\} & \forall \req \in \requests \\ 
%y^u_{\req,i} & \in & \{0,1\} & \forall \req \in \requests, i \in \VV, u \in \SV  \\

\multicolumn{4}{C}{
\hspace{-10pt}
\begin{array}{c}
x_{\req} \in [0,1],~\forall \req \in \requests; \hspace{12pt} y^u_{\req,i} \in [0,1],~\forall \req \in \requests, i \in \VV, u \in \VVloc; \hspace{12pt} a^{x,y}_{\req} \geq 0,~\forall \req \in \requests, (x,y) \in \SR \\[6pt]
\gamma^u_{\req,i,b,a} \in [0,1],~\forall \req \in \requests, i \in \VV, u \in \VVloc, \VEbfsBag \in \VEbfsBags, \mapVbag \in \MappingSpace[\VEbfsLabelsBag]
\end{array}

} & \tagIt{LP:novel:variables}
  \end{tabular}
%\end{minipage}
%}
  \caption{Novel Decomposable Base Formulation for the VNEP}
  \label{IP:novel-AC}
  \end{IPFormulation}
  }
\end{figure}

\begin{figure}[tb]
\centering
\includegraphics[width=0.91\textwidth]{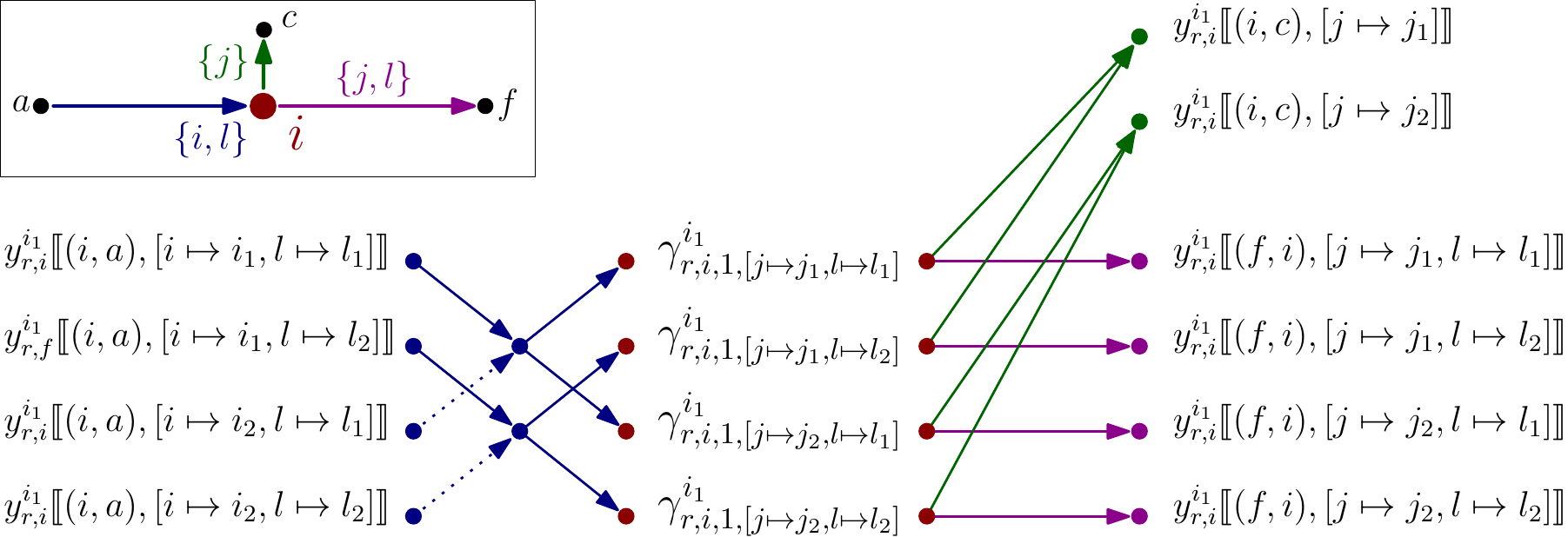}
\caption{Visualization of the relation of the different node mapping variables for the example of Figure~\ref{fig:labeling-example} under the assumption that the virtual nodes $i,j,k \in \VV$ can be mapped only on $i_k, j_k, l_k \in \SV$ for $k \in \{1,2\}$, respectively. Depicted are only the variables relating to the mapping of $i$ to $i_1$. To highlight that the LP copies are created upon the original orientation of edges, we assume that $(i,a), (f,i) \in \VE$ holds and that these were reversed for $\VGbfs$ depicted in Figure~\ref{fig:labeling-example} on Page~\pageref{fig:labeling-example}. 
%The figure does not show the variables for the mapping of $i\in \VV$ on $i_2 \in \SV$ as this is mirror-symmetrical.
\newline
The edges and nodes are to be read as `flows' for which flow preservation holds. The directions of the edges shall help the reader to follow how the node mapping variables of `incoming' edges trigger the node mapping variables of `outgoing' edges. The connections on the left are due to Constraint~\ref{LP:novel:incoming-edges-to-gamma-variables} and the connections on the right are due to Constraint~\ref{LP:novel:gamma-to-outgoing-edges}. Note that the dashed edges on the left will be 0 due to Constraint~\ref{LP:novel:forbidding-nodes-in-sub-lps}: in the index of the respective sub LP, the virtual node $i$ is mapped onto $i_2$ and hence the respective dashed variables are set to $0$.
}
\label{fig:lp-formulation-node-variables}
\end{figure}

In the following, we argue how `flows' are induced and accordingly how solutions to the formulation can be decomposed. Figure~\ref{fig:lp-formulation-node-variables} visualizes the workings of Constraints~\ref{LP:novel:gamma-to-outgoing-edges} to \ref{LP:novel:forbidding-nodes-in-sub-lps}.
Assuming that $x_{\req} > 0$ holds, then by Constraint~\ref{LP:novel:node-embedding} there will exist a substrate node $u \in \VVloc[\VVroot]$ onto which the root $\VVroot$ is placed, i.e. $y^u_{\req,\VVroot} > 0$ holds. Constraint~\ref{LP:novel:node-to-sub-node-mapping} distributes the quantity $y^u_{\req,\VVroot}$ over the sub LP node mapping variables while Constraint~\ref{LP:novel:gamma-to-outgoing-edges} ensures that these node mapping variables agree with each other. Due to the validity of the MCF Formulation~\ref{alg:VNEP-IP-old}, by setting the node mapping variable for one of the endpoints of the edge graph $\VGe$, the node mapping variables of the other endpoint of $\VGe$ must be set accordingly. On the other hand, Constraint~\ref{LP:novel:incoming-edges-to-gamma-variables} ensures that any incoming edge (according to the extraction order) agrees with the respective node bag variables and hence force the further distribution of `flows'. 
The correctness of the formulation then mostly follows from the following observations: (i) Based on the acylicity of the extraction order and the fact that all nodes can be reached from the root $\VVroot$, `flow' is distributed throughout the whole request graph. (ii) A novel edge label $j$ is introduced only exactly once according to Lemma~\ref{lem:labels:unique-root-for-label}, namely at the root $s_j \in \VV$ and hence only at node $s_j$ the a priori mapping of $j$ is fixed. (iii) For any confluence $\VEbfsAC[k][i]$ with target $i$ all incoming edges of $i$ are itself labeled by $i$ (cf.~Lemma~\ref{lem:labels:in-edge-labeled-same}). Accordingly, considering a mapping $\mapVedge \in \MappingSpace[\VEbfsLabelsOrig]$ of an incoming edge $e \in \deltaMinusA[i]$, Constraint~\ref{LP:novel:forbidding-nodes-in-sub-lps} explicitly forbids node placements of $i$ to nodes $\VVloc \setminus \{\mapVedge(i)\}$ in the respective sub LPs. Hence, incoming edges of node $i$ labeled by $\mapVedge$ can only map $i$ to $\mapVedge(i) \in \VVloc$.

\vspace{-6pt}
\begin{restatable}{theorem}{thmSizeOfFormulation}
\label{thm:size-of-formulation-is-exponential-in-ex-width}
Considering specific extraction orders $\VGbfs$ for each request $\req \in \requests$, the size of the novel decomposable Formulation~\ref{IP:novel-AC} is bounded by $\mathcal{O}(\sum_{\req \in \requests} |\SG|^{\ewX(\VGbfs)} \cdot |\VG|)$.
\end{restatable}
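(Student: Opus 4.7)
The plan is to bound the number of variables and constraints of Formulation~\ref{IP:novel-AC} by accounting separately for (i) the sub-LP copies created in line~\ref{LP:novel:subformulations}, (ii) the global node/embedding/allocation variables, and (iii) the bag variables $\gamma^u_{\req,i,b,a}$; then the overall size follows by routine addition. The crucial combinatorial observation is that $|\VEbfsLabels| \leq \ewX(\VGbfs) - 1$ and $|\VEbfsLabelsBag| \leq \ewX(\VGbfs) - 1$ for every edge $e$ and every bag $\VEbfsBag$, simply by definition of the extraction width. Consequently $|\MappingSpace[\VEbfsLabelsOrig]| \leq |\SV|^{\ewX(\VGbfs)-1}$ and $|\MappingSpace[\VEbfsLabelsBag]| \leq |\SV|^{\ewX(\VGbfs)-1}$, which will be the source of the $|\SG|^{\ewX(\VGbfs)-1}$ blow-up; one additional factor of $|\SG|$ comes from the MCF formulation itself (which has $\Theta(|\SV|+|\SE|)$ variables/constraints for a single virtual edge).

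First I would handle the sub-LP block~\eqref{LP:novel:subformulations}. For each request $\req$ and each $e \in \VE$, Formulation~\ref{alg:VNEP-IP-old} instantiated on $\VGe$ has $\mathcal{O}(|\SV|+|\SE|) = \mathcal{O}(|\SG|)$ variables and constraints, since $|\VVe|=2$ and $|\VEe|=1$. Multiplied by the $|\MappingSpace[\VEbfsLabelsOrig]| \leq |\SV|^{\ewX(\VGbfs)-1}$ copies and summed over $|\VE|$ edges, this contributes $\mathcal{O}(|\VG|\cdot|\SG|^{\ewX(\VGbfs)})$ per request. Next I would count the global variables $x_\req$, $y^u_{\req,i}$, $a^{x,y}_\req$ and the bag variables $\gamma^u_{\req,i,b,a}$; the first three are clearly polynomial in $|\VG|$ and $|\SG|$, while the bag variables are bounded by $|\VV|\cdot|\SV|\cdot\sum_{i\in\VV}|\VEbfsBags|\cdot|\SV|^{\ewX(\VGbfs)-1}$; since the bags of a node $i$ partition $\deltaPlusA$, we have $\sum_{i\in\VV}|\VEbfsBags|\leq|\VEbfs|=|\VE|$, so this block is also $\mathcal{O}(|\VG|\cdot|\SG|^{\ewX(\VGbfs)})$.

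The remaining task is to bound the coupling constraints~\eqref{LP:novel:node-embedding}--\eqref{LP:novel:capacities}. Constraints~\eqref{LP:novel:node-embedding}, \eqref{LP:novel:node-load}, \eqref{LP:novel:edge-load}, \eqref{LP:novel:capacities} are polynomial in $|\VG|$ and $|\SG|$ and hence subsumed. Constraint~\eqref{LP:novel:node-to-sub-node-mapping} has at most $|\VV|\cdot|\SV|\cdot|\VE|$ instances, each of size $\mathcal{O}(|\SV|^{\ewX(\VGbfs)-1})$. Constraint~\eqref{LP:novel:gamma-to-outgoing-edges} has at most $|\VV|\cdot|\SV|\cdot|\VE|\cdot|\SV|^{\ewX(\VGbfs)-1}$ instances, and the right-hand side of each is a sum of at most $|\SV|^{\ewX(\VGbfs)-1}$ terms. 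The bound for Constraints~\eqref{LP:novel:incoming-edges-to-gamma-variables} and \eqref{LP:novel:forbidding-nodes-in-sub-lps} is analogous, using again $|\VEbfsLabelsBag|\leq\ewX(\VGbfs)-1$. Each of these contributions is therefore $\mathcal{O}(|\VG|\cdot|\SG|^{\ewX(\VGbfs)})$ per request. Summing and taking the union over requests yields the claimed bound.

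The only potentially delicate step is making sure that no constraint hides a double exponent in $\ewX(\VGbfs)$; in particular, in Constraints~\eqref{LP:novel:gamma-to-outgoing-edges} and \eqref{LP:novel:incoming-edges-to-gamma-variables} one might fear a factor of $|\SV|^{|\VEbfsLabelsBag|}\cdot|\SV|^{|\VEbfsLabels|}$. I expect this to be the main obstacle in the write-up. It is avoided by observing that the outer ranges are over $\mapVedge$ or over the fixed restriction $\mapVinter$ on the intersection $\VEbfsLabelsEdgeCapBag$, so each constraint is indexed by at most one mapping of size $\leq \ewX(\VGbfs)-1$, and the inner sum is bounded by the number of extensions, which is again at most $|\SV|^{\ewX(\VGbfs)-1}$. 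With these bookkeeping details in place the total size is $\mathcal{O}(\sum_{\req \in \requests} |\SG|^{\ewX(\VGbfs)} \cdot |\VG|)$, as claimed.
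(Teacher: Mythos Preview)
Your approach is essentially the same as the paper's: observe that $|\VEbfsLabels|,|\VEbfsLabelsBag|\leq\ewX(\VGbfs)-1$ by definition and then count sub-LP copies and bag variables; the paper's proof is much terser and simply asserts that the $\gamma$-block and Constraints~\eqref{LP:novel:gamma-to-outgoing-edges}--\eqref{LP:novel:forbidding-nodes-in-sub-lps} dominate with size $\mathcal{O}(|\VV|\cdot|\SV|^{\ewX(\VGbfs)})$ per request.

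One small point to tighten: in your final paragraph you bound the outer index (over $\mapVedge$) and the inner sum (over extensions) each by $|\SV|^{\ewX-1}$, but multiplying these would give $|\SV|^{2(\ewX-1)}$, not the claimed $|\SG|^{\ewX}$. The correct observation is that for fixed $i,u,b,e$ the pair (outer $\mapVedge$, inner extension to $\mapVbag$) is in bijection with $\MappingSpace[\VEbfsLabelsBag]$ itself, so the \emph{total} number of terms across all instances of Constraint~\eqref{LP:novel:gamma-to-outgoing-edges} is $|\SV|\cdot\sum_{i,b,e\in\VEbfsBag}|\MappingSpace[\VEbfsLabelsBag]|\leq|\SV|\cdot|\VE|\cdot|\SV|^{\ewX-1}$, and similarly for Constraint~\eqref{LP:novel:incoming-edges-to-gamma-variables}. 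Also, the stray leading $|\VV|$ factors in your counts of the bag variables and of Constraint~\eqref{LP:novel:gamma-to-outgoing-edges} are redundant (the sum over $i$ is already implicit in $\sum_{i}|\VEbfsBags|$ respectively in $|\VE|$). With these two bookkeeping corrections your argument goes through and matches the paper's bound.
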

\begin{proof}
Consider a single request $\req \in \requests$ and a fixed extraction order $\VGbfs$. There are at most $\ewX(\VGbfs)-1$ many sub LPs for each edge $e \in \VEbfs$, as $|\VEbfsLabels| \leq \ewX(\VGbfs) - 1$ holds by definition.
Otherwise, the formulation's size is dominated by the node bag mapping variables  $\gamma^u_{\req,i,b,a}$  and the respective Constraints~\ref{LP:novel:gamma-to-outgoing-edges} - 
\ref{LP:novel:forbidding-nodes-in-sub-lps}. Since the bags partition the outgoing edges and encode all potential mappings of the respective label set $\VEbfsLabelsBag$ while including the mapping location of the respective virtual node $i \in \VV$, the size of the respective formulation parts is bounded by $\mathcal{O}(|\VV| \cdot |\SV|^{\ewX(\VGbfs)} )$. The result is obtained by summing over the requests.
\end{proof}

\begin{figure}[p!]
\centering
\scalebox{0.92}{
\begin{minipage}{1.07\columnwidth}

%\begingroup
\removelatexerror

\begin{algorithm*}[H]

\SetKwInOut{Input}{Input}\SetKwInOut{Output}{Output}
\SetKwFunction{ProcessPath}{ProcessPath}{}{}
\SetKwFunction{reverse}{reverse}{}{}
\SetKwFunction{LP}{LP}
\SetKwFunction{LP}{LP}

\newcommand{\SET}{\textbf{set~}}
\newcommand{\ADD}{\textbf{add~}}
\newcommand{\EACH}{\textbf{each~}}
\newcommand{\DEFINE}{\textbf{define~}}
\newcommand{\AND}{\textbf{and~}}
\newcommand{\LET}{\textbf{let~}}
\newcommand{\WITH}{\textbf{with~}}
\newcommand{\COMPUTE}{\textbf{compute~}}
\newcommand{\FIND}{\textbf{find~}}
\newcommand{\CHOOSE}{\textbf{choose~}}
\newcommand{\DECOMPOSE}{\textbf{decompose~}}
\newcommand{\FORALL}{\textbf{for all~}}
\newcommand{\OBTAIN}{\textbf{obtain~}}
\newcommand{\WITHPROBABILITY}{\textbf{with probability~}}

\Input{Request $\req \in \requests$, extraction order $\VGbfs$, solution to Formulation~\ref{IP:novel-AC}}
\Output{Convex combination~$\PotEmbeddings = \{\decomp = (\prob,\mapping)\}_k$ of valid mappings}
\BlankLine

%	\COMPUTE {arborescence }$\VGbfs = (\VV,\VEbfs, \VVroot)$ of the undirected representation of $\VG$\\

	\SET $\PotEmbeddings  \gets \emptyset$ \AND $k \gets 1$\\
	\While{$x_{\req} > 0$ }
	{
		
		\SET $\mapping = (\mapV,\mapE)~\gets (\emptyset,\emptyset)$ \label{alg:sc-decomposition:init-maps}\\
		
		\SET $\Queue \gets \{\VVroot \}$ \\
		
		\CHOOSE $u \in \VVloc[\VVroot]$ \WITH $y^u_{\req,\VVroot} > 0$ \AND \SET $\mapV(\VVroot)~\gets u$\\
		
		\While{$|\Queue| > 0$}{	\label{alg:decomposition:begin-while-q}
			\CHOOSE $i \in \Queue$ \AND \SET$\Queue \gets \Queue \setminus \{i\}$\\
			\ForEach{$\VEbfsBag \in \VEbfsBags$}{
				\LET $M^V = (\mapV)^{-1}(\SV)$ denote the already mapped nodes\\
				\CHOOSE $\mapVbag \in \MappingSpace[\VEbfsLabelsBag]$, s.t. $\gamma^{\mapV(i)}_{\req,i,b,a} > 0$ \AND $\restrict[\mapVbag][\VEbfsLabelsBag \cap M^V] = \restrict[\mapV][\VEbfsLabelsBag \cap M^V]$ \\
				\SET $\mapV(j) \gets \mapVbag(j)$ \FORALL $j \in \VEbfsLabelsBag \setminus M^V$\\
				\ForEach{$e=(i,j) \in \VEbfsBag$}{
					\eIf{$(i,j) = \origEdge[i,j]$}{
	%					\COMPUTE $\overrightarrow{P}^{u,v}_{\req,i,j}$, s.t. $\mapV(i)=u$, $v \in \VVloc[j]$ and $z^{u',v'}_{\req,i,j} > 0$ for $(u',v') \in \overrightarrow{P}^{u,v}_{\req,i,j}$\\
						\COMPUTE path ${P}^{u,v}_{\req,i,j}$ from $\mapV(i)=u$ to  $v \in \VVloc[j]$ according to Lemma~\ref{lem:local-connectivity-property}\\
					\pushline\pushline  \nonl s.t. 
					$\begin{array}{rl}
					\subLP[y^v_{\req,j}][(i,j),\restrict[\mapV][\VEbfsLabelsOrig]] > 0 & \textnormal{\AND} \\[2pt]
					\subLP[z^{u',v'}_{\req,i,j}][(i,j),\restrict[\mapV][\VEbfsLabelsOrig]] > 0 & \textnormal{\FORALL } (u',v') \in {P}^{u,v}_{\req,i,j}
					\end{array}$\\
					  					\vspace{2pt}
					  \popline\popline
					  
						\SET $\mapE(i,j) \gets {P}^{u,v}_{\req,i,j}$ \AND \textbf{if} $\mapV(j) = \emptyset$ \textbf{then} $\mapV(j) \gets v$\\
%												}
						
%						\If{$\mapV(j) = \emptyset$}{
%							$\mapV(j) \gets v$\\
%						}
					}{
	%					\LET $z^{v',u'}_{\req,i,j} \triangleq z^{u',v'}_{\req,j,i}$ \FORALL $(u',v') \in \SE$\\
	%					\COMPUTE $\overleftarrow{P}^{v,u}_{\req,i,j}$, s.t. $\mapV(i)=v$, $u \in \VVloc[j]$ and $z^{v',u'}_{\req,i,j} > 0$ for $(v',u') \in \overleftarrow{P}^{v,u}_{\req,i,j}$\\
						\COMPUTE path ${P}^{v,u}_{\req,j,i}$ from $v \in \VVloc[j]$ to $\mapV(i)=u$ according to Lemma~\ref{lem:local-connectivity-property}\\
						\pushline \pushline \nonl s.t.
						$\begin{array}{rl}
							\subLP[y^v_{\req,j}][\origEdge[i,j],\restrict[\mapV][\VEbfsLabels]] > 0 & \textnormal{\AND} \\[2pt]
							\subLP[z^{u',v'}_{\req,j,i}][\origEdge[i,j],\restrict[\mapV][\VEbfsLabels]] > 0 & \textnormal{\FORALL } (u',v') \in {P}^{u,v}_{\req,j,i}
						\end{array}$\\
						\vspace{2pt}
						\popline\popline

						\SET $\mapE(\origEdge[i,j]) \gets {P}^{u,v}_{\req,j,i}$ \AND \textbf{if} $\mapV(j) = \emptyset$ \textbf{then} $\mapV(j) \gets v$\\

%						\SET $\mapE(\origEdge[i,j]) \gets {P}^{u,v}_{\req,j,i}$\\
%						\If{$\mapV(j) = \emptyset$}{
%							$\mapV(j) \gets v$\\
%						}

					}
					\If{$\mapE(\origEdge[e]) \neq \emptyset$ \textnormal{\FORALL}$e \in \deltaMinusA[j]$}{
						\SET $\mathcal{Q} \gets \mathcal{Q} \cup \{j\}$\\
					}
				}
			}
		}
		\SET $\mathcal{V}_k \gets 
		\left( \begin{array}{ll}
						& \{x_\req\}  \cup  \{y^u_{\req,i} ~|~ i \in \VV, u=\mapV(i)\} \\
				\cup 	& \{~~\,\subLP[x_{\req}][e,\restrict[\mapV][\VEbfsLabelsOrig]] ~|~ e \in \VE \} \\
				\cup 	& \{~\,\subLP[y^u_{\req,i}][e,\restrict[\mapV][\VEbfsLabelsOrig]] ~|~ e \in \VE, i \in \VVe, u=\mapV(i) \}\\
				\cup  	& \{\subLP[z^{u,v}_{\req,i,j}][e,\restrict[\mapV][\VEbfsLabelsOrig]]~|~ e=(i,j) \in \VE, (u,v) \in \mapE(i,j)\} \\
				\cup 	& \{\gamma^{u}_{\req,i,b,a}~|~ i \in \VV, u = \mapV(i), \VEbfsBag \in \VEbfsBags, \mapVbag=\restrict[\mapV][\VEbfsLabelsBag]\} 
				\end{array}  \right)$ \label{alg:decomposition:compute-Vk}\\
		\SET $\prob \gets \min \Variables$ \label{alg:decomposition:computing-prob} \\
		\SET $v \gets v - \prob$ \FORALL $v \in \Variables$ \label{alg:decomposition:adapt-variables-one}\\
		\SET $a^{x,y}_{\req} \gets a^{x,y}_{\req} - \prob \cdot A(\mapping,x,y)$ \FORALL $(x,y) \in \SR$\label{alg:decomposition:adapt-load-one}\\	
		\ForEach{$(i,j) \in \VE$ \textnormal{\AND \EACH} $(x,y) \in \{(\Vtype(i),i), (\Vtype(j),j), (i,j)\}$}{
			\SET $\subLP[a^{x,y}_{\req}][(i,j),\restrict[\mapV][\VEbfsLabelsOrig]] \gets \subLP[a^{x,y}_{\req}][\restrict[\mapV][\VEbfsLabelsOrig]] - \prob \cdot A(\mapping,x,y)$ \label{alg:decomposition:adapt-load}\\
		}
		\ADD $\decomp = (f^k_{\req},m^k_{\req})$ to $\PotEmbeddings$ \AND \SET $k \gets k + 1$\\
	}

\KwRet{$\PotEmbeddings$}
\caption{Decomposition algorithm for solutions to the novel LP formulation~\ref{IP:novel-AC}}
\label{alg:decompositionAlgorithm-Novel-AC}
\end{algorithm*}
%\endgroup
\end{minipage}}
\end{figure}

\subsection{Decomposition Algorithm for the Novel LP Formulation}

We now formally present the decomposition algorithm (see Algorithm~\ref{alg:decompositionAlgorithm-Novel-AC}) and prove its correctness.  The algorithm builds on the ideas of the decomposition algorithm for the MCF Formulation~\ref{alg:VNEP-IP-old} presented in Section~\ref{sec:decomposing-mcf-solution-trees}.

Fixing the mapping of the root initially, mappings for the outgoing edges (with respect to the extraction order) are extracted again together with the mappings of the heads of these edges using Lemma~\ref{lem:local-connectivity-property}. However, as the edge embeddings are computed using a copy of the MCF formulation for each node mapping function of the edge's labels, the mapping of the edge's labels to substrate nodes must be fixed first. To this end, we employ the node mapping variables $\gamma^u_{\req,i,b,a}$ of the edge bags. Concretely, whenever the outgoing edges of $i \in \VV$ are mapped, we show that given the mapping of the virtual node $i$ onto some substrate node $\mapV(i)=u$, we can always find a variable $\gamma^u_{\req,i,b,a}$ for edge bag $\VEbfsBag \in \VEbfsBags$ and $\mapVbag \in \MappingSpace[\VEbfsLabelsBag]$, such that (i) the mapping of the bag $\mapVbag$ agrees with the previous node mappings, i.e., $\mapVbag(i') = \mapV(i')$ holds for all previously mapped virtual nodes $i'$, and that (ii) the variable $\gamma^u_{\req,i,b,a}$ is strictly greater 0. Given such a variable $\gamma^u_{\req,i,b,a}$ and fixing the node mappings of all the new labels contained in $\VEbfsLabels$ according to $\mapVbag$, mappings for the outgoing edges can always be extracted due to Constraint~\ref{LP:novel:gamma-to-outgoing-edges}. Concretely, after having fixed the mappings of the labels of the outgoing edge $e$, Constraint~\ref{LP:novel:gamma-to-outgoing-edges} ensures that for the sub LP with index $\restrict[\mapV][\VEbfsLabels]$ the condition $y^u_{\req,i} > 0$ holds, thereby allowing the application of Lemma~\ref{lem:local-connectivity-property} to extract the mapping for the edge $e$. Having given this intuition, we now formally prove its correctness.

\begin{restatable}{theorem}{decomposabilityNovelFormulation}
\label{thm:decomposability-novel-formulation}
For a request $\req \in \requests$ and its extraction order $\VGbfs$, a solution to Formulation~\ref{IP:novel-AC} can be decomposed into $\PotEmbeddings =\{(\prob,\mapping)\}_k$ in time $\mathcal{O}(|\SG|^{2\cdot \ewX(\VGbfs)+1}  \cdot |\VG|^2)$, such that:
\begin{itemize}
\item The decomposition is complete, i.e., $x_{\req} = \sum_k \prob$ holds.
\item Allocations are bounded by $\vec{a}_{\req}$, i.e., $a^{x,y}_{\req} \geq \sum_{(\prob,\mapping)} \prob \cdot A(\mapping,x,y)$ holds for $(x,y) \in \SR$.
\end{itemize}
\end{restatable}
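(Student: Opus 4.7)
The plan is to induct over the iterations $k$ of the outer while loop of Algorithm~\ref{alg:decompositionAlgorithm-Novel-AC}. In each iteration the goal is to show that a valid mapping $\mapping$ of weight $\prob=\min\mathcal{V}_k>0$ is produced, and that the residual LP solution after lines~\ref{alg:decomposition:adapt-variables-one}--\ref{alg:decomposition:adapt-load} remains feasible for Formulation~\ref{IP:novel-AC}. Once this is established, the three deliverables of the theorem follow exactly as in Lemma~\ref{lem:decomposability-mcf-trees}: at least one variable of $\mathcal{V}_k$ drops to zero, so the loop terminates; $x_{\req}$ is reduced by precisely the weight of the mapping added to $\PotEmbeddings$, giving completeness $\sum_k \prob=x_{\req}$; and the load update in line~\ref{alg:decomposition:adapt-load-one} yields the allocation bound.

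The technical core is to show that a single iteration succeeds, which I would argue by induction along the extraction order $\VGbfs$. For the base case, $x_{\req}>0$ together with Constraint~(\ref{LP:novel:node-embedding}) produces an initial mapping of the root $\VVroot$. For the inductive step I maintain the invariant that, when a virtual node $i$ with $\mapV(i)=u$ is dequeued, (a)~$y^u_{\req,i}>0$ in the global variables, and (b)~for every label $j$ already committed in $\mapV$, the committed location $\mapV(j)$ is consistent with all sub LP and $\gamma$ variables touched so far. The decisive step is then to exhibit, for each edge bag $\VEbfsBag\in\VEbfsBags$ of $i$, a mapping $\mapVbag\in\MappingSpace[\VEbfsLabelsBag]$ with $\gamma^u_{\req,i,b,a}>0$ that agrees with $\mapV$ on the already-mapped labels; I expect this to be the main obstacle.

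To overcome it I would invoke Lemmas~\ref{lem:labels:unique-root-for-label} and~\ref{lem:labels:in-edge-labeled-same}. By the uniqueness of a label's root, any label $j\in\VEbfsLabelsBag$ that has already been committed by $\mapV$ must either be $i$ itself (trivially compatible, since $\gamma^u_{\req,i,b,a}$ is indexed by $u=\mapV(i)$) or must have been fixed when traversing an incoming edge of $i$; by Lemma~\ref{lem:labels:in-edge-labeled-same} all incoming edges of $i$ carry the same label set, so the committed portion of $\mapV$ on $\VEbfsLabelsBag$ comes from one consistent source. Constraint~(\ref{LP:novel:incoming-edges-to-gamma-variables}) then couples the positive sub LP node variables that carried flow into $i$ with positive $\gamma$-variables that agree with these commitments, yielding the required $\mapVbag$. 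After committing the new label mappings according to $\mapVbag$, Constraint~(\ref{LP:novel:gamma-to-outgoing-edges}) transfers positivity into the sub LP indexed by $(e,\restrict[\mapVbag][\VEbfsLabelsOrig])$, and Lemma~\ref{lem:local-connectivity-property} applied inside that sub LP on the single-edge graph $\VGe$ produces a substrate path realizing $e$, with orientations handled exactly as in Algorithm~\ref{alg:decompositionAlgorithm-MCF-Tree}. Constraint~(\ref{LP:novel:forbidding-nodes-in-sub-lps}) then forces the other endpoint $j$ to be mapped to $\mapVbag(j)$ whenever $j\in\VEbfsLabels$, preserving the invariant.

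The remaining checks are routine. Subtracting $\prob$ from every variable in $\mathcal{V}_k$ preserves the equalities (\ref{LP:novel:node-embedding})--(\ref{LP:novel:edge-load}) because each such equality has its touched variables reduced by matching amounts on both sides, while the capacity Constraint~(\ref{LP:novel:capacities}) only becomes slacker. For the runtime, the number of outer iterations is bounded by the number of LP variables, which by Theorem~\ref{thm:size-of-formulation-is-exponential-in-ex-width} is $\mathcal{O}(|\SG|^{\ewX(\VGbfs)}\cdot|\VG|)$; each iteration requires, per virtual node, a scan of $\mathcal{O}(|\SG|^{\ewX(\VGbfs)})$ bag variables and a breadth-first search of cost $\mathcal{O}(|\SE|)$. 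Multiplying these factors yields the claimed $\mathcal{O}(|\SG|^{2\ewX(\VGbfs)+1}\cdot|\VG|^2)$ bound.
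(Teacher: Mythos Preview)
Your proof plan is correct and follows essentially the same route as the paper: both arguments identify the existence of a suitable $\gamma^u_{\req,i,b,a}>0$ agreeing with the already-committed labels as the crux, establish this via Lemmas~\ref{lem:labels:in-edge-labeled-same} and~\ref{lem:labels:unique-root-for-label} together with Constraint~(\ref{LP:novel:incoming-edges-to-gamma-variables}), and then push positivity through Constraint~(\ref{LP:novel:gamma-to-outgoing-edges}) into the relevant sub LP so that Lemma~\ref{lem:local-connectivity-property} applies. One small imprecision: the case $j=i$ you single out cannot occur, since an outgoing edge of $i$ in the acyclic extraction order can never be labeled by $i$; and the phrase ``fixed when traversing an incoming edge of $i$'' should more precisely be ``is a label of every incoming edge of $i$'' (the label may have been committed much earlier, but the containment $\VEbfsLabelsBag\cap M^V\subseteq\VEbfsLabels[(k,i)]$ is what you need for Constraint~(\ref{LP:novel:incoming-edges-to-gamma-variables}) to bite).
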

\begin{proof}
We prove that each iteration yields a valid mapping $\mapping$ of value $\prob > 0$.

First, note that if in Line~10 a suitable mapping $\mapVbag$ was found, such that the respective edge bag variable $\gamma^{\mapV(i)}_{\req,i,b,a}$ is positive, then the requirement of Lemma~\ref{lem:local-connectivity-property} that $\subLP[y^{\mapV(i)}_{\req,i}][e,\restrict[\mapV][\VEbfsLabels]]>0$ holds is always satisfied due to Constraint~\ref{LP:novel:gamma-to-outgoing-edges}.

The initial mapping of the root in Line~5 is always possible due to Constraints~\ref{LP:novel:node-embedding}. Furthermore, when considering the edge bags of the root $\VVroot$, there will always exist a suitable edge bag variable $\gamma^{\mapV(\VVroot)}_{\req,\VVroot,b,a} > 0$ to choose from due to Constraints~\ref{LP:novel:node-to-sub-node-mapping} and \ref{LP:novel:gamma-to-outgoing-edges}. 

Having chosen a suitable mapping for the labels of the edge bag, the extraction of mappings for the outgoing edges $e \in \VEbfsBag$ is always feasible, as Constraint~\ref{LP:novel:gamma-to-outgoing-edges} induces $y^{\mapV(\VVroot)}_{\req,\VVroot} > 0$ in the respective sub LPs. Also note that the application of Lemma~\ref{lem:local-connectivity-property} safeguards that the head $j$ is mapped positively on some substrate node $i$, i.e. we have $\subLP[y^v_{\req,j}][\restrict[\mapV][\VEbfsLabels]]\,>\,0$.

Given the initial validity of the mapping of the root and its outgoing edges, assume now for the sake of contradiction that the extraction process fails at some point in time. Concretely, we consider the first point in time at which the constructed (partial) mapping $\mapping=(\mapV,\mapE)$ is not valid anymore or at which the choose operation in Line~10 fails. 

We first consider the case that the mapping $\mapping$ is not valid (anymore), such that the mapping of an edge $e=(i,j)$ fails to start at $\mapV(i)$ or fails to lead to $\mapV(j)$. Edges are only mapped in Lines~15 and 18 and we consider w.l.o.g. that the algorithm fails in Line~15. For this type of failure to happen, the node $j$ must have been mapped before as the node $j$ is otherwise validly mapped by the same line of the pseudocode. As $j$ can only be mapped multiple times if $j$ is itself a label and all incoming edges of a node share the same labels (see Lemma~\ref{lem:labels:in-edge-labeled-same}), the edge $(i,j)$ must have been labeled by $j$. Let $v'$ denote the substrate node in which the path $\mapE(i,j)$ ends and for which $v'\neq \mapV(i)$ holds. As stated in the beginning of the proof, the requirement of Lemma~\ref{lem:local-connectivity-property} is always valid (if a suitable mapping $\mapVbag$ was found) and hence, by applying Lemma~\ref{lem:local-connectivity-property} we obtain $\subLP[y^{v'}_{\req,j}][e,\restrict[\mapV][\VEbfsLabelsOrig]] > 0$. However, Constraint~\ref{LP:novel:forbidding-nodes-in-sub-lps} clearly forbids the use of this node $v'$ as it does not equal $\mapV(i)$ by setting $\subLP[y^{v'}_{\req,j}][e,\restrict[\mapV][\VEbfsLabelsOrig]] = 0$. This is a contradiction, and the only option for the extraction process to fail is hence due to an infeasible choose operation in Line 10.

As argued in the beginning of the proof, the choose operation may not fail for the root. Hence, the node $i \in \VV$ for which Line~10 fails, is not the root and has been reached by at least one incoming edge $(k,i) \in \VEbfs$. Assume that the choose operation fails for a specific edge bag $\VEbfsBag \in \VEbfsBags$.

We first show that $\VEbfsLabelsBag \cap M^V \subseteq \VEbfsLabels[(k,i)]$ holds. Assume for the sake of contradiction, that there exists some label $l \in \VEbfsLabelsBag \cap M^V$ such that $l \notin \VEbfsLabels[(k,i)]$. As the label $l$ is contained in $M^V$, a mapping was decided for $l$ at some other node $s_l \in \VV$. In particular, Lemma~\ref{lem:labels:unique-root-for-label} specifies that the node $s_l$ is the unique root of all the confluences towards $l$, such that any other node with an edge being labeled by $l$ must be reachable from $s_l$. However, as all incoming labels agree on their labels (see Lemma~\ref{lem:labels:in-edge-labeled-same}) and no incoming edge of the node $i$ hence lies on a confluence with target $l$, we must have $i=s_l$. In this case however, the node mapping of $l$ cannot have been decided before as the algorithm only fixes these node mappings once the choose operation was executed at the respective node $s_l$. Hence, $\VEbfsLabelsBag \cap M^V \subseteq \VEbfsLabels[(k,i)]$ holds.

As all previous mapping steps have been valid and the mappings were obtained by the application of Lemma~\ref{lem:local-connectivity-property}, we know that $\subLP[y^{u}_{\req,i}][\origEdge[(k,i)],\restrict[\mapV][\VEbfsLabels[(k,i)]]] > 0$ holds for $u=\mapV(i)$. Consider now the particular mapping $\mapVedge[(k,i)] = \restrict[\mapV][\VEbfsLabelsEdgeCapBag[b][(k,i)]]$ which is well-defined, as all labels of the incoming edge $(k,i)$ must have been fixed before extracting the mapping of this edge. From the validity of Constraint~\ref{LP:novel:incoming-edges-to-gamma-variables} we obtain that
$\sum_{\mapVbag \in \MappingSpace[\VEbfsLabelsBag]: \restrict[\mapVbag][\VEbfsLabelsEdgeCapBag] =  \mapVedge[(k,i)]} \gamma^u_{\req,i,b,a} > 0$ holds, as $\subLP[y^{u}_{\req,i}][\origEdge[(k,i)],\restrict[\mapV][\VEbfsLabels[(k,i)]]]$ is larger than 0.
As $\VEbfsLabelsBag \cap M^V \subseteq \VEbfsLabels[(k,i)]$ holds, we know that $\sum_{\mapVbag \in \MappingSpace[\VEbfsLabelsBag]: \restrict[\mapVbag][\VEbfsLabelsBag \cap M^V] = \restrict[\mapV][\VEbfsLabelsBag \cap M^V]} \gamma^u_{\req,i,b,a} > 0$ holds, since the restriction in the sum's index has been loosened. Hence, the choose operation in Line~10 can always be successfully executed and the mapping constructed in the $k$-th iteration will always be valid.
 
It is easy to check that the claims with respect to the completeness and the bounds by the load variables also hold: the mapping is always covered by respective mapping variables in $\mathcal{V}_k$ and as the load is computed as a function of these mapping variables, the extracted fractional resource allocations are also bounded by $\vec{a}_{\req}$.

\newcommand{\CHOOSE}{\textbf{choose~}}
Lastly, every time a valid mapping is extracted, a mapping variable's value is set to $0$. As the formulation has size $\mathcal{O}(|\SG|^{\ewX(\VGbfs)} \cdot |\VG|)$ (cf. Theorem~\ref{thm:size-of-formulation-is-exponential-in-ex-width}) for request $\req$, and this also bounds the number of variables, at most $\mathcal{O}(|\SG|^{\ewX(\VGbfs)} \cdot |\VG|)$ many valid mappings may be recovered. For recovering a single valid mapping, the runtime can be  bounded by $\mathcal{O}(|\SG|^{\ewX(\VGbfs)+1} \cdot |\VG|)$, as the \CHOOSE operation in Line~10 is executed at most $|\VV|$ times and the path computations in Lines~14 and 17 can be implemented in time $\mathcal{O}(|\SE|)$ (cf.~Lemma~\ref{lem:local-connectivity-property}). Hence, the overall runtime of the decomposition algorithm is bounded by $\mathcal{O}(|\SG|^{2\cdot \ewX(\VGbfs)+1} \cdot |\VG|^2)$.

\end{proof}

\section{FPT-Approximations for the Virtual Network Embedding Problem}
\label{sec:approximation-via-randround}

As shown above, our novel LP Formulation~\ref{IP:novel-AC} is sufficiently strong, such that its solutions can be decomposed into convex combinations $\PotEmbeddings = \{(\prob,\map)\}_k$ for each request $\req \in \requests$. In this section we now apply randomized rounding~\cite{Raghavan-Thompson} on the decomposed solution to obtain fixed-parameter tractable tri-criteria approximations for the profit and the cost variant of the VNEP.

In the following, we cast the quality of the found solutions in terms of random variables to bound the respective probabilities of not finding a suitable solution. To this end, we employ the following well-known tail bounds.

\begin{theorem}[Chernoff-Bound \cite{dubhashi2009concentration}]
\label{thm:chernoff}
Let $X = \sum_{i = 1}^n X_i$ be a sum of $n$ independent random variables $X_i \in [0,1]$. Then $
\mathbb{P} \big( X \leq (1-\varepsilon)\cdot \mathbb{E}(X)~\big)~\leq exp(-\varepsilon^2\cdot \mathbb{E}(X)/2)$ holds for $0 < \varepsilon < 1$.
\end{theorem}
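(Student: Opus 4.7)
The plan is to follow the standard Chernoff exponential-moment argument. Since the event $X \leq (1-\varepsilon)\mathbb{E}(X)$ is equivalent to $e^{-tX} \geq e^{-t(1-\varepsilon)\mathbb{E}(X)}$ for any $t > 0$, Markov's inequality will yield
\[
\mathbb{P}\big(X \leq (1-\varepsilon)\mathbb{E}(X)\big) \;\leq\; e^{t(1-\varepsilon)\mathbb{E}(X)} \cdot \mathbb{E}\big(e^{-tX}\big),
\]
so the strategy is to bound the moment generating function of $X$ and then optimize over $t$.

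By independence, the MGF factors as $\mathbb{E}(e^{-tX}) = \prod_{i=1}^n \mathbb{E}(e^{-tX_i})$. Since $X_i \in [0,1]$ and $x \mapsto e^{-tx}$ is convex on $[0,1]$, I would invoke the secant-line bound $e^{-tx} \leq 1 - x(1-e^{-t})$ valid for $x \in [0,1]$. Taking expectations and applying $1+y \leq e^y$ gives $\mathbb{E}(e^{-tX_i}) \leq \exp(-\mathbb{E}(X_i)(1-e^{-t}))$, so that $\mathbb{E}(e^{-tX}) \leq \exp(-\mathbb{E}(X)(1-e^{-t}))$.

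Combining these estimates produces
\[
\mathbb{P}\big(X \leq (1-\varepsilon)\mathbb{E}(X)\big) \;\leq\; \exp\!\big(\mathbb{E}(X) \cdot [\,t(1-\varepsilon) - (1-e^{-t})\,]\big).
\]
Differentiating the exponent in $t$ yields the optimizer $t^\star = -\log(1-\varepsilon)$, which is strictly positive precisely because $0 < \varepsilon < 1$ (this is where the hypothesis is used). Substituting $t^\star$ collapses the bound to the standard closed form $\exp(\mathbb{E}(X)\cdot[(1-\varepsilon)\log(1-\varepsilon) + \varepsilon])$.

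The final and only delicate step is the elementary scalar inequality $(1-\varepsilon)\log(1-\varepsilon) + \varepsilon \leq -\varepsilon^2/2$ on $(0,1)$. I would prove this by Taylor expansion: from $\log(1-\varepsilon) = -\sum_{k\geq 1}\varepsilon^k/k$, a short algebraic rearrangement yields the identity $(1-\varepsilon)\log(1-\varepsilon) + \varepsilon = -\sum_{k\geq 2}\varepsilon^k/(k(k-1))$, whose leading term is $-\varepsilon^2/2$ and whose remaining terms are all non-positive. Plugging this estimate in delivers the claimed bound $\exp(-\varepsilon^2 \mathbb{E}(X)/2)$. I do not foresee a genuine obstacle in any of the steps — each ingredient (Markov, independence factorization, convex-line upper bound, MGF optimization, Taylor inequality) is textbook — the only point requiring care is confirming that $t^\star > 0$ so that the Markov application and the line-bound estimate remain valid, which is exactly what the restriction $\varepsilon < 1$ guarantees.
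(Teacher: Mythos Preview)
The paper does not prove this theorem; it is quoted as a standard tail bound from the cited reference and used as a black box. So there is no ``paper's proof'' to compare against, and your outline is the textbook exponential-moment argument that any reference would give.

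Your strategy is correct, but the last two displayed formulas carry a sign slip that you should clean up. Substituting $t^\star=-\log(1-\varepsilon)$ into $t(1-\varepsilon)-(1-e^{-t})$ gives $-(1-\varepsilon)\log(1-\varepsilon)-\varepsilon$, so the closed form is $\exp\!\big(-\mathbb{E}(X)\cdot[(1-\varepsilon)\log(1-\varepsilon)+\varepsilon]\big)$, not the expression without the leading minus. Correspondingly, the Taylor identity is
\[
(1-\varepsilon)\log(1-\varepsilon)+\varepsilon \;=\; +\sum_{k\ge 2}\frac{\varepsilon^{k}}{k(k-1)} \;\ge\; \frac{\varepsilon^{2}}{2},
\]
with all terms \emph{non-negative}, and the scalar inequality you actually need is $(1-\varepsilon)\log(1-\varepsilon)+\varepsilon \ge \varepsilon^{2}/2$, not $\le -\varepsilon^{2}/2$ (the latter is impossible since the left side is positive). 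Your two sign errors happen to cancel, so the final bound you reach is right, but the intermediate claims as written are false and should be corrected.
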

\begin{theorem}[Hoeffding's Inequality \cite{dubhashi2009concentration}]
\label{thm:hoeffding}
Given independent random variables $\{X_i\}_{i}$, s.t.~$X_i \in [a_i,b_i]$, then $\mathbb{P} \Big(\sum_i X_i - \mathbb{E}(\sum_i X_i))\geq t\Big) \leq exp \big(-2t^2 /~(\sum_i~(b_i - a_i)^2)\big)$ holds.
\end{theorem}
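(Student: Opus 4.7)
The plan is to follow the classical Cramér–Chernoff (exponential moment) method together with Hoeffding's Lemma. Let $Y_i = X_i - \mathbb{E}[X_i]$ and $Y = \sum_i Y_i$, so the $Y_i$ are independent, each supported on $[a_i - \mathbb{E}[X_i], b_i - \mathbb{E}[X_i]]$ (an interval of length $b_i - a_i$), and centered: $\mathbb{E}[Y_i] = 0$. We want to bound $\mathbb{P}(Y \geq t)$.

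First I would apply Markov's inequality to the exponential: for any parameter $s > 0$,
\[
\mathbb{P}(Y \geq t) = \mathbb{P}(e^{sY} \geq e^{st}) \leq e^{-st}\,\mathbb{E}[e^{sY}].
\]
Independence of the $Y_i$'s then factors the moment generating function,
\[
\mathbb{E}[e^{sY}] = \prod_i \mathbb{E}[e^{sY_i}],
\]
so the task reduces to bounding each single-variable MGF of a centered, bounded random variable.

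The key lemma (Hoeffding's Lemma) states that if $Z$ is centered with $Z \in [\alpha, \beta]$, then $\mathbb{E}[e^{sZ}] \leq \exp(s^2(\beta - \alpha)^2/8)$. I would prove this by writing $Z = \lambda \beta + (1-\lambda)\alpha$ with $\lambda = (Z - \alpha)/(\beta - \alpha) \in [0,1]$, applying convexity of $z \mapsto e^{sz}$ to obtain $e^{sZ} \leq \lambda e^{s\beta} + (1-\lambda)e^{s\alpha}$, taking expectations (using $\mathbb{E}[Z]=0$) to get a function $\varphi(s)$ of the form $\log(p e^{s\beta} + (1-p) e^{s\alpha})$ up to an affine term in $s$, and finally showing by computing $\varphi(0) = \varphi'(0) = 0$ and $\varphi''(s) \leq (\beta-\alpha)^2/4$ that $\varphi(s) \leq s^2(\beta-\alpha)^2/8$ via Taylor's theorem with remainder.

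Combining the three ingredients yields, for every $s > 0$,
\[
\mathbb{P}(Y \geq t) \leq \exp\!\Bigl(-st + \tfrac{s^2}{8}\sum_i (b_i - a_i)^2\Bigr).
\]
The final step is the straightforward univariate optimization: minimizing the exponent over $s > 0$ gives the optimum $s^\star = 4t / \sum_i (b_i - a_i)^2$, which substituted back produces the claimed tail bound $\exp(-2t^2/\sum_i (b_i - a_i)^2)$. The only nontrivial step is Hoeffding's Lemma itself; the rest is a mechanical application of Markov, independence, and one-dimensional calculus.
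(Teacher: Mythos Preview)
Your proposal is a correct and standard proof of Hoeffding's inequality via the Cram\'er--Chernoff method together with Hoeffding's Lemma; the argument as you have sketched it is complete and the optimization at the end is accurate. However, the paper does not actually prove this theorem at all: it is stated with a citation to \cite{dubhashi2009concentration} and used as a black-box tool in the subsequent approximation analysis, so there is no ``paper's own proof'' to compare against.
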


\subsection{Approximating the Profit Variant}

The pseudo-code of our approximation for the profit is presented as Algorithm~\ref{alg:rand-round-profit}. The algorithm first performs preprocessing in Lines 1-3 by removing all requests which cannot be fully (fractionally) embedded in the absence of other requests. As these requests cannot be fully embedded, these requests can never be part of any feasible solution and can hence be removed. In Lines 4-8 the randomized rounding scheme is applied: an LP solution to the Formulation~\ref{IP:novel-AC} is computed, decomposed and then rounded. The rounding procedure is iterated as long as the constructed solution is not of sufficient quality or until the number of maximal rounding tries is exceeded. Concretely, we seek $(\alpha,\beta,\gamma)$-approximate solutions which achieve at least a factor of $\alpha<1$ times the optimal (LP) profit and exceed node and edge capacities by at most factors of $\beta\geq1$ and $\gamma \geq1$, respectively. In the following we discuss the parameters $\alpha$, $\beta$, and $\gamma$ for which solutions can be found \emph{with high probability}.

 \begin{figure}[t!]
\scalebox{0.93}{
 \begin{minipage}{1.07\columnwidth}
 
 %\begingroup
 \removelatexerror
 
 \begin{algorithm*}[H]

 \SetKwInOut{Input}{Input}\SetKwInOut{Output}{Output}
 \SetKwFunction{ProcessPath}{ProcessPath}{}{}
 \SetKwFunction{reverse}{reverse}{}{}
 \SetKwFunction{LP}{LP}
 \SetKwFunction{LP}{LP}
 
 \newcommand{\SET}{\textbf{set~}}
 \newcommand{\ADD}{\textbf{add~}}
 \newcommand{\DEFINE}{\textbf{define~}}
 \newcommand{\AND}{\textbf{and~}}
 \newcommand{\LET}{\textbf{let~}}
 \newcommand{\WITH}{\textbf{with~}}
 \newcommand{\COMPUTE}{\textbf{compute~}}
  \newcommand{\CHECK}{\textbf{check~}}
    \newcommand{\REMOVE}{\textbf{remove~}}
     \newcommand{\REPEAT}{\textbf{repeat~}}
 \newcommand{\FIND}{\textbf{find~}}
 \newcommand{\CHOOSE}{\textbf{choose~}}
  \newcommand{\CHOOSi}{\textbf{choosing~}}
  \newcommand{\CONS}{{construct solution by~}}
 \newcommand{\DECOMPOSE}{\textbf{decompose~}}
 \newcommand{\FORALL}{\textbf{for all~}}
 \newcommand{\OBTAIN}{\textbf{obtain~}}
 \newcommand{\WITHPROBABILITY}{\textbf{with probability~}}
 
 \SetKwRepeat{Do}{do}{while}%
 
\ForEach(\tcp*[f]{preprocess requests}){$\req \in \requests$}{ 
	\COMPUTE LP solution to Formulation~\ref{IP:novel-AC} for the request set $\{\req\}$ maximizing $x_r$\\
	\REMOVE request $\req$ from the set $\requests$ if $x_r < 1$ holds
}
\COMPUTE LP solution to Formulation~\ref{IP:novel-AC} for request set $\requests$ maximizing $\sum_{\req \in \requests} \Vprofit \cdot x_{\req}$\\
\DECOMPOSE LP solution into convex combinations $\PotEmbeddings$ \FORALL $\req \in \requests$\\
\Do(\tcp*[f]{perform randomized rounding}){\textnormal{the solution is \emph{not} $(\alpha,\beta,\gamma)$-approximate and maximal rounding tries are not exceeded}}{
	\CONS \CHOOSi mapping $\mapping$ \WITHPROBABILITY $\prob$ \FORALL $\req \in \requests$\\
}
 \caption{Randomized Rounding Algorithm for the VNEP (Profit)}
 \label{alg:rand-round-profit}
 \end{algorithm*}
% \endgroup
 \end{minipage}}
 
 \end{figure}
 
\noindent\textbf{Bounding the Profit.}
Employing the \emph{discrete} random variable $\randVarY \in \{0,\Vprofit\}$ to model the profit achieved by (potentially)~embedding request $\req \in \requests$, we have  $\ProbVarY[\Vprofit] = \sum_{(\prob,\mapping) \in \PotEmbeddings} \prob$ and $\ProbVarY[0] = 1-\sum_{(\prob,\mapping) \in \PotEmbeddings} \prob$. Hence, the overall profit achieved is  $B = \sum_{\req \in \requests} \randVarY$ with $\mathbb{E}(B) = \sum_{\req \in \requests} \Vprofit \cdot \sum_{(\prob,\mapping) \in \PotEmbeddings} \prob $. As the decomposition is complete (cf.~Theorem~\ref{thm:decomposability-novel-formulation}) we have $\optLP = \Exp(B)$, where $\optLP$ denotes the profit of the LP solution.

By removing any request, which cannot be fully fractionally embedded (in the absence of other requests) in the preprocessing step, we know that $\optLP \geq \max_{\req \in \requests} \Vprofit$ holds. By applying the Chernoff-Bound of Theorem~\ref{thm:chernoff}, we obtain:
\begin{restatable}{lemm}{probabilityOfNotSucceedingInObjective}
\label{thm:probability-of-not-succeeding-in-objective}
The probability of achieving less than $1/3$ of the profit of the optimal solution is upper bounded by $exp(-2/9) \approx 0.8007$.
\end{restatable}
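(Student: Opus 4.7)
The plan is to apply the Chernoff bound of Theorem~\ref{thm:chernoff} to a suitably rescaled sum of the $Y_r$ variables so that each summand lies in $[0,1]$, and to exploit the preprocessing step (Lines~1--3 of Algorithm~\ref{alg:rand-round-profit}) to lower-bound the expectation.

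First I would introduce normalized random variables $X_r = Y_r/b_{\max}$, where $b_{\max} = \max_{\req \in \requests} \Vprofit$. Since $Y_r \in \{0, \Vprofit\}$ we have $X_r \in [0, \Vprofit/b_{\max}] \subseteq [0,1]$ and the $X_r$ are independent (the random choices across requests in Line~7 are made independently). Setting $X = \sum_{\req \in \requests} X_r = B/b_{\max}$, linearity of expectation yields $\Exp(X) = \optLP / b_{\max}$.

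Next, I would invoke the key consequence of the preprocessing: every request surviving Lines~1--3 admits $x_\req = 1$ in isolation, and Theorem~\ref{thm:decomposability-novel-formulation} guarantees that the LP value equals the total weight of the decomposition. Hence $\optLP \geq \max_{\req \in \requests} \Vprofit = b_{\max}$, so $\Exp(X) \geq 1$.

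Now I would apply Theorem~\ref{thm:chernoff} with $\varepsilon = 2/3$:
\begin{align*}
\Prob\big(B \leq \tfrac{1}{3} \optLP\big) \;=\; \Prob\big(X \leq \tfrac{1}{3} \Exp(X)\big) \;\leq\; \exp\!\left(-\tfrac{(2/3)^2}{2} \cdot \Exp(X)\right) \;=\; \exp\!\left(-\tfrac{2}{9} \cdot \Exp(X)\right).
\end{align*}
Using $\Exp(X) \geq 1$ the bound simplifies to $\exp(-2/9) \approx 0.8007$, as desired.

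The argument is entirely routine once the normalization is in place; the only subtle point is realising that the preprocessing step is precisely what is needed to ensure $\Exp(X) \geq 1$, without which the Chernoff bound would only yield a weaker probability involving $\Exp(X)$. No step poses a real obstacle, but this reliance on preprocessing is the sole non-mechanical ingredient of the proof.
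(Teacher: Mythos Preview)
Your proof is correct and follows essentially the same approach as the paper: normalize by $b_{\max}$, use the preprocessing to ensure $\Exp(X)\ge 1$, and apply the Chernoff bound with $\varepsilon=2/3$. The only cosmetic difference is that the paper adds one more line observing $\optIP \le \optLP$ to phrase the bound in terms of the integral optimum, which follows immediately from your inequality.
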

\begin{proof}
Let $\hat{b} = \max_{\req \in \requests} \Vprofit$ denote the maximum benefit of the requests.
We consider the random variables $Y'_{\req} = Y_{\req} / \hat{b}$, such that $Y'_{\req} \in [0,1]$ holds.
Let $B' = \sum_{\req \in \requests} Y'_{\req}$ denote the total profit achieved after scaling down the profits. 
As $\mathbb{E}(B) = \optLP \geq \hat{b}$ holds, we have $\mathbb{E}(B') \geq 1$
Choosing $\varepsilon = 2/3$ and applying Theorem~\ref{thm:chernoff} on $B'$ we obtain $\mathbb{P} \big( B' \leq (1/3)\cdot \mathbb{E}(B')~\big)~\leq exp(-2\cdot \mathbb{E}(B')/9)$.
Plugging in the \emph{minimal} value of $\Exp(B')$, i.e., $1$, into the equation we obtain:
$\mathbb{P} \big( B' \leq (1/3)\cdot \mathbb{E}(B')~\big) \leq exp(-2/9)$ and accordingly 
$\mathbb{P} \big( B \leq (1/3)\cdot \mathbb{E}(B)~\big) \leq exp(-2/9)$.

As mentioned before, by Theorem~\ref{thm:decomposability-novel-formulation} we have $\Exp(B)= \optLP$. 
Denoting the optimal profit of the Integer Program  by $\optIP$ and observing that $\optIP \leq \optLP$ holds as the IP is contained in the solution space of the LP, we have
$\optIP/3 \leq \optLP/3 = \Exp(B)/3$. Hence, we obtain $\mathbb{P} \big( B \leq (1/3) \cdot \optIP \big) \leq \mathbb{P} \big( B \leq (1/3) \cdot \Exp(B)/3 \big) \leq exp(-2/9)$\,,
completing the proof.
\end{proof}

\noindent\textbf{Bounding Resource Allocations.} We model the allocations of request $\req \in \requests$ on resource $(x,y) \in  \SR$ as random variable $A_{\req,x,y} \in [0,\maxAllocX]$. We have $\mathbb{P}(A_{\req,x,y} = A(\mapping, x,y))= \prob $ and $\mathbb{P}(A_{\req,x,y} = 0)=  1 - \sum_{(\prob,\mapping) \in \PotEmbeddings } \prob $.
Let $A_{x,y} = \sum_{\req \in \requests} A_{\req,x,y}$ denote the cumulative allocations induced on resource $(x,y)$. As $\mathbb{E}(A_{x,y}) = \sum_{\req \in \requests} \sum_{(\prob,\mapping) \in \PotEmbeddings} \prob \cdot A(\mapping,x,y)$ holds by definition and using Theorem~\ref{thm:decomposability-novel-formulation},  $\mathbb{E}(A_{x,y}) \leq \Scap(x,y)$ is obtained for $(x,y) \in  \SR$.
\begin{lemma}
\allowdisplaybreaks
\label{lem:approximation-single-node}
Consider a node resource $(\tau,u) \in \SRV$. Choose $0< \varepsilon \leq 1$, such that $\maxDemandV / \Scap(\type,u) \leq \varepsilon$ holds for $\req \in \requests$. Let~$\Delta = \sum_{\req \in \requests} (\maxAllocV / \maxDemandV)^2$ and $\beta = 1+\varepsilon \cdot  \sqrt{2\cdot \Delta \cdot \log(|\SV|\cdot|\types|)}$. Then $\mathbb{P} \Big( \hspace{-2pt} A_{\type,u} \geq \beta \cdot \Scap(\type,u) \hspace{-2pt} \Big) \hspace{-2pt} \leq \hspace{-2pt}(|\SV|\cdot |\types|)^{-4}$ holds.
%
%The probability that the capacity of $(\type,u) \in \SRV$ is exceeded by more than a factor~$\beta = (1+\varepsilon\cdot \sqrt{2\cdot \log~(|\SV|\cdot|\types|)~\cdot \DeltaV})$ is upper bounded by~$(|\SV| \cdot |\types|)^{-4}$.
\begin{proof}
\allowdisplaybreaks

We apply Hoeffding (cf.~Theorem~\ref{thm:hoeffding}) with $t = \varepsilon \cdot \sqrt{2\cdot \Delta \cdot \log(|\SV|\cdot|\types|)  } \cdot \Scap(\type,u)$:
{
\small
\begin{alignat*}{5}
\hspace{-12pt}
\mathbb{P} \Big(A_{\type,u} - \mathbb{E}(A_{\type,u})~\geq t \Big) &  \leq && exp \Big(\frac{-4 \cdot \varepsilon^2 \cdot  \Delta \cdot \log(|\SV|\cdot|\types|) \cdot  \Scap^2(\type,u)}{\sum_{\req \in \requests}~(\maxAllocV)^2} \Big) \\
& \leq && exp \Big(\frac{-4 \cdot \varepsilon^2  \cdot \Delta  \cdot \log (|\SV|\cdot|\types|) \cdot \Scap^2(\type,u) }{\sum_{\req \in \requests}~(\varepsilon \cdot \Scap(\type,u) \cdot \maxAllocV / \maxDemandV)^2} \Big)\\
& ~ = &&exp \Bigg(\frac{-4 \log~(|\SV|\cdot|\types|)~\sum_{\req \in \requests}~(\maxAllocV / \maxDemandV)^2 }{\sum_{\req \in \requests}~( \maxAllocV / \maxDemandV)^2} \Bigg) \\
& = && (|\SV|\cdot |\types|)^{-4}
\end{alignat*}
}
Above, we have used $\maxAllocV \leq \varepsilon \cdot \Scap(\type,u)~\cdot \maxAllocV / \maxDemandV$, which follows from  $\maxDemandV \leq \varepsilon \cdot \Scap(\type,u)$. We then plugged in the definition of $\Delta$ and reduced the fraction. Lastly, to obtain the lemma's statement, we utilize that the expected allocations $\mathbb{E}(A_{\type,u})$ are upper bounded by the capacity $\Scap(\tau,u)$:
$\mathbb{P} \Big(A_{\type,u} \geq \beta \cdot \Scap(\type,u)~ \Big) \leq ~(|\SV|\cdot |\types|)^{-4}~.$
\end{proof}
\end{lemma}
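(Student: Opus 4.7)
The plan is to apply Hoeffding's inequality (Theorem~\ref{thm:hoeffding}) to the independent random variables $\{A_{\req,\type,u}\}_{\req \in \requests}$, each of which takes values in $[0,\maxAllocV]$. The independence comes from the fact that the rounding procedure draws a mapping $\mapping$ independently for each request $\req$ according to its decomposition $\PotEmbeddings$. By Constraint~\ref{LP:novel:capacities} together with the load bound of Theorem~\ref{thm:decomposability-novel-formulation}, the expected cumulative allocation satisfies $\Exp(A_{\type,u}) \leq \Scap(\type,u)$, so the event $\{A_{\type,u} \geq \beta\cdot \Scap(\type,u)\}$ is contained in the event $\{A_{\type,u} - \Exp(A_{\type,u}) \geq t\}$ for the choice $t = (\beta-1)\cdot \Scap(\type,u) = \varepsilon\cdot\sqrt{2\Delta\log(|\SV|\cdot|\types|)}\cdot \Scap(\type,u)$.

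The main work is then bookkeeping inside Hoeffding's bound
\[
\Prob\bigl(A_{\type,u} - \Exp(A_{\type,u}) \geq t\bigr) \leq \exp\!\Bigl(\tfrac{-2t^2}{\sum_{\req}(\maxAllocV)^2}\Bigr).
\]
The key algebraic step I would carry out is to relate the denominator to $\Delta$. Using the hypothesis $\maxDemandV \leq \varepsilon\cdot \Scap(\type,u)$ and multiplying by $\maxAllocV/\maxDemandV \geq 1$ (or observing directly), one gets $\maxAllocV \leq \varepsilon\cdot \Scap(\type,u)\cdot \maxAllocV/\maxDemandV$. Squaring and summing over $\req$ and plugging in the definition of $\Delta$ yields
\[
\sum_{\req \in \requests}(\maxAllocV)^2 \;\leq\; \varepsilon^2\cdot \Scap^2(\type,u)\cdot \Delta.
\]

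Plugging this together with $t^2 = \varepsilon^2\cdot 2\Delta\log(|\SV|\cdot|\types|)\cdot \Scap^2(\type,u)$ into the Hoeffding estimate, the factors $\varepsilon^2$, $\Scap^2(\type,u)$, and $\Delta$ all cancel and the exponent collapses to $-4\log(|\SV|\cdot|\types|)$, producing exactly $(|\SV|\cdot|\types|)^{-4}$. I do not anticipate any genuine obstacle here: the proof is essentially a tuned Hoeffding calculation, and the only place where care is needed is to use the inequality $\maxAllocV \leq \varepsilon \cdot \Scap(\type,u)\cdot\maxAllocV/\maxDemandV$ in the right direction so that, after squaring, one can match against the definition of $\Delta$ and cancel cleanly. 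The definition of $\beta$ has clearly been reverse-engineered to make this cancellation work.
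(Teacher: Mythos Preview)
Your proposal is correct and follows essentially the same approach as the paper: apply Hoeffding's inequality with $t=(\beta-1)\Scap(\type,u)$, bound the denominator via $\maxAllocV \leq \varepsilon\,\Scap(\type,u)\cdot\maxAllocV/\maxDemandV$, and cancel against the definition of $\Delta$ to obtain the exponent $-4\log(|\SV|\cdot|\types|)$. If anything, your write-up is slightly more explicit about why the variables are independent and why $\Exp(A_{\type,u})\le\Scap(\type,u)$, which the paper leaves to context.
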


The probability to violate edge resources can be bounded analogously (see Appendix~\ref{app:deferred-proofs} for the proof):
\begin{restatable}{lemm}{approximationSingleEdge}
\label{lem:approximation-single-edge}
We consider a single edge $(u,v) \in \SE$. Choose $0< \varepsilon \leq 1$, such that $\maxDemandE / \Scap(u,v) \leq \varepsilon $ holds for all $\req \in \requests$. 
Let $\Delta = \sum_{\req \in \requests} (\maxAllocE / \maxDemandE)^2$ and $\gamma = 1+\varepsilon\cdot \sqrt{2\cdot \Delta \cdot \log~|\SV|}$. Then $\mathbb{P} \Big( A_{u,v} \geq \gamma \cdot \Scap(u,v) \Big)  \leq |\SV|^{-4}$ holds.
\end{restatable}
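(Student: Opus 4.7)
The plan is to mirror the argument of Lemma~\ref{lem:approximation-single-node} line by line, with the role of the node resource $(\type,u) \in \SRV$ played by the edge $(u,v) \in \SE$ and with the relevant parameters replaced. In particular, I would reuse the facts that (i) the random variables $\{A_{\req,u,v}\}_{\req \in \requests}$ are independent and each $A_{\req,u,v}$ lies in the interval $[0, \maxAllocE]$, and (ii) the expectation $\mathbb{E}(A_{u,v}) = \sum_{\req \in \requests} \sum_{(\prob,\mapping) \in \PotEmbeddings} \prob \cdot A(\mapping,u,v)$ is upper bounded by $\Scap(u,v)$, which follows from Constraint~\ref{LP:novel:capacities} together with the decomposition guarantee of Theorem~\ref{thm:decomposability-novel-formulation}.

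I would then apply Hoeffding's inequality (Theorem~\ref{thm:hoeffding}) with the deviation parameter $t = \varepsilon \cdot \sqrt{2 \cdot \Delta \cdot \log |\SV|} \cdot \Scap(u,v)$. The key manipulation is to bound the range $b_\req - a_\req = \maxAllocE$ of each summand using the assumption $\maxDemandE \leq \varepsilon \cdot \Scap(u,v)$, which gives $\maxAllocE \leq \varepsilon \cdot \Scap(u,v) \cdot \maxAllocE / \maxDemandE$. Substituting this into the Hoeffding exponent yields
\[
\exp\!\Big(\tfrac{-4 \cdot \varepsilon^2 \cdot \Delta \cdot \log|\SV| \cdot \Scap^2(u,v)}{\sum_{\req \in \requests}(\varepsilon \cdot \Scap(u,v) \cdot \maxAllocE / \maxDemandE)^2}\Big),
\]
and, after plugging in the definition of $\Delta = \sum_{\req \in \requests}(\maxAllocE / \maxDemandE)^2$, the fraction reduces to $-4 \log |\SV|$, giving exactly $|\SV|^{-4}$.

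Finally, since $\mathbb{E}(A_{u,v}) \leq \Scap(u,v)$, the event $\{A_{u,v} \geq \gamma \cdot \Scap(u,v)\} = \{A_{u,v} \geq (1+\varepsilon \sqrt{2\Delta \log|\SV|}) \Scap(u,v)\}$ is contained in $\{A_{u,v} - \mathbb{E}(A_{u,v}) \geq t\}$, so the same bound $|\SV|^{-4}$ transfers to the statement of the lemma. I do not expect any genuine obstacle here: the only subtlety, as in the node case, is getting the scaling right so that the $\varepsilon^2 \cdot \Scap^2(u,v)$ factors cancel cleanly against the squared ranges in Hoeffding's denominator; everything else is bookkeeping. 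Compared to the node lemma, the bound improves from $(|\SV|\cdot|\types|)^{-4}$ to $|\SV|^{-4}$ because only $|\SV|$ edge endpoints (rather than $|\SV|\cdot|\types|$ typed node resources) are being aggregated via a union bound in the overarching analysis, which is reflected in the smaller logarithmic factor inside $\gamma$.
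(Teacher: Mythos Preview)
Your proposal is correct and matches the paper's own proof essentially line for line: the same choice of $t = \varepsilon \cdot \sqrt{2\Delta \log|\SV|}\cdot \Scap(u,v)$, the same use of $\maxDemandE \leq \varepsilon\,\Scap(u,v)$ to bound the ranges, the same cancellation after substituting the definition of $\Delta$, and the same final step invoking $\mathbb{E}(A_{u,v}) \leq \Scap(u,v)$. There is nothing to add.
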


\noindent\textbf{Main Result.}
Given the above, we can now prove that Algorithm~\ref{alg:rand-round-profit} indeed is a FPT approximation for the profit variant of the VNEP \emph{with high probability}.

\begin{restatable}{theorem}{mainResultForAdmissionControl}
\label{thm:result-for-admission-control}
Assume $|\SV| \geq 3$. Let $0 < \varepsilon \leq 1$ be chosen minimally, such that  $\maxDemandV[\req][x][y] / \Scap(x,y) \leq \varepsilon$ holds for $(x,y) \in  \SR$ and $\req \in \requests$. Randomized rounding yields a~$(\alpha,\beta,\gamma)$ tri-criteria FPT approximation for the profit variant of the VNEP,  such that it finds a solution \emph{with high probabiliy} of at least an~$\alpha = 1/3$ fraction of the optimal profit, and cumulative allocations within factors of $\beta$ (nodes) and $\gamma$ (edges) of the original capacities, for $\beta = 1+\varepsilon \cdot \sqrt{2 \cdot \Delta(\SRV) \cdot \log(|\SV|\cdot|\types|)}$ and $\gamma  = 1 +  \varepsilon \cdot \sqrt{2 \cdot \Delta(\SE) \cdot \log |\SV| }$ with $\Delta: \mathcal{P}(\SR) \to \mathbb{R}_{\geq 0}$ being defined as $\Delta(X) = \max_{(x,y) \in X} \sum_{\req \in \requests} \left(\frac{\maxAllocX}{ \maxDemandX} \right)^2$.
\end{restatable}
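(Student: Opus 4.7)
The plan is to combine the three bounds already established (Lemma on profit, Lemma~\ref{lem:approximation-single-node} on nodes, Lemma~\ref{lem:approximation-single-edge} on edges) via a union bound, and then argue that one invocation of randomized rounding succeeds with constant positive probability, so that $\mathcal{O}(\log n)$ independent repetitions succeed with high probability. The FPT-running-time claim will follow from Theorem~\ref{thm:size-of-formulation-is-exponential-in-ex-width} together with the polynomial-time decomposability from Theorem~\ref{thm:decomposability-novel-formulation}.

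First, I would fix a single rounding trial and bound the probability of each of the three bad events. By Lemma on the profit the event $B < \alpha \cdot \optIP$ has probability at most $e^{-2/9}$. By Lemma~\ref{lem:approximation-single-node}, for every node resource $(\tau,u)\in\SRV$ the event $A_{\tau,u} \geq \beta\cdot\Scap(\tau,u)$ has probability at most $(|\SV|\cdot|\types|)^{-4}$; summing (union bound) over all $|\SRV| \leq |\SV|\cdot|\types|$ such resources, the probability that \emph{some} node is overloaded is at most $(|\SV|\cdot|\types|)^{-3}$. Similarly, by Lemma~\ref{lem:approximation-single-edge} summed over the at most $|\SV|^2$ substrate edges, the probability that \emph{some} edge is overloaded is at most $|\SV|^{-2}$. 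Adding the three bounds, the probability that a single trial fails to be $(\alpha,\beta,\gamma)$-approximate is at most
\[
p_{\mathrm{fail}} \;\leq\; e^{-2/9} \;+\; (|\SV|\cdot|\types|)^{-3} \;+\; |\SV|^{-2}.
\]
Using $|\SV|\geq 3$ (and $|\types|\geq 1$), the last two terms are at most $1/27 + 1/9$, while $e^{-2/9} < 0.8008$, so $p_{\mathrm{fail}} \leq 1 - c$ for some explicit constant $c > 0$.

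Hence a single trial yields a valid $(\alpha,\beta,\gamma)$-approximate solution with probability at least $c$, and repeating the rounding step $\Theta(\log n)$ times (the inner loop of Algorithm~\ref{alg:rand-round-profit}) drives the failure probability below $1/\mathrm{poly}(n)$, i.e.\ success occurs with high probability. The main obstacle, and what this proof really hinges on, is just checking that the three terms in $p_{\mathrm{fail}}$ indeed sum to strictly less than $1$ under the assumption $|\SV|\geq 3$; everything else is bookkeeping over the previously proven lemmas. Finally, to see that the algorithm is FPT, observe that the preprocessing loop solves $|\requests|$ LPs, and the main phase solves a single LP plus per-request decompositions; by Theorem~\ref{thm:size-of-formulation-is-exponential-in-ex-width} each LP has size $\mathcal{O}(\sum_{\req} |\SG|^{\ewX(\VGbfs)}\cdot|\VG|)$, and by Theorem~\ref{thm:decomposability-novel-formulation} each decomposition runs in time $\mathcal{O}(|\SG|^{2\cdot\ewX(\VGbfs)+1}\cdot|\VG|^2)$; both are polynomial in the input size once the maximal extraction width is regarded as a parameter, which establishes fixed-parameter tractability in $\max_{\req}\ewX(\VGbfs)$.
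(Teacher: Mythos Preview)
Your proposal is correct and follows essentially the same approach as the paper: union-bound the three failure events (profit shortfall, some node overload, some edge overload), verify the sum is a constant strictly below $1$ under the assumption $|\SV|\geq 3$ (the paper simply states the bound $19/20$ without displaying the arithmetic you spell out), and then amplify by independent repetition; the FPT claim is handled identically via Theorems~\ref{thm:size-of-formulation-is-exponential-in-ex-width} and~\ref{thm:decomposability-novel-formulation}.
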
  
\begin{proof}[Proof.]
We consider the probability of the rounding step failing to produce a $(\alpha,\beta,\gamma)$-approximate solution. By applying a union bound for any node and edge resource exceeding $\beta$ or $\gamma$ times the capacity (cf.~Lemma~\ref{lem:approximation-single-node} and Lemma~\ref{lem:approximation-single-edge}) together with the probability of \emph{not} achieving $1/3$ of the LP's objective (cf. Lemma~\ref{thm:probability-of-not-succeeding-in-objective}), the probability to not find a solution is upper bounded by $19/20$. Hence, the probability of finding an approximate solution within $N$ iterations is at least $1-(19/20)^N$, i.e., Algorithm~\ref{alg:rand-round-profit} will produce such a solution \emph{with high probability}.
With respect to the runtime of Algorithm~\ref{alg:rand-round-profit} we note that the LP Formulation~\ref{IP:novel-AC} can be solved in $\mathcal{O}\left(poly(\sum_{\req \in \requests} |\VG| \cdot |\SG|^{\ewX(\VGbfs)})\right)$ by e.g. using the Ellipsoid algorithm~\cite{grotschel1988geometric} (cf.~\ref{thm:size-of-formulation-is-exponential-in-ex-width}) and the decomposition algorithm's runtime has the same runtime. Hence, the runtime of Algorithm~\ref{alg:rand-round-profit} is bounded by $\mathcal{O}\left(N \cdot poly(\sum_{\req \in \requests} |\VG| \cdot |\SG|^{\ewX(\VGbfs)})\right)$, when performing at most $N$ rounding tries.
Hence Algorithm~\ref{alg:rand-round-profit} is fixed-parameter tractable with respect to the maximal width of any of the extraction orders.
\end{proof}
\subsection{Approximating the Cost Variant} 
 {
  \renewcommand{\optLP}{\ensuremath{\textnormal{C}_{\textnormal{LP}}}}
   \renewcommand{\optIP}{\ensuremath{\textnormal{C}_{\textnormal{opt}}}}
   
     \begin{figure}[tb!]
     \scalebox{0.93}{
     \begin{minipage}{1.07\columnwidth}
     
     %\begingroup
     \removelatexerror
     
     \begin{algorithm*}[H]

     \SetKwInOut{Input}{Input}\SetKwInOut{Output}{Output}
     \SetKwFunction{ProcessPath}{ProcessPath}{}{}
     \SetKwFunction{reverse}{reverse}{}{}
     \SetKwFunction{LP}{LP}
     \SetKwFunction{LP}{LP}
     
     \newcommand{\SET}{\textbf{set~}}
     \newcommand{\ADD}{\textbf{add~}}
     \newcommand{\DEFINE}{\textbf{define~}}
     \newcommand{\AND}{\textbf{and~}}
     \newcommand{\LET}{\textbf{let~}}
     \newcommand{\WITH}{\textbf{with~}}
     \newcommand{\COMPUTE}{\textbf{compute~}}
      \newcommand{\CHECK}{\textbf{check~}}
        \newcommand{\REMOVE}{\textbf{remove~}}
         \newcommand{\REPEAT}{\textbf{repeat~}}
     \newcommand{\FIND}{\textbf{find~}}
     \newcommand{\CHOOSE}{\textbf{choose~}}
      \newcommand{\CHOOSi}{\textbf{choosing~}}
      \newcommand{\CONS}{{construct solution by~}}
     \newcommand{\DECOMPOSE}{\textbf{decompose~}}
     \newcommand{\NORMALIZE}{\textbf{normalize~}}
     \newcommand{\FORALL}{\textbf{for all~}}
     \newcommand{\OBTAIN}{\textbf{obtain~}}
     \newcommand{\WITHPROBABILITY}{\textbf{with probability~}}
     
     \SetKwRepeat{Do}{do}{while}%

    \COMPUTE LP solution to Formulation~\ref{IP:novel-AC} for request set $\requests$ subject to \\
    \pushline \pushline \nonl 
    						\begin{tabular}{lcl}
   						 ~~~the objective & $\min \sum_{(x,y) \in \SR, \req \in \requests} \Scost(x,y) \cdot a^{x,y}_{\req}$ & and \\
   						 ~~~the additional constraints & $x_{\req} = 1$ & for all $\req \in \requests$
    						\end{tabular}\\
    \popline \popline \nl
    \DECOMPOSE LP solution into convex combinations $\PotEmbeddings$ \FORALL $\req \in \requests$\\
    \ForEach(\tcp*[f]{postprocess decomposition: prune costly mappings}){$\req \in \requests$}{ 
       \LET $\WAC = \sum_{(\prob, \mapping) \in \PotEmbeddings} \prob \cdot c(\mapping)$\\
     	\REMOVE tuples $(\prob,\mapping)$ from $\PotEmbeddings$ with $c(\mapping) > 2\cdot \WAC$ \\
     	\NORMALIZE weights of $\PotEmbeddings$, such that $\sum_{(\prob, \mapping)} \prob = 1$ holds again\\
     }
   
   \Do(\tcp*[f]{perform randomized rounding}){\textnormal{the solution is \emph{not} $(\alpha,\beta,\gamma)$-approximate and maximal rounding tries are not exceeded}}{
    	\CONS \CHOOSi mapping $\mapping$ \WITHPROBABILITY $\prob$ \FORALL $\req \in \requests$\\
    }
     \caption{Randomized Rounding Algorithm for the VNEP (Cost)}
     \label{alg:rand-round-cost}
     \end{algorithm*}
   % \endgroup
    \end{minipage}}
     
     \end{figure}
     
Similarly to the approximation of the profit, the cost variant can be approximated as presented in Algorithm~\ref{alg:rand-round-cost}. In particular, two changes are necessary compared to Algorithm~\ref{alg:rand-round-profit}: the preprocessing in Lines 1-3 can be dropped and a postprocessing of the found decompositions needs to be added. 
  Concretely, to obtain a constant approximation of the cost, the decision which of the mappings $\mapping$ to choose for request $\req \in \requests$ cannot be purely left to chance. Denoting the cost of a mapping $\mapping \in \spaceSolReq$ by $c(\mapping) = \sum_{(x,y) \in \SR} \Scost(x,y) \cdot A(\mapping,x,y)$, the set of convex combinations $\PotEmbeddings = \{(\prob, \mapping)\}_k$ may contain mappings $\mapping$ of arbitrarily small weight $\prob$ while having an arbitrarily high cost $c(\mapping)$. Hence, if the possibility exists to choose such a mapping in the rounding step, no bound on the rounded cost can be given in general. Hence, the key idea is to remove all fractional mappings of high cost while not losing too much weight in the convex combination.
 Concretely, given a request $\req \in \requests$, we denote by $\WAC = \sum_{\decomp \in \PotEmbeddings} \prob \cdot c(\mapping)$ the \emph{weighted (averaged)~cost} of request $\req \in \requests$ and the algorithm removes all mappings $\mapping$ from the convex combinations for which $c(\mapping) > 2 \cdot \WAC$ hold.
  As the following lemma shows, the sum of the associated weights of the mappings removed must be less than $1/2$:
 
 \begin{lemma}
 \label{lem:wac-pruning}
 The sum of the weights $\prob$ of the mappings $\mapping$ with cost smaller than two times $\WAC$ is at least $1/2$ for each request $\req \in \requests$.
  \end{lemma}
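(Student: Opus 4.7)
The plan is a direct application of Markov's inequality to the cost distribution over the decomposition $\PotEmbeddings$. First I would observe that, since the cost variant enforces $x_\req = 1$ for all requests, the completeness of the decomposition guaranteed by Theorem~\ref{thm:decomposability-novel-formulation} gives $\sum_{(\prob,\mapping) \in \PotEmbeddings} \prob = x_\req = 1$. Hence the weights $\prob$ form an honest probability distribution over the valid mappings in $\PotEmbeddings$, and $\WAC$ is, by definition, the expectation of the cost $c(\mapping)$ under this distribution.

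Next I would split the decomposition into the ``cheap'' and ``expensive'' parts, letting $\PotEmbeddings^{>} = \{(\prob,\mapping) \in \PotEmbeddings : c(\mapping) > 2 \cdot \WAC\}$ denote the set of mappings whose cost exceeds twice the weighted average. Using non-negativity of costs and weights, I would bound
\[
\WAC \;=\; \sum_{(\prob,\mapping) \in \PotEmbeddings} \prob \cdot c(\mapping) \;\geq\; \sum_{(\prob,\mapping) \in \PotEmbeddings^{>}} \prob \cdot c(\mapping) \;>\; 2\,\WAC \cdot \sum_{(\prob,\mapping) \in \PotEmbeddings^{>}} \prob \,,
\]
which immediately yields $\sum_{(\prob,\mapping) \in \PotEmbeddings^{>}} \prob < 1/2$. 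Consequently the total weight of the mappings retained by the postprocessing step, namely those with $c(\mapping) \leq 2\cdot \WAC$, is strictly greater than $1/2$, proving the claim.

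There is really no substantive obstacle here: once we recognise that $\PotEmbeddings$ yields a probability distribution with mean $\WAC$, the bound is Markov's inequality verbatim. The only minor subtlety worth mentioning is the degenerate case $\WAC = 0$, in which case every mapping in $\PotEmbeddings$ has cost $0 \leq 2\cdot \WAC$, so nothing is pruned and the total retained weight is exactly $1$, which still satisfies the statement.
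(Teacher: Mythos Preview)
Your proposal is correct and follows essentially the same approach as the paper: both are Markov's inequality applied to the cost distribution over $\PotEmbeddings$, with the paper phrasing it as a contradiction (assume the cheap weight is below $1/2$ and derive $\WAC < \WAC$) while you argue directly. Your explicit invocation of $\sum_k \prob = x_\req = 1$ from Theorem~\ref{thm:decomposability-novel-formulation} and your treatment of the degenerate case $\WAC = 0$ are clean additions, but the core computation is identical.
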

 \begin{proof}
 Let $\lambda_{\req} = \sum_{(\prob,\mapping) \in \PotEmbeddings : c(\mapping)~\leq 2 \cdot \WAC } \prob$ denote the 
 sum of the weights of the mappings of cost bounded by $2\cdot \WAC$.
 For the sake of contradiction, assume that $\lambda_{\req} < 1/2$ holds for any request $\req \in \requests$. By the definition of $\WAC$ and the assumption on $\lambda_{\req}$, we obtain the following contradiction:
 \begin{alignat}{5}
 & \WAC ~& = & ~\sum_{(\prob,\mapping)  \in \PotEmbeddings} \prob \cdot c(\mapping)~\label{eq:a}\\
 &      & \leq & ~\sum_{(\prob,\mapping)  \in \PotEmbeddings: c(\mapping)> 2 \cdot \WAC} \prob \cdot c(\mapping)~\label{eq:b}\\
 &      & \leq  & ~\sum_{(\prob,\mapping)  \in \PotEmbeddings: c(\mapping)> 2 \cdot \WAC} \prob \cdot 2\cdot \WAC \label{eq:c}\\
 &      & \leq  & ~(1-\lambda_{\req})~\cdot 2 \cdot \WAC <  ~\WAC \label{eq:d}
 \end{alignat}
The validity of Equation~\ref{eq:c} follows as all the considered decompositions have a cost of at least two times $\WAC$ and the first inequality of Equation~\ref{eq:d} follows as $(1-\lambda_{\req})~> 1/2$ holds by assumption. Lastly, Equation~\ref{eq:d} yields the contradiction, showing that indeed $\lambda_{\req} \geq 1/2$ holds for $\req \in \requests$.
 \end{proof}

\noindent\textbf{Deterministic Guarantee for the Cost.}
It is easy to establish that initially, i.e., before pruning mappings from $\PotEmbeddings$, the cost of an optimal solution equals the sum of the (fractional) costs of the convex combinations:

\begin{lemma}
\label{cor:relation-of-net-profit-in-decomposition-and-the-LP-cost-new}
Letting $\optLP$ denote the cost of the LP solution computed in Line~1, we have:
\begin{align}
\sum_{\req \in \requests} \WAC &= \sum_{\req \in \requests, \decomp \in \PotEmbeddings} \prob \cdot c(\mapping) = \optLP~.
\end{align}
\end{lemma}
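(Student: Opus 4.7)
The claim consists of two equalities. The first, $\sum_\req \WAC = \sum_{\req, \decomp \in \PotEmbeddings} \prob \cdot c(\mapping)$, is immediate from unfolding the definition $\WAC = \sum_{(\prob, \mapping) \in \PotEmbeddings} \prob \cdot c(\mapping)$ stated at the top of this subsection; no further work is required there. The plan for the second (and substantive) equality is to expand $c(\mapping) = \sum_{(x,y) \in \SR} \Scost(x,y) \cdot A(\mapping, x, y)$, interchange the order of summation, and thereby reduce the identity to showing
\[
  \sum_{(\prob,\mapping) \in \PotEmbeddings} \prob \cdot A(\mapping, x, y) \;=\; a^{x,y}_\req
  \qquad \text{for every } \req \in \requests,\ (x,y) \in \SR.
\]
Substituting this back then matches the right-hand side exactly with the LP's objective $\sum_{(x,y), \req} \Scost(x,y) \cdot a^{x,y}_\req = \optLP$.

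Establishing the per-resource identity is the core of the argument. Theorem~\ref{thm:decomposability-novel-formulation} in its stated form only delivers the inequality ``$\geq$'', which just yields the bound $\sum_\req \WAC \leq \optLP$. To upgrade this into an equality, the plan is to revisit Algorithm~\ref{alg:decompositionAlgorithm-Novel-AC} and observe that its bookkeeping preserves the LP's load equalities (Constraints~(\ref{LP:novel:node-load}) and~(\ref{LP:novel:edge-load})) throughout the entire execution: in each iteration the decrement $\prob$ applied to the mapping variables $y^u_{\req,i}$ and $z^{u,v}_{\req,i,j}$ (Line~\ref{alg:decomposition:adapt-variables-one}) is exactly matched by the decrement $\prob \cdot A(\mapping, x, y)$ applied to $a^{x,y}_\req$ (Line~\ref{alg:decomposition:adapt-load-one}), so both sides of the constraints move in lockstep. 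Combined with the completeness statement $\sum_k \prob = x_\req$, which equals $1$ because the cost variant of the LP enforces $x_\req = 1$, this forces $y^u_{\req,i} = \sum_{k : \mapV^k(i) = u} \prob$ at the start (i.e., every unit of mapping weight is eventually extracted). Plugging this identity into Constraint~(\ref{LP:novel:node-load}) gives
\[
  a^{\tau, u}_\req \;=\; \sum_{\substack{i \in \VV \\ \Vtype(i) = \tau}} \Vcap(i) \sum_{k: \mapV^k(i) = u} \prob \;=\; \sum_k \prob \cdot A(\mapping, \tau, u),
\]
for node resources, and the analogous argument for edges, using Constraint~(\ref{LP:novel:edge-load}) together with the sub-LP flow-conservation equalities inherited from Formulation~\ref{alg:VNEP-IP-old}, handles edge resources.

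The main obstacle is precisely this upgrade from ``$\geq$'' to ``$=$'': the decomposition theorem treats the allocation variables as an outer budget, whereas the formulation actually ties them to the mapping variables by equality. Once this has been teased out, the two displayed equalities of the lemma follow by routine substitution, and $\sum_\req \WAC = \sum_{(x,y)\in \SR} \Scost(x,y) \sum_\req a^{x,y}_\req = \optLP$ falls out in a single line.
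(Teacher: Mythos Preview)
Your first equality is immediate, and your plan for the second via the per-resource identity $\sum_k \prob\, A(\mapping,x,y) = a^{x,y}_\req$ works cleanly for node resources: once $x_\req$ is driven to zero, Constraint~(\ref{LP:novel:node-embedding}) forces every $y^u_{\req,i}$ to zero, so all node-mapping weight has indeed been extracted and the identity follows exactly as you wrote.

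The gap is the edge case. Flow conservation alone does not force the sub-LP $z$-variables to vanish when all $y$-variables do: what remains after decomposition can be a non-zero \emph{circulation}, and such a residual circulation still contributes to $a^{u,v}_\req$ via Constraints~(\ref{alg:VNEP-old:load-edge}) and~(\ref{LP:novel:edge-load}) without ever being extracted into any $\mapE(i,j)$. So the per-resource equality you are aiming for can genuinely fail on edge resources, and the ``analogous argument'' does not close.

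The fix is to argue at the level of the objective rather than per resource, which is essentially the paper's (terse) route. From Theorem~\ref{thm:decomposability-novel-formulation} and $\Scost \geq 0$ you immediately get $\optLP \geq \sum_\req \WAC$. For the reverse, observe that the extracted convex combination is itself a feasible LP point once you reset $a^{x,y}_\req := \sum_k \prob\, A(\mapping,x,y)$ (capacities are still respected since this only decreases the loads); its cost is exactly $\sum_\req \WAC$, and optimality of $\optLP$ gives $\optLP \leq \sum_\req \WAC$. Equivalently: any residual edge circulation in a cost-optimal solution must live on zero-cost edges, so the discrepancy between $a^{u,v}_\req$ and $\sum_k \prob\, A(\mapping,u,v)$ is cost-neutral even where it occurs.
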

\begin{proof}
Due to the definition of the objective, we have $\optLP = \sum_{(x,y) \in \SR, \req \in \requests} \Scost(x,y) \cdot a^{x,y}_{\req}$. From the correctness of the decomposition algorithm (cf.~Theorem~\ref{thm:decomposability-novel-formulation}) we know that $a^{x,y}_{\req} \geq \sum_k \prob \cdot A(\mapping,x,y)$ holds for each request $\req \in \requests$ and each resource $(x,y) \in \SR$. As the cost $c(\mapping)$ of a mapping $\mapping$ equals $\sum_{(x,y) \in \SR} \Scost(x,y) \cdot A(\mapping,x,y)$, the equality  $\sum_{\req \in \requests, \decomp \in \PotEmbeddings} \prob \cdot c(\mapping) = \optLP$ follows. Lastly, as the equality $\sum_{\req \in \requests} \WAC = \sum_{\req \in \requests, (\prob, \mapping)  \in \PotEmbeddings} \prob \cdot c(\mapping)$ holds by definition of $\WAC$.
\end{proof}

\begin{lemma}
\label{lem:deterministic-objective-guarantee-without}
The cost of any solution returned by the randomized rounding scheme is upper bounded by two times the optimal cost.
\begin{proof}
  \vspace{-6pt}
Let $\optLP$ denote the cost of the optimal solution returned by LP Formulation~\ref{IP:novel-AC}. By allowing to select only decompositions $\mapping$ for which $c(\mapping)~\leq 2\cdot \WAC$ holds and denoting the selected mapping by $\hat{m}$ we have $c(\hat{m}_{\req})~\leq 2\cdot \WAC$. Hence $\sum_{\req \in \requests} c(\hat{m}_{\req})~\leq 2\cdot \sum_{\req \in \requests} \WAC$ holds. Together with Lemma~\ref{cor:relation-of-net-profit-in-decomposition-and-the-LP-cost-new} we obtain $\sum_{\req \in \requests} c(\hat{m}_{\req}) \leq 2\cdot \optLP$. As $\optLP$ is a lower bound for the minimum cost $\optIP$ of the optimal solution the lemma holds.
\end{proof}
\end{lemma}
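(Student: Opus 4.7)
The plan is to give a short deterministic argument that does not rely on any probabilistic analysis of the rounding step: after the postprocessing, the cost bound holds \emph{independently} of which mapping is picked for each request, so the outcome of the rounding is irrelevant.

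First, I would observe that after the pruning loop (Lines~3--6 of Algorithm~\ref{alg:rand-round-cost}) only mappings $\mapping$ with $c(\mapping) \leq 2 \cdot \WAC$ remain in $\PotEmbeddings$. Hence, regardless of which mapping $\hat{m}_{\req}$ the randomized rounding step selects for each request $\req \in \requests$, we have the pointwise bound $c(\hat{m}_{\req}) \leq 2 \cdot \WAC$. Summing this inequality over all requests yields
\begin{equation*}
\sum_{\req \in \requests} c(\hat{m}_{\req}) \;\leq\; 2 \sum_{\req \in \requests} \WAC.
\end{equation*}

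Next, I would invoke Lemma~\ref{cor:relation-of-net-profit-in-decomposition-and-the-LP-cost-new}, which gives $\sum_{\req \in \requests} \WAC = \optLP$ (note that $\WAC$ is defined before pruning, so this identity is still the right one to use). Combining this with the displayed bound yields $\sum_{\req \in \requests} c(\hat{m}_{\req}) \leq 2 \cdot \optLP$. Finally, since LP Formulation~\ref{IP:novel-AC} is a relaxation of the integer VNEP, any integral optimum $\optIP$ satisfies $\optLP \leq \optIP$, so the total rounded cost is at most $2 \cdot \optIP$, as claimed.

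There is no real obstacle here: the only subtle point is ensuring that the pruning step does not destroy the link between $\sum_{\req} \WAC$ and $\optLP$. This is automatic because $\WAC$ is defined (in Line~4 of Algorithm~\ref{alg:rand-round-cost}) \emph{before} the removal step, so Lemma~\ref{cor:relation-of-net-profit-in-decomposition-and-the-LP-cost-new} applies verbatim. Consequently, the $2$-approximation of the cost is a \emph{deterministic} guarantee of the algorithm, and the rounding only needs to be analyzed later with respect to the capacity violations $\beta$ and $\gamma$.
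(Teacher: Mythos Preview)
Your proof is correct and follows essentially the same argument as the paper: bound each $c(\hat{m}_{\req})$ by $2\cdot\WAC$ after pruning, sum over requests, apply Lemma~\ref{cor:relation-of-net-profit-in-decomposition-and-the-LP-cost-new} to get $2\cdot\optLP$, and use $\optLP \leq \optIP$. Your explicit remark that $\WAC$ is computed \emph{before} pruning (so Lemma~\ref{cor:relation-of-net-profit-in-decomposition-and-the-LP-cost-new} still applies) is a nice clarification that the paper leaves implicit.
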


\noindent\textbf{Bounding Resource Allocations.}
The employed rounding scheme for the cost variant of the VNEP equals the one for the profit. Revisiting the analysis of the probabilistic guarantees on the capacity violations for the profit approximation, we note that the analysis only depended on the fact that the expected load on any resource is upper bounded by its capacity. Due to the rescaling, this does not hold anymore. However, as the weights are scaled by at most a factor of $2$ (cf.~Lemma~\ref{lem:wac-pruning}), it is easy to establish the following:
\begin{lemma}
\label{lem:expected-load-less-than-cap-lala}
$\mathbb{E}(A_{x,y}) \leq 2\cdot \Scap(x,y)$ holds for all resources $(x,y) \in  \SR$.
\end{lemma}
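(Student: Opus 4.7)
The plan is to trace how the decomposition's expected resource allocations relate to the allocation variables $a^{x,y}_{\req}$ of the LP, and then leverage the fact that pruning and renormalization blow up the weights by at most a factor of two.

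First, I would recall the starting point. By Theorem~\ref{thm:decomposability-novel-formulation}, the decomposition $\PotEmbeddings$ produced in Line~2 of Algorithm~\ref{alg:rand-round-cost} satisfies $\sum_{(\prob,\mapping) \in \PotEmbeddings} \prob \cdot A(\mapping, x, y) \leq a^{x,y}_{\req}$ for every resource $(x,y) \in \SR$ and every request $\req \in \requests$. Moreover, since the LP in Line~1 enforces $x_{\req} = 1$, the decomposition is complete: $\sum_{(\prob,\mapping) \in \PotEmbeddings} \prob = 1$. Finally, Constraint~\ref{LP:novel:capacities} guarantees $\sum_{\req \in \requests} a^{x,y}_{\req} \leq \Scap(x,y)$.

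Next, I would account for the postprocessing in Lines~3--6. Let $\PotEmbeddings'$ denote the pruned set (i.e.\ the tuples $(\prob, \mapping)$ with $c(\mapping) \leq 2 \cdot \WAC$) and let $\lambda_{\req} = \sum_{(\prob,\mapping) \in \PotEmbeddings'} \prob$ denote the surviving weight before renormalization. After normalization, each retained weight becomes $\prob / \lambda_{\req}$. By Lemma~\ref{lem:wac-pruning}, $\lambda_{\req} \geq 1/2$, so $1/\lambda_{\req} \leq 2$. Since the rounding step picks $\mapping$ with the renormalized weight, the expected allocation of request $\req$ on $(x,y)$ is
\[
\mathbb{E}(A_{\req,x,y}) \;=\; \sum_{(\prob,\mapping) \in \PotEmbeddings'} \frac{\prob}{\lambda_{\req}} \cdot A(\mapping,x,y) \;\leq\; 2 \sum_{(\prob,\mapping) \in \PotEmbeddings'} \prob \cdot A(\mapping,x,y) \;\leq\; 2 \cdot a^{x,y}_{\req},
\]
where the last inequality follows from dropping non-negative terms and applying the allocation bound from Theorem~\ref{thm:decomposability-novel-formulation}.

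Finally, I would sum over requests and apply linearity of expectation together with the LP capacity constraint:
\[
\mathbb{E}(A_{x,y}) \;=\; \sum_{\req \in \requests} \mathbb{E}(A_{\req, x, y}) \;\leq\; 2 \sum_{\req \in \requests} a^{x,y}_{\req} \;\leq\; 2 \cdot \Scap(x,y).
\]
This is essentially a bookkeeping argument; there is no real obstacle, since the technical content (the $1/2$ lower bound on $\lambda_{\req}$) has already been handled by Lemma~\ref{lem:wac-pruning}. The only subtlety worth flagging is that the bound $\sum_k \prob \cdot A(\mapping,x,y) \leq a^{x,y}_{\req}$ used here refers to the original (pre-pruning) decomposition, which is why restricting the sum to $\PotEmbeddings'$ can only make it smaller.
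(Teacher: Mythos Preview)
Your proof is correct and is exactly the argument the paper has in mind: it only states that the lemma is ``easy to establish'' from the fact that renormalization scales weights by at most a factor of $2$ (Lemma~\ref{lem:wac-pruning}), and you have spelled out precisely that computation, chaining the allocation bound of Theorem~\ref{thm:decomposability-novel-formulation} with the LP capacity constraint.
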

Plugging in this doubled expected load into the proofs of Lemmas~\ref{lem:approximation-single-node} and \ref{lem:approximation-single-edge}, the approximation factors become $\beta = 2+\varepsilon\cdot \sqrt{2\cdot \Delta \cdot \log(|\SV|\cdot|\types|)}$ and $\gamma = 2+\varepsilon\cdot \sqrt{2\cdot \Delta \cdot \log(|\SV|)}$, respectively, and we state the following lemmas without proof.

\begin{lemma}
\allowdisplaybreaks
\label{lem:approximation-single-node-cost}
Consider a node resource $(\tau,u) \in \SRV$. Choose $0< \varepsilon \leq 1$, such that $\maxDemandV / \Scap(\type,u) \leq \varepsilon$ holds for $\req \in \requests$. Let~$\Delta = \sum_{\req \in \requests} (\maxAllocV / \maxDemandV)^2$ and $\beta = 2+\varepsilon \cdot  \sqrt{2\cdot \Delta \cdot \log(|\SV|\cdot|\types|)}$. Then $\mathbb{P} \Big(  A_{\type,u} \geq \beta \cdot \Scap(\type,u) \Big) \leq  (|\SV|\cdot |\types|)^{-4}$ holds, when approximating the cost variant of the VNEP.
\end{lemma}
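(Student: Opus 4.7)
The plan is to mirror the proof of Lemma~\ref{lem:approximation-single-node} essentially verbatim, with the only substantive change being how the expected allocation bound gets absorbed into the approximation factor $\beta$. The random variables $A_{\req,\tau,u} \in [0, \maxAllocV]$ modelling per-request allocations on the node resource $(\tau,u)$ remain independent across requests (the rounding choices are independent by construction), so Hoeffding's inequality (Theorem~\ref{thm:hoeffding}) applies to their sum $A_{\tau,u} = \sum_{\req \in \requests} A_{\req,\tau,u}$.

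First, I would invoke Hoeffding with the deviation $t = \varepsilon \cdot \sqrt{2\cdot \Delta \cdot \log(|\SV|\cdot|\types|)} \cdot \Scap(\tau,u)$, identical to the profit case. The resulting bound $\exp\!\big(-2t^2 / \sum_{\req} (\maxAllocV)^2\big)$ is then simplified using the inequality $\maxAllocV \leq \varepsilon \cdot \Scap(\tau,u) \cdot \maxAllocV / \maxDemandV$, which follows from the hypothesis $\maxDemandV / \Scap(\tau,u) \leq \varepsilon$. Plugging in the definition $\Delta = \sum_{\req \in \requests} (\maxAllocV/\maxDemandV)^2$ and cancelling yields $\mathbb{P}(A_{\tau,u} - \mathbb{E}(A_{\tau,u}) \geq t) \leq (|\SV|\cdot|\types|)^{-4}$.

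The key difference from the profit setting appears only in the final step, where we translate this deviation bound into an absolute capacity bound. Here we apply Lemma~\ref{lem:expected-load-less-than-cap-lala}, which gives $\mathbb{E}(A_{\tau,u}) \leq 2 \cdot \Scap(\tau,u)$, rather than the tighter $\mathbb{E}(A_{\tau,u}) \leq \Scap(\tau,u)$ available in the profit case. Therefore $A_{\tau,u} \geq \beta \cdot \Scap(\tau,u)$ with $\beta = 2 + \varepsilon \cdot \sqrt{2\cdot \Delta \cdot \log(|\SV|\cdot|\types|)}$ implies $A_{\tau,u} - \mathbb{E}(A_{\tau,u}) \geq t$, and the claimed probability bound $(|\SV|\cdot|\types|)^{-4}$ follows.

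I do not anticipate a real obstacle: the argument is a direct transcription of the proof of Lemma~\ref{lem:approximation-single-node} where the constant ``1'' in $\beta$ is replaced by ``2''. The only point requiring any care is ensuring that the expected-load bound of Lemma~\ref{lem:expected-load-less-than-cap-lala} indeed applies to the post-pruned, re-normalized decompositions $\PotEmbeddings$ used by the cost-variant rounding in Algorithm~\ref{alg:rand-round-cost}, which it does by construction of that lemma. Because of this near-identity with the earlier proof, the authors' choice to state the lemma without proof is well-justified.
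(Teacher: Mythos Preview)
Your proposal is correct and follows exactly the approach the paper indicates: reuse the Hoeffding computation from Lemma~\ref{lem:approximation-single-node} verbatim and replace the bound $\mathbb{E}(A_{\tau,u}) \leq \Scap(\tau,u)$ by $\mathbb{E}(A_{\tau,u}) \leq 2\cdot \Scap(\tau,u)$ from Lemma~\ref{lem:expected-load-less-than-cap-lala}, which shifts the additive constant in $\beta$ from $1$ to $2$. The paper explicitly states this lemma without proof for precisely this reason.
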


\begin{restatable}{lemm}{approximationSingleEdgeCost}
\label{lem:approximation-single-edge-cost}
We consider a single edge $(u,v) \in \SE$. Choose $0< \varepsilon \leq 1$, such that $\maxDemandE / \Scap(u,v) \leq \varepsilon $ holds for all $\req \in \requests$. 
Let $\Delta = \sum_{\req \in \requests} (\maxAllocE / \maxDemandE)^2$ and \mbox{$\gamma = 2+\varepsilon\cdot \sqrt{2\cdot \Delta \cdot \log~|\SV|}$}. Then $\mathbb{P} \Big( A_{u,v} \geq \gamma \cdot \Scap(u,v) \Big)  \leq |\SV|^{-4}$ holds, when approximating the cost variant of the VNEP.
\end{restatable}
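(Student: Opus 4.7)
The lemma is the cost-variant analogue of Lemma~\ref{lem:approximation-single-edge}, and the plan is to run the same argument essentially verbatim, except that the upper bound on $\mathbb{E}(A_{u,v})$ is $2\cdot\Scap(u,v)$ instead of $\Scap(u,v)$ (by Lemma~\ref{lem:expected-load-less-than-cap-lala}), which is exactly what turns the constant $1$ in $\gamma$ into a constant $2$. Since the mappings selected for different requests are chosen independently, the allocation random variables $\{A_{\req,u,v}\}_{\req \in \requests}$ are independent with $A_{\req,u,v} \in [0,\maxAllocE]$, so Hoeffding's inequality (Theorem~\ref{thm:hoeffding}) applies to their sum $A_{u,v} = \sum_{\req \in \requests} A_{\req,u,v}$.

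First I will apply Theorem~\ref{thm:hoeffding} with the deviation parameter $t = \varepsilon \cdot \sqrt{2\cdot \Delta \cdot \log |\SV|}\cdot \Scap(u,v)$, obtaining an upper bound on $\mathbb{P}(A_{u,v} - \mathbb{E}(A_{u,v}) \geq t)$ of the form $\exp\bigl(-2 t^2 / \sum_{\req \in \requests}(\maxAllocE)^2\bigr)$. Next I will use the hypothesis $\maxDemandE \leq \varepsilon \cdot \Scap(u,v)$ to substitute $\maxAllocE \leq \varepsilon \cdot \Scap(u,v)\cdot \maxAllocE/\maxDemandE$ into the denominator, after which $\varepsilon^2$ and $\Scap^2(u,v)$ cancel against $t^2$ and the definition of $\Delta$ reduces the exponent to $-4\log|\SV|$, yielding the tail bound $|\SV|^{-4}$. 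This mirrors the reduction performed in Lemma~\ref{lem:approximation-single-node} and Lemma~\ref{lem:approximation-single-edge}.

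Finally, to translate the deviation bound into the desired capacity-violation bound, I invoke Lemma~\ref{lem:expected-load-less-than-cap-lala}, which gives $\mathbb{E}(A_{u,v}) \leq 2\cdot \Scap(u,v)$ in the cost setting. Thus the event $\{A_{u,v} \geq \gamma \cdot \Scap(u,v)\}$ with $\gamma = 2+\varepsilon\cdot \sqrt{2\cdot \Delta \cdot \log |\SV|}$ is contained in $\{A_{u,v} - \mathbb{E}(A_{u,v}) \geq t\}$, and the tail bound just derived yields $\mathbb{P}(A_{u,v} \geq \gamma \cdot \Scap(u,v)) \leq |\SV|^{-4}$, as claimed.

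There is no real obstacle: the only subtle difference from the profit-variant proof is that the expected allocation is only bounded by twice the capacity (due to the postprocessing step that prunes at most half of each convex combination's weight), but this fact is already isolated in Lemma~\ref{lem:expected-load-less-than-cap-lala} and enters the argument only in the final step where we replace the $1$ by $2$ in the constant term of $\gamma$. The Hoeffding calculation itself is identical to that of Lemma~\ref{lem:approximation-single-edge} and deserves no further elaboration.
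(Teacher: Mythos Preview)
Your proposal is correct and follows exactly the approach indicated in the paper, which states Lemma~\ref{lem:approximation-single-edge-cost} without proof after noting that one simply plugs the doubled expected load bound of Lemma~\ref{lem:expected-load-less-than-cap-lala} into the proof of Lemma~\ref{lem:approximation-single-edge}. The Hoeffding computation you outline is identical to that of Lemma~\ref{lem:approximation-single-edge}, with the sole change being that $\mathbb{E}(A_{u,v}) \leq 2\cdot \Scap(u,v)$ replaces $\mathbb{E}(A_{u,v}) \leq \Scap(u,v)$ in the final step, turning the additive constant $1$ into $2$.
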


\noindent\textbf{Main Result.}
Given the Lemmas~\ref{lem:deterministic-objective-guarantee-without}, \ref{lem:approximation-single-node-cost}, and \ref{lem:approximation-single-edge-cost}, a result analogous to Theorem~\ref{thm:result-for-admission-control} can be obtained for the cost variant of the VNEP.

\begin{restatable}{theorem}{mainResultForAdmissionControlCost}
\label{thm:result-for-admission-control-cost}
Assume that $|\SV| \geq 3$. Let $0 < \varepsilon \leq 1$ be chosen minimally, such that  $\maxDemandV[\req][x][y] / \Scap(x,y) \leq \varepsilon$ holds for $(x,y) \in  \SR$ and $\req \in \requests$. Randomized rounding yields a~$(\alpha,\beta,\gamma)$ tri-criteria FPT approximation for the \emph{cost} variant of the VNEP,  such that it finds a solution \emph{with high probabiliy} of cost at most~$\alpha = 2$ times the optimal cost, and cumulative allocations within factors of $\beta$ (nodes) and $\gamma$ (edges) of the original capacities, for $\beta = 2+\varepsilon \cdot \sqrt{2 \cdot \Delta(\SRV) \cdot \log (|\SV|\cdot|\types|)}$ and $\gamma  = 2 +  \varepsilon \cdot \sqrt{2 \cdot \Delta(\SE) \cdot \log |\SV|  }$ with $\Delta:~\mathcal{P}(\SR)~\to~\mathbb{R}_{\geq 0}$ being defined as $\Delta(X) = \max_{(x,y) \in X} \sum_{\req \in \requests} \left(\frac{\maxAllocX}{ \maxDemandX} \right)^2$.
\end{restatable}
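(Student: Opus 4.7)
The plan is to mirror the argument for Theorem~\ref{thm:result-for-admission-control}, combining the already established deterministic cost guarantee with two probabilistic union bounds for the node and edge capacity violations. The key simplification compared to the profit variant is that no Chernoff-type bound on the objective is needed: Lemma~\ref{lem:deterministic-objective-guarantee-without} already yields $\sum_{\req \in \requests} c(\hat{m}_{\req}) \leq 2\cdot \optLP \leq 2 \cdot \optIP$ \emph{with probability one} for any solution produced by Algorithm~\ref{alg:rand-round-cost}, so $\alpha = 2$ is guaranteed deterministically.

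Next, I would bound the failure probability for the capacity constraints. By Lemma~\ref{lem:approximation-single-node-cost}, for each fixed node resource $(\type,u) \in \SRV$ the probability of exceeding $\beta \cdot \Scap(\type,u)$ is at most $(|\SV| \cdot |\types|)^{-4}$, and by Lemma~\ref{lem:approximation-single-edge-cost} the corresponding probability for each substrate edge $(u,v) \in \SE$ is at most $|\SV|^{-4}$. A union bound over all $|\SRV| \leq |\SV| \cdot |\types|$ node resources and all $|\SE| \leq |\SV|^2$ edge resources yields a total capacity-violation probability of at most $(|\SV|\cdot|\types|)^{-3} + |\SV|^{-2}$, which is strictly less than $1$ (in fact bounded away from $1$ by a constant) for $|\SV| \geq 3$. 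Hence, a single execution of the rounding step produces an $(\alpha,\beta,\gamma)$-approximate solution with probability at least some constant $p > 0$, and after $N$ independent rounding tries the overall success probability is at least $1 - (1-p)^N$, i.e.~\emph{with high probability}.

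Finally, for the runtime and FPT claim, the proof reuses the argument from Theorem~\ref{thm:result-for-admission-control}: by Theorem~\ref{thm:size-of-formulation-is-exponential-in-ex-width} the size of Formulation~\ref{IP:novel-AC} is bounded by $\mathcal{O}(\sum_{\req \in \requests} |\SG|^{\ewX(\VGbfs)} \cdot |\VG|)$, so it can be solved in polynomial time in its size via the Ellipsoid method~\cite{grotschel1988geometric}. The decomposition step runs within the same bound by Theorem~\ref{thm:decomposability-novel-formulation}, the postprocessing in Lines~3--7 of Algorithm~\ref{alg:rand-round-cost} only inspects the at most polynomially many convex combinations per request, and each of the $N$ rounding tries is cheap. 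This makes the overall runtime fixed-parameter tractable in the maximum extraction width $\max_{\req \in \requests} \ewX(\VGbfs)$. The only delicate point in the argument is making sure the doubled expected load appearing in Lemma~\ref{lem:expected-load-less-than-cap-lala} is correctly absorbed into the additive ``$2$'' of $\beta$ and $\gamma$, but this is exactly what Lemmas~\ref{lem:approximation-single-node-cost} and \ref{lem:approximation-single-edge-cost} already encode, so no further work is required.
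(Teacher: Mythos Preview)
Your proposal is correct and follows essentially the same approach the paper intends: the paper does not spell out a separate proof for this theorem but states that it is obtained ``analogously to Theorem~\ref{thm:result-for-admission-control}'' from Lemmas~\ref{lem:deterministic-objective-guarantee-without}, \ref{lem:approximation-single-node-cost}, and~\ref{lem:approximation-single-edge-cost}, which is precisely the union-bound-plus-repetition argument you give. Your observation that the objective bound is deterministic (so the $\exp(-2/9)$ term from the profit proof drops out of the union bound) is the one genuine simplification relative to Theorem~\ref{thm:result-for-admission-control}, and it is exactly right.
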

}

\section{Extraction Width: Graph Classes and Complexity}
 \label{sec:novel-formulation:size-of-formulation-and-graph-classifications}

  \newcommand{\VC}{\ensuremath{{\textnormal{VC}}}}
  \newcommand{\VCi}[1][i]{\ensuremath{{\textnormal{VC}}_i}}
 
 The runtime of our approximation algorithms grows exponentially in the maximal width of any extraction order. Hence, three questions arise: (i) are there any polynomial-time (unparameterized) approximations for the VNEP, (ii) which graph classes have a bounded extraction width, and (iii) how can extraction orders of minimal width be computed? 
 
The first question can be answered quite easily: there cannot exist polynomial-time approximations (unless $\compPeqNP$), as the following theorem shows.

\begin{theorem}
\label{thm:hardness-of-fractional-vnep}
The \emph{fractional} VNEP is $\NPhard$ and inapproximable (unless $\compPeqNP$). This even holds when request graphs are planar.
\end{theorem}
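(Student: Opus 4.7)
The plan is to reduce planar 3-colorability to deciding positivity of the optimal objective of the fractional VNEP. Recall from Formulation~\ref{alg:mapping-formulation} that fractional feasible solutions are convex combinations of valid mappings, so the LP objective of the profit variant equals $\sum_{\req, m} f_{\req,m} \cdot \Vprofit$ and is strictly positive only when at least one variable $f_{\req, m}$ with $m \in \spaceSolReq$ is positive. Hence the LP value is $>0$ iff at least one request admits a valid mapping respecting the capacities. Distinguishing $0$ from $>0$ therefore inherits the complexity of the mapping-existence decision problem, and immediately rules out any multiplicative-factor polynomial-time approximation of the fractional VNEP, since no finite ratio can separate $0$ from a strictly positive optimum.

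I would reduce from 3-colorability of planar graphs, which is $\NPhard$. Given a planar instance $G = (V, E)$, construct a single request $\req$ with $\VG = G$ (planar), unit demands on all virtual nodes and edges, and benefit $\Vprofit = 1$. The substrate $\SG$ is a tripartite graph on the node set $V \times \{1,2,3\}$: the substrate edges are all pairs $\{(v,c), (w,c')\}$ with $(v,w) \in E$ and $c \neq c'$, and every substrate node and edge has unit capacity. Mapping restrictions are set to $\VVloc[v] = \{(v,1), (v,2), (v,3)\}$ for every virtual node $v \in V$, and $\VEloc[v,w]$ is the set of substrate edges between the copies of $v$ and those of $w$.

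The next step is an equivalence check. In any valid mapping each virtual node $v$ is sent to some copy $(v, c_v)$ — equivalently, $v$ receives color $c_v$ — and the restriction on $\VEloc[v,w]$ forces $c_v \neq c_w$ for every virtual edge $(v,w)$, yielding a proper 3-coloring. Conversely, any proper 3-coloring induces a valid mapping routing each virtual edge along the single allowed substrate edge between its endpoints' copies; unit node capacities are trivially satisfied because distinct virtual nodes map to distinct copies, and each allowed substrate edge is, by construction, permitted for only one virtual edge, so unit edge capacities hold automatically. Setting $x_{\req} = 1$ together with $f_{\req, m} = 1$ for that mapping gives a fractional solution of value $1$, whereas absence of a 3-coloring forces optimal value $0$.

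Combining the two steps, solving the fractional VNEP exactly decides planar 3-colorability, establishing $\NPhardness$, and any multiplicative approximation does the same, establishing inapproximability under $\compPneqNP$. The main — and essentially only — conceptual point is the first observation that in Formulation~\ref{alg:mapping-formulation} the LP value inherits the $0$/positive gap of the underlying integer mapping-existence problem; the reduction itself is routine. The same construction also yields $\NPhardness$ for the cost variant, since the cost LP is \emph{infeasible} precisely when the single request admits no valid mapping, so even checking feasibility of the cost LP is $\NPhard$.
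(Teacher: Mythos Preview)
Your overall strategy matches the paper's: both arguments observe that the fractional VNEP's optimal value in Formulation~\ref{alg:mapping-formulation} is positive iff some valid mapping exists, so hardness of the mapping-existence problem transfers directly to hardness and inapproximability of the fractional VNEP. The paper simply cites prior work~\cite{rostSchmidVNEPComplexity} for the $\NPcompleteness$ of deciding whether a planar request admits a valid mapping under node and edge restrictions; you instead supply an explicit reduction from planar 3-colorability, which is a perfectly reasonable way to make the proof self-contained.

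There is, however, a genuine gap in your reduction. Edge mappings in Definition~\ref{def:valid-mapping} are \emph{paths}, not single substrate edges, and with your (bidirectional) substrate the set $\VEloc[v,w]$ contains the simple length-3 path $(v,1)\to(w,2)\to(v,3)\to(w,1)$. Hence a valid mapping exists for \emph{every} assignment $v\mapsto(v,c_v)$, regardless of whether the induced coloring is proper, and the backward implication (valid mapping $\Rightarrow$ proper 3-coloring) fails outright: $\spaceSolReq$ is never empty in your instance. The fix is straightforward: orient the request $G$ arbitrarily, and for each resulting directed request edge $(v,w)$ include \emph{only} the directed substrate edges $(v,c)\to(w,c')$ with $c\neq c'$ (and set $\VEloc[v,w]$ to exactly these). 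Then every admissible path from a $v$-copy must terminate after one hop at a $w$-copy of a different color, forcing $c_v\neq c_w$. With this correction your argument goes through and yields a self-contained version of what the paper obtains by citation.
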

\begin{proof}
The authors of this paper have recently shown in \cite{rostSchmidVNEPComplexity} that finding \emph{valid} mappings for planar requests is $\NPcomplete$ when allowing for node and edge placement restrictions. Hence, optimizing over a convex combination of \emph{valid mappings} is $\NPhard$ and the fractional VNEP is (polynomial-time) inapproximable (unless $\compPeqNP$). This even holds when not considering substrate graph capacities.
\end{proof}

The above theorem validates our FPT approach to computing solutions to the fractional VNEP, as there cannot exist polynomial-time algorithms for general request graphs and unless $\compPeqNP$ holds. It furthermore underlines that the complexity of the request graphs must be reflected when computing solutions to the fractional VNEP.

 \subsection{Graph Classes of Bounded Extraction Width}
 Given the impossibility of polynomial-time approximations for arbitrary request graphs, we now study graph classes that have bounded extraction width. In particular, we show that `cactus graph requests' and generalizations thereof have bounded extraction width.

 \begin{restatable}{theorem}{lemExtractionWidthCactus}
    \label{lem:extraction-width-cactus}
    Consider a cactus request graph $\VG$, i.e., one for which cycles in its unidrected interpretation intersect in at most a single node. Then $\ewX(\VGbfs) \leq 2$ holds  for \emph{any} extraction order $\VGbfs$. Hence, our approximations run in polynomial-time 
    for cactus graph requests.
    \end{restatable}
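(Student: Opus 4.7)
The plan is to show that for any extraction order $\VGbfs$ of a cactus request graph, every edge bag carries at most a single label, so that $\ewX(\VGbfs) = 1 + \max_{i,\VEbfsBag} |\VEbfsLabelsBag| \leq 2$.

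The crucial structural fact I would exploit is that in a cactus, any simple undirected cycle coincides with one of the cactus's blocks: two distinct simple cycles share at most one vertex, and cycles meeting in a single vertex form a figure-eight rather than a simple cycle. Consequently, for any confluence $\VEbfsAC$ in $\VGbfs$, the union of its two internally node-disjoint directed paths is a simple undirected cycle and so must lie inside a single cactus cycle $C$; moreover, any two internally node-disjoint simple paths between nodes of $C$ stay within $C$, since a cactus block can only be entered and exited through the same cut vertex.

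Next, I would fix any cactus cycle $C$ and study the orientation inherited from $\VEbfs$. Since $\VGbfs$ is a DAG, the restriction to $C$ is an acyclic orientation of a cycle and is therefore bipolar: there is a unique source $s_C \in C$ (both cycle-edges at $s_C$ outgoing), a unique sink $t_C \in C$ (both cycle-edges incoming), while every other vertex of $C$ has exactly one in- and one out-edge on $C$. Combined with the previous paragraph, the only pair of nodes of $C$ admitting two internally node-disjoint directed paths in $\VGbfs$ is $(s_C, t_C)$: for any $u \neq s_C$ only one directed path can leave $u$ within $C$, and leaving $C$ is impossible. Thus each cactus cycle contributes exactly one confluence (from $s_C$ to $t_C$), and by Lemma~\ref{lem:labels:edge-labeling-conditions} every edge of $C$ receives the label set $\{t_C\}$ while bridges of the cactus receive empty label sets.

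Finally, I consider any node $i \in \VV$ and any bag $\VEbfsBag \in \VEbfsBags$. A bridge $e \in \deltaPlusA$ has empty labels and hence forms its own singleton bag with empty label set. For a non-bridge outgoing edge of $i$ lying on some cycle $C$, necessarily $i \neq t_C$, since otherwise both cycle-edges of $C$ at $i$ would be incoming. Hence for two outgoing edges at $i$ on distinct cactus cycles $C \neq C'$, the targets satisfy $t_C \neq t_{C'}$: otherwise their common target would lie in $C \cap C' = \{i\}$, forcing $t_C = i$, a contradiction. Thus edges on different cycles at $i$ carry disjoint singleton labels and cannot be merged into one bag; every bag of non-bridge edges lies on a single cycle with label set $\{t_C\}$ of size $1$. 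This yields $|\VEbfsLabelsBag| \leq 1$ and hence $\ewX(\VGbfs) \leq 2$, from which polynomial runtime of the approximations for cactus requests follows via Theorems~\ref{thm:result-for-admission-control} and \ref{thm:result-for-admission-control-cost}. The main subtlety is the last step ruling out two cycles through $i$ sharing a common target --- this relies precisely on the cactus property $|C \cap C'| \leq 1$.
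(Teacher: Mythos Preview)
Your approach is sound and ultimately correct, but it is considerably more elaborate than the paper's, and one step is under-justified. You assert that ``since $\VGbfs$ is a DAG, the restriction to $C$ is an acyclic orientation of a cycle and is therefore bipolar.'' This is false as stated: an acyclic orientation of an $n$-cycle can have any number $k \geq 1$ of sources (and the same number of sinks) --- e.g., the $4$-cycle oriented $1 \to 2 \leftarrow 3 \to 4 \leftarrow 1$ has two sources and two sinks. What makes bipolarity hold here is precisely the cactus structure you invoked one paragraph earlier: since the block $C$ attaches to the rest of the graph through a single cut vertex, the root of $\VGbfs$ reaches $C$ through a unique node $v \in C$, and $v$ must then be the \emph{unique} local source of the induced orientation on $C$ (any other local source $u \in C$ would have to be reached from $v$ along $C$, giving $u$ an incoming cycle-edge). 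With this fix, uniqueness of the sink follows by alternation and the rest of your argument goes through.

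The paper's proof bypasses this per-cycle analysis entirely with a two-line observation: if some edge $e$ carried two distinct labels $j \neq j'$, there would be two confluences (with targets $j$ and $j'$) both containing $e$; since a confluence is a simple undirected cycle and two confluences with different targets cannot coincide as edge sets of $\VGbfs$, one obtains two distinct undirected cycles sharing the edge $e$ --- and hence sharing both of its endpoints, violating the cactus property. From $|\VEbfsLabels| \leq 1$ for all $e$ the bag-label bound follows immediately, because nonempty singleton label sets are either equal or disjoint. Your route yields finer structural information (pinpointing each cycle's unique confluence target), while the paper's is shorter and sidesteps the bipolarity subtlety.
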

    \begin{proof}
    Consider any extraction order $\VGbfs$. $|\VEbfsLabels| \leq 1$ must hold for all edges $e \in \VEbfs$: if this was not the case then two confluences would overlap in $e$ and violate the cactus property. Thus, edge label sets are either equal or disjoint and the maximal edge bag size is~1.
    \end{proof}

While the class of cactus graphs is restrictive, it can be shown that by adding edges parallel to existing ones the width increases by at most the maximum degree of the graph:
\begin{lemma}
 \label{lem:extraction-width-parallel-edges}
 Given an arbitrary graph $\VG$, adding any number of parallel edges (of any direction) for an existing edge does increase the extraction width of $\VG$ by at most the maximum degree of $\VG$. This also holds true if instead paths are added instead of edges.
 \end{lemma}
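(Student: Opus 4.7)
The plan is to take an optimal extraction order $\mathcal{X}$ for $G_r$ and build an extraction order $\mathcal{X}'$ for the augmented graph $G_r'$ by keeping all orientations from $\mathcal{X}$ and orienting each newly added parallel edge, or every edge of a newly added parallel path, in the same direction as $e=(i,j)$ in $\mathcal{X}$ (i.e., from $i$ towards $j$). Because the new internal nodes of each parallel path connect into $G_r$ only at $i$ (as predecessor) and at $j$ (as successor), acyclicity and reachability from the root of $\mathcal{X}$ are both preserved, so $\mathcal{X}'$ is a valid extraction order for $G_r'$.

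The bulk of the argument consists in showing that the labels in $\mathcal{X}'$ are almost unchanged compared to those in $\mathcal{X}$. I would establish three claims: (a)~for every existing edge $e' \in E_r \setminus \{e\}$, $\mathcal{L}^{\mathcal{X}'}_{r,e'} = \mathcal{L}^{\mathcal{X}}_{r,e'}$; (b)~$\mathcal{L}^{\mathcal{X}'}_{r,e} \subseteq \mathcal{L}^{\mathcal{X}}_{r,e} \cup \{j\}$; and (c)~for every new edge $\tilde e$ lying on a parallel path, $\mathcal{L}^{\mathcal{X}'}_{r,\tilde e} \subseteq \mathcal{L}^{\mathcal{X}}_{r,e} \cup \{j\}$. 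The common template is a path-replacement argument: any genuinely new confluence in $\mathcal{X}'$ must use a newly introduced internal node, and any simple path visiting such a node must traverse the corresponding parallel path in full, entering at $i$ and leaving at $j$. Swapping this parallel subpath for the single edge $e$ produces a confluence in $\mathcal{X}$ on the same source--target pair; either $e$ itself lies on this witness confluence (proving (a) and most of (b)), or the two paths of the new confluence must both traverse $\{i,j\}$, which forces the source--target pair to equal $\{i,j\}$ and hence the target label to be $j$ (covering (c) and the residual label $j$ in (b)).

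The main obstacle is controlling the bag structure at $i$: the new label $j$ on $e$ could in principle cause $e$'s bag at $i$ to merge with some other bag at $i$ that already contains $j$. The crucial structural step to prove is that if $j \notin \mathcal{L}^{\mathcal{X}}_{r,e}$, then $j$ does not appear in the label set of any other outgoing edge of $i$ in $\mathcal{X}$. Assuming for contradiction that some $e'' \in \delta^+_{\mathcal{X}}(i) \setminus \{e\}$ carries the label $j$ via a confluence $(P_1, P_2)$ with $e'' \in P_1$ and target $j$, note that $P_2$ reaches $j$ while avoiding every internal node of $P_1$, in particular avoiding $i$ whenever $i$ is internal to $P_1$. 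Replacing the portion of $P_1$ running from the common source through $i$ and then via $e''$ onwards to $j$ by the single-edge path $\langle e \rangle$ yields a second source-to-$j$ path whose internal nodes are contained in those of $P_1$ together with $\{i\}$; hence it remains internally node-disjoint from $P_2$. The resulting pair is a confluence through $e$ with target $j$, contradicting $j \notin \mathcal{L}^{\mathcal{X}}_{r,e}$. Since distinct bags at any node have pairwise disjoint label sets, this single-label argument rules out every cascading merge at $i$.

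Combining these pieces, $e$'s bag at $i$ in $\mathcal{X}'$ grows by at most the single label $j$ and absorbs all newly added outgoing edges at $i$ without merging with any other bag; every new internal node $u$ of a parallel path hosts a single outgoing edge whose label set lies in $\mathcal{L}^{\mathcal{X}}_{r,e} \cup \{j\}$, so its size does not exceed that of $e$'s bag in $\mathcal{X}$ plus one; and all other bags are untouched. Therefore the width of $\mathcal{X}'$ is at most the width of $\mathcal{X}$ plus one, which yields $\ew(G_r') \leq \ew(G_r) + 1 \leq \ew(G_r) + \Delta(G_r)$, using that $\Delta(G_r) \geq 1$ whenever the edge $e$ exists.
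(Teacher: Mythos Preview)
Your approach is essentially the same as the paper's: start from an optimal extraction order, orient every new parallel edge or path in the direction of $e$, and argue that the only possible new label is $j$, appearing only in the bag of $i$ that already contains $e$. Your execution is considerably more careful than the paper's (which is little more than a sketch); in particular, your explicit path-replacement claims (a)--(c) and the no-merge argument at $i$ are exactly the details the paper glosses over, and they in fact yield the sharper bound $\ew(G_r') \le \ew(G_r)+1$ for a single augmented edge, which you then relax to $+\Delta(G_r)$.
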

 \begin{proof}
 Consider an extraction order $\VGbfs$ minimizing the width. Let $e=(i,j) \in \VE$ be an existing edge and assume without loss of generality that the orientation of the edge $e$ is the same in $\VGbfs$. Now, when adding another edge $e'=(i,j)$ or $e''=(j,i)$ to $\VE$, we orient the edge the same way as the original edge $e$. Hence, $e'$ would be introduced to $\VEbfs$ as is, and the orientation of $e''$ would be reversed. Now, if the node $j$ was previously not the target of a confluence, then by introducing $e'$ or $e''$ a new confluence was created and the size of the edge bag of node $i$ containing the edge $(i,j) \in \VEbfs$ increases by one. However, adding $e'$ or $e''$ cannot introduce confluences beyond that: the addition of a \emph{parallel} edge cannot enable a novel confluence to be created.
 Hence, arbitrarily many parallel edges can be added while increasing the size of an edge bag of the tail of the edge by at most one per outgoing edge. Hence, arbitrarily many parallel edges can be created for any existing edge while increasing the sizes of a single edge bag per node by at most one per outgoing edge. Hence, the extraction width of the resulting graph has increased by at most the maximum degree of the original graph $\VG$.
 \end{proof}
 
Lastly, we note that the examples depicted in Figure~\ref{fig:vnet-examples-sc-vc} have small extraction widths.
 
 \begin{observation}
 \label{obs:request-examples}
 The example request graphs depicted in Figure~\ref{fig:vnet-examples-sc-vc} have an extraction width of $2$ and $3$ respectively.
 \end{observation}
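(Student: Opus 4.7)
The proof is by explicit construction: for each of the two request graphs in Figure~\ref{fig:vnet-examples-sc-vc}, I would exhibit an extraction order $\VGbfs$ realizing the claimed width and then argue a matching lower bound.

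The first step is to read off the directed edge set $\VE$ of each request graph from the figure, distinguishing bidirectional links between two virtual nodes (which contribute a pair of oppositely-oriented arcs to $\VE$) from unidirectional ones. I would then choose a natural root: the central logical switch for the virtual cluster and an endpoint (e.g., the ingress) for the service chain. Remaining edges are oriented so that $\VEbfs$ is a DAG in which every virtual node is reachable from the root; pairs of antiparallel arcs produce pairs of parallel arcs in $\VEbfs$ after reversal.

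Given the extraction order, I compute the edge labels $\VEbfsLabels$ by enumerating confluences $\VEbfsAC[i][j]$ via Lemma~\ref{lem:labels:edge-labeling-conditions}, and then the bag label sets $\VEbfsLabelsBag$ via Definition~\ref{def:edge-bags}; taking the maximum bag-label size plus one yields $\ewX(\VGbfs)$. For the graph whose undirected projection (after identifying antiparallel arcs) is a cactus, Theorem~\ref{lem:extraction-width-cactus} immediately gives $\ewX(\VGbfs) \leq 2$, and the presence of a directed $2$-cycle forces at least one confluence in any extraction order, hence width at least $2$. For the graph whose undirected projection contains two cycles sharing more than a single vertex, two confluences with \emph{distinct} targets are forced to share an outgoing arc at some branching node in any extraction order, producing a bag label set of size $2$ and width $1+2=3$; the explicit construction above realizes this bound.

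The main obstacle is the lower bound for the graph of width $3$, i.e., showing that \emph{no} extraction order achieves width $2$. I expect to handle this by a structural argument dual to the cactus reasoning in Theorem~\ref{lem:extraction-width-cactus}: pinpointing a pair of cycles in the undirected projection that share more than one vertex and observing that, regardless of the chosen root and orientation, the shared portion forces two distinct confluences to land in the same edge bag. Since the graphs in Figure~\ref{fig:vnet-examples-sc-vc} are small, this case analysis can be completed by direct inspection of the edge set, and the lower bound for the smaller graph follows immediately from the existence of a confluence.
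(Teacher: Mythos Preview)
Your approach is correct but takes a different route from the paper. The paper's proof establishes only upper bounds: for the virtual cluster it invokes Theorem~\ref{lem:extraction-width-cactus} directly (it is a cactus), and for the service chain it does \emph{not} compute labels and bags explicitly but instead observes that the solid edges alone form a cactus of width~$2$ and then applies Lemma~\ref{lem:extraction-width-parallel-edges} to bound the increase caused by the dashed edges, arriving at~$3$ after inspecting which new confluence targets the parallel edges actually create. The paper never addresses the lower bounds you propose to prove.

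Your plan---an explicit extraction order, direct enumeration of confluences via Lemma~\ref{lem:labels:edge-labeling-conditions}, and a separate lower-bound argument---is more elementary and, if carried out, yields the exact width rather than just an upper bound. The paper's use of Lemma~\ref{lem:extraction-width-parallel-edges} is slicker for the service-chain upper bound because it sidesteps the label computation entirely, but it leaves the claimed equality unproven. The substantive work in your plan is the lower bound for the width-$3$ graph, and your intuition (two cycles sharing more than one vertex force two distinct confluence targets into a common bag regardless of root and orientation) is the right structural observation; on a graph this small the case analysis over root choices is routine. One caveat: two confluences need not literally ``share an outgoing arc'' to land in the same bag---it suffices that their label sets chain together through overlapping edges at a single node per Definition~\ref{def:edge-bags}---so phrase the lower-bound argument in terms of overlapping label sets rather than a shared edge.
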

 \begin{proof}
The request graph depicted on the right of Figure~\ref{fig:vnet-examples-sc-vc} is a cactus graph and hence has width $2$ by Lemma~\ref{lem:extraction-width-cactus}. 

Considering the request graph on the left, we note that when only considering the solid edges, the graph is a cactus and the width is hence $2$. By adding the `parallel' dashed edges, the extraction width increases by at most the maximum degree $3$ of the cactus graph according to Lemma~\ref{lem:extraction-width-parallel-edges}. However, considering the proof of Lemma~\ref{lem:extraction-width-parallel-edges}, we see that for the node $\textnormal{LB}_1$, only $2$ outgoing edges exist (according to the solid edges) and hence the width increases by at most $2$. Furthermore, as the node $\textnormal{LB}_2$ is already the target of a confluence, the overall extraction width increases by $1$ and the width is hence $3$.
 \end{proof}
 
\subsection{Hardness of Computing Extraction Orders}

Given the above examples, we now study the computational complexity of finding extraction orders of minimum width. In fact, we prove the $\NPhardness$ of computing optimal extraction orders. Our prove relies on a reduction from vertex cover. As a first step towards this goal, consider the following lemma.

  \begin{figure}[b!]

    \centering
    \includegraphics[width=0.9\textwidth]{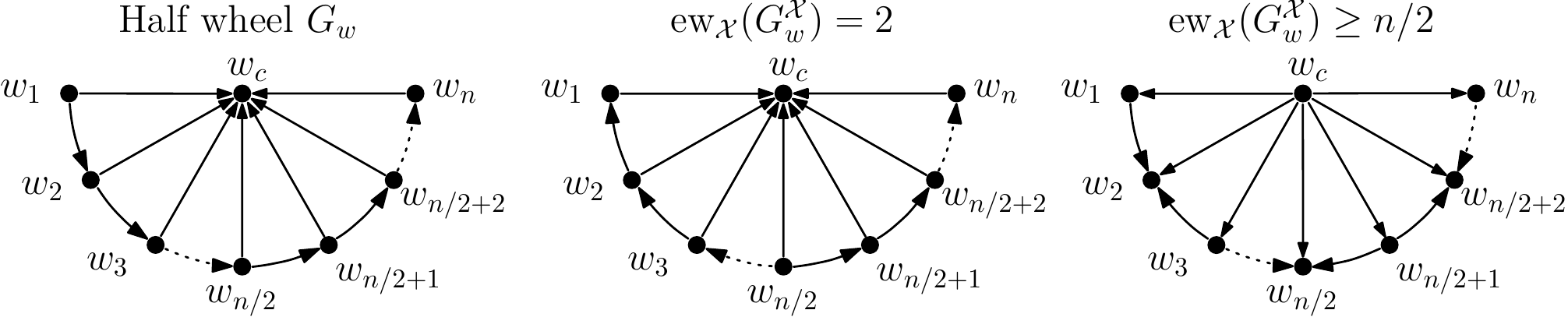}
    %\VSPACE{-12pt}
    \caption{Depicted is an arbitrarily oriented `half wheel' (left) together with two extraction orders: The extraction order in the center is rooted at $w_{n/2}$ and has width $2$. The other order on the right is rooted at $w_c$ and has a width of at least $n/2$ (shown below).}
    \label{fig:half-wheel}
    \vspace{-12pt}
    \end{figure}

\begin{lemma}
\vspace{-4pt}
 \label{lem:vertex-cover-reduction-lem-1}
 Consider the half wheel graph $G_w$ depicted in Figure~\ref{fig:half-wheel} and any extraction order $\VGbfs[w]$ being rooted at $w_c$. Letting $\VC = \{ w_k \in \VV[w]| (w_{k-1},w_k) \in \VEbfs \vee (w_{k+1},w_k) \in \VEbfs\}$, the following holds: $\ewX(\VGbfs[w]) = |\VC| + 1$.
 \end{lemma}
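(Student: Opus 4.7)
The plan is to prove the identity by matching upper and lower bounds on the maximum number of labels appearing in a single edge bag. First, I note that since $w_c$ is the root of the DAG $\VGbfs[w]$, every spoke $(w_c, w_i)$ must be oriented away from $w_c$ in $\VEbfs[w]$ (otherwise $w_c$ would have an incoming edge from a node reachable from it, violating acyclicity). Hence the outgoing edges of $w_c$ in the extraction order are exactly the $n$ spokes.

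For the upper bound $\ewX(\VGbfs[w]) \leq |\VC| + 1$, I claim that every edge label lies in $\VC$. Any label $w_j$ is the target of some confluence, whose two otherwise node-disjoint paths arrive at $w_j$ via two distinct incoming edges of $w_j$ in $\VEbfs[w]$. Since $w_j$ has at most one non-path incoming edge (the spoke from $w_c$), at least one of these two final edges must be an incoming path edge, which by definition places $w_j$ in $\VC$. Thus the label set of every bag is contained in $\VC$, yielding the upper bound.

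For the lower bound, I will exhibit a single bag at $w_c$ whose label set equals $\VC$. For each $w_k \in \VC$ there is an incoming path edge $(w_{k'}, w_k) \in \VEbfs[w]$ with $k' \in \{k-1, k+1\}$, and this yields the confluence $\VEbfsAC[w_c][w_k]$ with paths $\{(w_c, w_k)\}$ and $\{(w_c, w_{k'}), (w_{k'}, w_k)\}$. Consequently both spokes $(w_c, w_k)$ and $(w_c, w_{k'})$ carry the label $w_k$, so $w_k$ appears in the label set of whatever bag at $w_c$ contains $(w_c, w_k)$.

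The main obstacle is to show that all spokes $(w_c, w_k)$ with $w_k \in \VC$ in fact end up in one single bag at $w_c$. The key combinatorial observation is that gaps in $\VC$ along the path have length at most one: if an interior index $m$ satisfies $w_m \notin \VC$, then neither $(w_{m-1}, w_m)$ nor $(w_{m+1}, w_m)$ lies in $\VEbfs[w]$, forcing the opposite orientations $(w_m, w_{m-1}), (w_m, w_{m+1}) \in \VEbfs[w]$ and thereby placing both $w_{m-1}$ and $w_{m+1}$ in $\VC$ (the boundary cases $m \in \{1, n\}$ are analogous). Hence successive $\VC$-indices differ by at most two. For any two consecutive elements $w_k, w_{k''}$ of $\VC$ with $k'' \in \{k+1, k+2\}$, either $(w_c, w_k)$ and $(w_c, w_{k''})$ already share a label via the confluence constructed above (case $k'' = k+1$), or they are linked through the intermediate spoke $(w_c, w_{k+1})$: the confluence to $w_k$ forces $(w_c, w_{k+1})$ to carry label $w_k$ while the confluence to $w_{k''}$ forces it to carry label $w_{k''}$. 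Chaining these pairwise links, all spokes $(w_c, w_k)$ with $w_k \in \VC$ are placed in one bag, so its label set contains $\{w_k : w_k \in \VC\} = \VC$. Combined with the upper bound, this gives $\ewX(\VGbfs[w]) = |\VC| + 1$.
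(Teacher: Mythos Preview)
Your proof is correct and follows essentially the same approach as the paper: both obtain the upper bound by observing that any confluence target must have two distinct incoming edges in $\VGbfs[w]$ (hence lies in $\VC$), and the lower bound by showing that the spokes at $w_c$ form a single edge bag whose label set equals $\VC$.

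One small imprecision worth noting: in the case $k'' = k+1$ you assert the two spokes ``share a label via the confluence constructed above,'' but the confluence you built for $w_k$ might use $k' = k-1$, and the one for $w_{k+1}$ might use $k'=k+2$. The fix is immediate: the path edge between $w_k$ and $w_{k+1}$ is oriented toward one of them, and the confluence for \emph{that} node uses both spokes. In fact this stronger observation---that \emph{any} two adjacent spokes $(w_c,w_i)$ and $(w_c,w_{i+1})$ share a label (namely the head of the path edge between $w_i$ and $w_{i+1}$)---is exactly what the paper uses, which makes your $\VC$-gap analysis unnecessary and directly yields that all $n$ spokes lie in one bag.
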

 \begin{proof}
We denote by $G_w^i=(V^i_w,E^i_W)$ the subgraph of $\VGbfs[w]$ induced by the set of nodes $\{w_1, \dots, w_i\} \cup \{w_c\}$. Let $\VCi = \{ w_k \in V^i_w| (w_{k-1},w_k) \in \VEbfs \vee (w_{k+1},w_k) \in \VEbfs\}$. 

Via induction over the subgraphs $G^i_w$ it can be seen that the edges $e_k=(w_c,w_k)$, and $e_{k-1}=(w_c,w_{k-1})$ are either both labeled by $w_k$ (if $w_k \in \VCi$) or by $w_{k-1} \in \VCi$ (if $w_k \notin \VCi$) for all $k \in \{2,\dots,i\}$. 

Observing that $\VC_{n} = \VC$  equals the labels introduced in $\VGbfs[w]$ and noting that the edge label sets $\VEbfsLabels[e_i]$ and $\VEbfsLabels[e_{i+1}]$ overlap for all $i \in \{1,\ldots, n-1\}$, the root $w_c$ must have a single edge bag containing all the labels contained in $|\VCi|$. Hence, $\ewX(\VGbfs[w]) \geq |\VC|+1$ follows. Furthermore, only the nodes contained in $|\VC|$ can be labels (a node not contained in $\VC$ has only a single incoming edge) and the result follows.
 \end{proof}
 
 By Lemma~\ref{lem:vertex-cover-reduction-lem-1}, the following corollary is immediate.
 \begin{corollary}
 \label{cor:minimal-wheel-if-wc-is-root}
Consider a wheel graph $G_w$ with $n$ outer nodes. Considering any extraction order $\VGbfs[w]$ for which the node $w_c$ is chosen to be the root, $\ewX(\VGbfs[w]) \geq \lfloor n/2 \rfloor +1$ holds.
 \end{corollary}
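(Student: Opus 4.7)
The plan is to first observe that the labeling argument of Lemma~\ref{lem:vertex-cover-reduction-lem-1} carries over to the full wheel essentially unchanged, yielding $\ewX(\VGbfs[w]) \geq |\VC|+1$ where $\VC \subseteq \{w_1,\ldots,w_n\}$ is the set of outer nodes $w_k$ that are the \emph{head} of at least one of their two incident outer-cycle edges in $\VEbfs[w]$. The only new ingredient compared to the half-wheel is the closing edge between $w_n$ and $w_1$; but the inductive argument in Lemma~\ref{lem:vertex-cover-reduction-lem-1} only uses that each outer node admits at least two internally disjoint $w_c$-paths whenever there is a directed cycle-path leading into it, which is still the case here. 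So the same confluence-based labeling reasoning shows that every label introduced at the root $w_c$ corresponds to a node in $\VC$, that the incident spokes $e_k=(w_c,w_k)$ inherit overlapping label sets, and that hence all these labels lie in one common edge bag at $w_c$, giving $\ewX(\VGbfs[w]) \geq |\VC|+1$.

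Next I would lower-bound $|\VC|$ by a simple counting argument on the outer cycle. The outer cycle has exactly $n$ edges, and each edge, when oriented by $\VEbfs[w]$, contributes exactly one of its two endpoints to $\VC$ (its head). Because every outer node has cycle-degree two, it can be the head of at most two outer-cycle edges, so
\[
2\cdot|\VC| \;\geq\; \sum_{w_k\in \VC} (\text{number of outer edges with head } w_k) \;=\; n,
\]
and therefore $|\VC|\geq \lceil n/2\rceil \geq \lfloor n/2 \rfloor$. Combining with the previous bound yields
\[
\ewX(\VGbfs[w]) \;\geq\; |\VC|+1 \;\geq\; \lfloor n/2 \rfloor + 1,
\]
which is the claim.

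The only step that genuinely requires care is transporting Lemma~\ref{lem:vertex-cover-reduction-lem-1} from the half-wheel to the full wheel; the counting is routine. I expect the honest way to handle this is to note that the proof of Lemma~\ref{lem:vertex-cover-reduction-lem-1} is stated for the subgraphs $G_w^i$ via induction, and the same induction applies verbatim to the full wheel after incorporating the closing outer edge as an additional step, simply extending $\VC$ by possibly one more node without changing the structural conclusion that all labels collapse into a single edge bag at $w_c$.
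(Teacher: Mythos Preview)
Your proof is correct and follows the same idea the paper has in mind: apply Lemma~\ref{lem:vertex-cover-reduction-lem-1} to get $\ewX(\VGbfs[w]) \geq |\VC|+1$, then observe that $\VC$ is a vertex cover of the outer path/cycle and lower-bound its size by $\lfloor n/2\rfloor$ via the obvious counting. The paper simply declares the corollary ``immediate'' from Lemma~\ref{lem:vertex-cover-reduction-lem-1}, leaving the vertex-cover counting implicit.

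One remark: you read ``wheel graph'' as a \emph{full} wheel and therefore spend effort arguing that Lemma~\ref{lem:vertex-cover-reduction-lem-1} transports across the closing edge. In the paper's context (the figure, the phrase ``half wheel'', and the use in Theorem~\ref{thm:finding-extraction-order-np-hard}) the graph $G_w$ is the \emph{half} wheel, so Lemma~\ref{lem:vertex-cover-reduction-lem-1} applies verbatim with no extension needed; the outer graph is a path with $n-1$ edges, and the same counting $2|\VC|\geq n-1$ still gives $|\VC|\geq \lfloor n/2\rfloor$. Your more general reading does no harm, but the transport step is not actually required.
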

 
 The result of Lemma~\ref{lem:vertex-cover-reduction-lem-1} can be generalized in the following sense.
 
 \begin{lemma}
 \label{lem:vertex-cover-nearly-there}
 Given is a connected, undirected graph $\bar{G}=(\bar{V},\bar{E})$, we define a directed version ${G}_{VC}=({V}_{VC},{E}_{VC})$ with an additional super node $\hat{r}$ as follows: ${V}_{VC} = \bar{V} \sqcup \{\hat{r}\}$, and ${E}_{VC} = \{(i,j) | \{i,j\} \in \bar{E}, i < j \} \cup \{(\hat{r},i)| i \in \bar{V}\}$. 
 The minimal width of an extraction order ${G}^{\mathcal{X}}_{VC}$ rooted at $\hat{r}$ equals the size of the minimum vertex cover  of $\bar{G}$  plus one.
 \end{lemma}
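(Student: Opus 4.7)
My plan is to prove the claim by establishing matching lower and upper bounds: every extraction order of $G_{VC}$ rooted at $\hat{r}$ has width at least $|\VC^*|+1$, where $\VC^*$ denotes a minimum vertex cover of $\bar{G}$, and there exists an extraction order achieving this width exactly.

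I first establish a structural preliminary: in any extraction order of $G_{VC}$ rooted at $\hat{r}$, the edge $\{\hat{r},i\}$ must be oriented as $(\hat{r},i)$ for every $i \in \bar{V}$. Indeed, if it were reversed to $(i,\hat{r})$, then since $i$ must be reachable from the root $\hat{r}$ there would exist a directed path $\hat{r} \leadsto i$ which, concatenated with $(i,\hat{r})$, would yield a directed cycle, contradicting acyclicity. Hence any extraction order is determined by an acyclic orientation of $\bar{E}$ together with the fixed arcs $(\hat{r},i)$ for all $i \in \bar{V}$.

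For the lower bound, I will show that the set $\mathcal{T} \subseteq \bar{V}$ of confluence targets forms a vertex cover of $\bar{G}$ and that its elements all appear as labels within a single edge bag at $\hat{r}$. For each edge $\{i,j\} \in \bar{E}$ oriented as $(i,j)$, the two node-disjoint paths $\langle(\hat{r},j)\rangle$ and $\langle(\hat{r},i),(i,j)\rangle$ form a confluence with target $j$; hence the head of every oriented edge of $\bar{E}$ lies in $\mathcal{T}$, so $\mathcal{T}$ is a vertex cover and $|\mathcal{T}| \geq |\VC^*|$. The same witnessing confluence labels both $(\hat{r},i)$ and $(\hat{r},j)$ with $j$, so by Definition~\ref{def:edge-bags} these two edges lie in a common bag; by the connectedness of $\bar{G}$ all of $\hat{r}$'s outgoing edges then lie in a single bag. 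Finally, every $k \in \mathcal{T}$ labels $(\hat{r},k)$ via its witnessing confluence, so the bag contains at least $|\mathcal{T}|$ distinct labels, giving width at least $|\mathcal{T}|+1 \geq |\VC^*|+1$.

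For the upper bound, I will exhibit an extraction order of width exactly $|\VC^*|+1$. Choose any total ordering of $\bar{V}$ placing the vertices of $\bar{V}\setminus\VC^*$ before those of $\VC^*$ and orient every edge of $\bar{E}$ from its smaller to its larger endpoint under this order. The orientation is acyclic, and since $\VC^*$ is a vertex cover every oriented edge of $\bar{E}$ has its head in $\VC^*$. Any confluence target $w \in \bar{V}$ is reached by two node-disjoint paths ending in distinct arcs, and since $(\hat{r},w)$ is the only possible arc into $w$ emanating from $\hat{r}$, at least one of these two arcs must belong to $\bar{E}$; hence $w$ is the head of an edge of $\bar{E}$, forcing $w \in \VC^*$. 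By Lemma~\ref{lem:labels:edge-labeling-conditions}, every edge label is a confluence target and therefore lies in $\VC^*$, so every edge bag contains at most $|\VC^*|$ distinct labels and the width is at most $|\VC^*|+1$.

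The main obstacle, in my view, lies in the lower bound, specifically in showing that all outgoing edges of $\hat{r}$ collapse into a single bag whose label set contains all of $\mathcal{T}$; this blends the explicit confluence construction with the pairwise bag-merging rule of Definition~\ref{def:edge-bags} and the connectedness of $\bar{G}$. Once both bounds are secured, $\ewX(G^{\mathcal{X}}_{VC}) = |\VC^*|+1$ follows.
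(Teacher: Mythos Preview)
Your proof is correct and, in fact, cleaner and more complete than the paper's own argument. Both proofs rest on the same core observation: in any extraction order rooted at $\hat r$, the set of heads of the re-oriented $\bar E$-edges is a vertex cover, and each such head $j$ becomes a label of $(\hat r,j)$ via the confluence $\langle(\hat r,j)\rangle \sqcup \langle(\hat r,i),(i,j)\rangle$, forcing all of $\hat r$'s outgoing edges into a single bag by connectedness of~$\bar G$.

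Where the two diverge is in presentation and completeness. The paper invokes Lemma~\ref{lem:vertex-cover-reduction-lem-1} (the half-wheel analysis) as a black box, arguing ``per path'' that heads accumulate as labels in the root bag; it then simply asserts that under a minimal-width extraction order the width equals the minimum vertex cover size plus one, without explicitly constructing an orientation that attains the bound. You instead give a self-contained lower bound (building the confluences directly and merging bags via the connectivity of~$\bar G$) and an explicit upper-bound construction: ordering $\bar V\setminus\VC^*$ before $\VC^*$ and orienting low-to-high guarantees all heads land in $\VC^*$, and since every edge label is a confluence target (Lemma~\ref{lem:labels:edge-labeling-conditions}) and every confluence target is a head, the label set at any node is contained in $\VC^*$. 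Your route is more elementary (no dependence on the half-wheel lemma) and closes the gap the paper leaves in the upper bound.

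One minor remark: when you write ``every $k \in \mathcal{T}$ labels $(\hat r,k)$ via its witnessing confluence'', it is worth noting that this holds because any confluence target $k$ necessarily has an incoming $\bar E$-edge (there is only one arc $(\hat r,k)$ from $\hat r$, so the second incoming arc of any confluence into $k$ must come from $\bar E$), which then supplies the explicit confluence through $(\hat r,k)$. You establish exactly this fact in your upper-bound paragraph, so the argument is complete; you might just move or cross-reference that observation so the lower bound reads as fully justified on its own.
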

 \begin{proof}
 Let ${G}^{\mathcal{X}}_{VC}$ be an extraction order of $G_{VC}$ which is rooted at $\hat{r}$.
 The proof of Lemma~\ref{lem:vertex-cover-reduction-lem-1} has shown that whenever a path $P$ in the original graph is considered, all nodes of ${G}^{\mathcal{X}}_{VC}$ with at least one incoming edge (with respect to the original edge set) are labels of the \emph{same} edge bag of the root $\hat{r}$. As this property holds for any simple path contained in $\bar{G}$ and as $\bar{G}$ is connected, there can only be a single edge bag: if there was more than one edge bag, then there does not exist a path $P$ connecting any of the edges of the first bag to any of the edges in the second bag, refuting the connectivity of $\bar{G}$.  Applying Lemma~\ref{lem:vertex-cover-reduction-lem-1} for any path $P$ of the original graph, the single edge bag of the root $\hat{r}$ must contain any node having at least one incoming edge according to the original edge set $\bar{E}$. Hence, assuming that $G^{\mathcal{X}}_{VC}$ has minimal width, the width of ${G}^{\mathcal{X}}_{VC}$ equals the size of the minimum vertex cover of $\bar{G}$ plus one.
 \end{proof}
 
 Lemma~\ref{lem:vertex-cover-nearly-there} is the basis of our proof that computing extraction orders of minimal width is $\NPhard$ via a reduction from vertex cover (cf. Theorem~\ref{thm:finding-extraction-order-np-hard}). For the proof of our reduction, we require the following lemma.
 
{
\renewcommand{\VG}{\ensuremath{G}}
\renewcommand{\VV}{\ensuremath{V}}
\renewcommand{\VE}{\ensuremath{E}}
\renewcommand{\VGbfs}{\ensuremath{G^{\mathcal{X}}}}
\renewcommand{\VEbfs}{\ensuremath{E^{\mathcal{X}}}}
\renewcommand{\VVroot}{\ensuremath{s}}

\begin{lemma}
\label{lem:separator-becomes-root}
Consider a graph $\VG=(\VV,\VE)$ with a corresponding extraction order $\VGbfs=(\VV,\VEbfs, \VVroot)$. Assume that a node $v \in \VV$ exists that separates a set of nodes $U \subset \VV$  from the root node $\VVroot$. Then any edge incident to $v$ and some node $u \in U$ is oriented away from $v$ in the extraction order $\VGbfs$, i.e. $(v,u) \in \VEbfs$ holds for all $u \in U$.
\end{lemma}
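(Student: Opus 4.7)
The plan is to argue by contradiction, exploiting the two defining properties of an extraction order from Definition~\ref{def:orientation-graph}: (i) $\VGbfs$ is acyclic and (ii) every node of $\VV$ is reachable from the root $\VVroot$ via directed edges in $\VEbfs$. Suppose, for the sake of contradiction, that there is some $u \in U$ such that the edge between $v$ and $u$ is oriented as $(u,v) \in \VEbfs$ rather than $(v,u)$. I will derive a directed cycle in $\VGbfs$, violating property~(i).

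The key observation is that since $\VGbfs$ shares its undirected skeleton with $\VG$, the hypothesis that $v$ separates $U$ from $\VVroot$ transfers to $\VGbfs$: every (undirected) path from $\VVroot$ to a vertex of $U$ must traverse $v$. By property~(ii) there exists a directed path $P \subseteq \VEbfs$ from $\VVroot$ to $u$; shortcutting any repeated vertices, I may assume $P$ is simple. Its underlying undirected path connects $\VVroot$ to $u \in U$ and hence passes through $v$, so $v$ is an internal vertex of $P$. Let $Q$ denote the suffix of $P$ starting at $v$ and ending at $u$; this is a simple directed walk in $\VEbfs$ from $v$ to $u$ that does not return to $v$. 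Appending the assumed edge $(u,v) \in \VEbfs$ to $Q$ yields a directed cycle $v \to \cdots \to u \to v$ in $\VGbfs$, contradicting acyclicity. Thus no such edge $(u,v)$ can exist, and every edge incident to $v$ and some $u \in U$ must be oriented as $(v,u) \in \VEbfs$, as claimed.

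The main subtlety, and essentially the only non-routine step, is to recognize that the separator hypothesis is to be read in the underlying undirected sense (equivalently, in $\VG$ or in the skeleton of $\VGbfs$), so that any directed $\VVroot$-to-$u$ path in $\VEbfs$ is forced to visit $v$. Once this is established, the combination with the misoriented edge immediately produces the forbidden cycle, and the remainder is bookkeeping (simplification of $P$ by cycle elimination to ensure the final concatenation is indeed a cycle, which is standard in any DAG reachable from a root).
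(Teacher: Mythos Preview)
Your proof is correct and follows essentially the same route as the paper: assume an edge $(u,v)\in\VEbfs$, use reachability of $u$ from $\VVroot$ together with the separator property to conclude $u$ is reachable from $v$, and close a directed cycle through $(u,v)$, contradicting acyclicity. Your added remarks on simplifying $P$ and on reading the separator hypothesis in the undirected skeleton are sound elaborations of steps the paper leaves implicit.
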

\begin{proof}
Assume for the sake of contradiction that for some node $u \in U$ the edge $(u,v)$ is contained in the extraction order. By the definition of the extraction order all nodes must be reachable from the root $\VVroot$. As the node $v$ separates $u$ from the root, all paths from $\VVroot$ to $u$ must contain $v$. Hence, $u$ must be reachable from $v$ and the edge $(u,v)$ hence creates a loop in $\VGbfs$ which contradicts the acyclicity of $\VGbfs$. Hence, any edge incident to $v$ and some node $u \in U$ must be directed away from $v$.
\end{proof}
}

 {
 \renewcommand{\VGbfs}{\ensuremath{G_{VC}^{\mathcal{X}}}}
 \renewcommand{\VEbfs}{\ensuremath{E_{VC}^{\mathcal{X}}}}
 \renewcommand{\VVroot}{\ensuremath{s_{VC}}}

\begin{restatable}{theorem}{thmFindingExtractionOrderNPHard}
 \label{thm:finding-extraction-order-np-hard}
 Computing an extraction order of minimum width is $\NPhard$.
 \end{restatable}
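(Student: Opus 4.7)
The plan is to reduce from the \NPcomplete{} Vertex Cover problem. Given an instance $(\bar G,k)$ with $\bar G=(\bar V,\bar E)$ (WLOG connected, else each connected component is handled separately), I will construct the directed graph $G_{VC}$ of Lemma~\ref{lem:vertex-cover-nearly-there} and argue that $\bar G$ has a vertex cover of size at most $k$ if and only if $G_{VC}$ has extraction width at most $k+1$. Since the construction is polynomial and Vertex Cover is \NPhard, this establishes \NPhardness{} of computing the minimum extraction width, which in turn yields the \NPhardness{} of computing an extraction order of minimum width.

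The forward direction is immediate from Lemma~\ref{lem:vertex-cover-nearly-there}: given a vertex cover $C\subseteq\bar V$ of size at most $k$, I orient $\bar G$'s edges so that every edge's head lies in $C$ and root the extraction order at $\hat r$; Lemma~\ref{lem:vertex-cover-nearly-there} then yields width $|C|+1\le k+1$.

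For the converse I must show that \emph{every} extraction order of $G_{VC}$, regardless of root, has width at least $\textnormal{VC}(\bar G)+1$. I proceed by case analysis on the root $r$. If $r=\hat r$, Lemma~\ref{lem:vertex-cover-nearly-there} directly applies. If $r\in\bar V$, acyclicity forces the spoke $(\hat r,r)$ to be reversed to $(r,\hat r)$, and Lemma~\ref{lem:separator-becomes-root} restricts the possible orientations of the remaining spokes and of $\bar G$'s edges. Under any such orientation, I show---mirroring the proof of Lemma~\ref{lem:vertex-cover-nearly-there}---that for every edge $\{u,v\}\in\bar E$ there exist two node-disjoint paths from a common ancestor (typically $\hat r$, or $r$ when $u=r$) to either $u$ or $v$, one via the $\bar G$-edge and one via the spoke through $\hat r$, so that at least one endpoint of $\{u,v\}$ becomes a confluence target. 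Consequently the $\bar V$-labels of the extraction order contain a vertex cover of $\bar G$; and the connectedness of $\bar G$ places all of these labels into a single edge bag (at $\hat r$ or $r$) by the same label-overlap argument used in the proof of Lemma~\ref{lem:vertex-cover-nearly-there}, yielding width $\ge\textnormal{VC}(\bar G)+1$.

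The main obstacle is the case $r\in\bar V$: although the intuition that rooting outside of $\hat r$ cannot help is clear, making it precise requires carefully accounting for the orientation choices of the spokes $\{\hat r,w\}$ for $w\neq r$. In particular, if some such spoke is reversed to $w\to\hat r$ rather than $\hat r\to w$, then $\hat r$ may itself become a confluence target (via paths $u\to v\to\hat r$ and $u\to\hat r$) instead of one of the endpoints of the $\bar G$-edge; the bookkeeping needed to conclude that the width still exceeds $\textnormal{VC}(\bar G)$ relies on acyclicity, Lemma~\ref{lem:separator-becomes-root}, and the connectivity of $\bar G$, which together force any such ``exotic'' extraction order to still contribute $\textnormal{VC}(\bar G)$ distinct labels into a common bag.
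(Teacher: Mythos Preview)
Your reduction has a genuine gap in the converse direction for the case $r\in\bar V$: the claimed lower bound $\ewX(\VGbfs)\geq\textnormal{VC}(\bar G)+1$ is simply false for $G_{VC}$ as constructed in Lemma~\ref{lem:vertex-cover-nearly-there}. Take $\bar G$ to be the path $v_1\!-\!v_2\!-\!v_3\!-\!v_4$, so that $\textnormal{VC}(\bar G)=2$ and $G_{VC}$ is exactly the half wheel of Figure~\ref{fig:half-wheel} with center $\hat r=w_c$ and four outer nodes. Root an extraction order at $v_2$, orient the path edges outward along the path, and orient \emph{every} spoke toward $\hat r$. The resulting DAG has $\hat r$ as its unique sink; the only confluence target is $\hat r$ itself, every edge carries the single label $\hat r$, and the width is~$2$. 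Thus $\ew(G_{VC})=2<\textnormal{VC}(\bar G)+1=3$, so your ``if and only if'' fails. This is precisely the scenario you flag as ``exotic'' in your last paragraph, but the bookkeeping you promise cannot rescue it: once all spokes point into $\hat r$, the labels contributed by $\bar V$ vanish entirely.

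This is why the paper does \emph{not} reduce to $G_{VC}$ alone. It attaches an additional half wheel with $2|\bar V|+2$ outer nodes, identifying its center with $\hat r$. The sole purpose of this gadget is to make rooting anywhere other than on an outer wheel node prohibitively expensive (width $\geq|\bar V|+1$ via Corollary~\ref{cor:minimal-wheel-if-wc-is-root} and Lemma~\ref{lem:separator-becomes-root}), so that in any \emph{optimal} extraction order the root sits on the wheel, $\hat r$ becomes a separator, all spokes into $\bar V$ are forced outward from $\hat r$ by Lemma~\ref{lem:separator-becomes-root}, and only then does Lemma~\ref{lem:vertex-cover-nearly-there} apply. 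Without such a root-forcing gadget your reduction does not go through.
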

 \begin{proof}
 We give a polynomial time reduction of the vertex cover problem to the problem of finding the extraction order of minimum width. We adapt the construction used in Lemma~\ref{lem:vertex-cover-nearly-there} slightly, to force the mapping of the root node to $\hat{r}$. Concretely, we add a \emph{half wheel graph (cf. Figure~\ref{fig:half-wheel})} $G_w$ with $2\cdot|\bar{V}|+2$ outer nodes to the graph ${G}_{VC}$ \emph{and identify} the node $\hat{r}$ with the wheel's node $w_c$, i.e. $\hat{r} = w_c$. Let $\VGbfs = (V_{VC}, \VEbfs, \VVroot)$ be an extraction order of minimum width. The extraction order's root $\VVroot$ must be placed on some outer wheel node:
 \begin{description}
 \item[Root $\VVroot$ is placed on a wheel node $w_i$:] We first consider the orientations of edges inside the wheel graph. According to  Figure~\ref{fig:half-wheel} (center) there is an orientation such that the extraction width inside the wheel graph is 2. The node $\hat{r}=w_c$ separates the original graph $\bar{G}$ from the extraction order's root $\VVroot=w_i$. Thus, by Lemma~\ref{lem:separator-becomes-root} all edges incident to a node  $v \in \bar{V}$ and $w_c$ must be oriented away from $w_c$. Hence, excluding the outer wheel nodes, the node $w_c$ is a root in the corresponding extraction order. Thus, the width of the extraction order $\VGbfs$ -- excluding the outer wheel nodes -- equals the size of a minimum vertex cover of $\bar{G}$ plus one by Lemma~\ref{lem:vertex-cover-nearly-there} and the assumption that $\VGbfs$ is of minimal width. Lastly, note that no confluence spanning the wheel graph $G_w$ and the graph $\bar{G}$ exists. Letting $VC$ denote a minimal vertex cover of $\bar{G}$, the width of the extraction order $\VGbfs$ equals $\max\{2,|VC|\} \leq |\bar{V}|$.
 \item[Root $\VVroot$ is placed on $\hat{r}=w_c$:] In this case, the width of the extraction order ${G}^{\mathcal{X}}_{VC}$ is at least $|\bar{V}|+1$ based on Corollary~\ref{cor:minimal-wheel-if-wc-is-root}. Hence, as the size of a vertex cover of $\bar{G}$ is always less than $|\bar{V}|$, the placement of the root on $\hat{r}$ contradicts the optimality assumption of $\VGbfs$.
 \item[Root $\VVroot$ is placed on a node $v \in \bar{V}$:] In this case, the width is again at least $|\bar{V}|+1$: as the node $\hat{r}=w_c$ separates the outer wheel nodes from the root $\VVroot$, all wheel edges incident to $w_c$ are oriented away from $w_c$. As $w_c$ is hence a root in the extraction order restricted to the wheel graph, the width is at least $|\bar{V}|+1$ by Lemma~\ref{lem:vertex-cover-reduction-lem-1}. Thus, the placement of the extraction's root on any node $v \in \bar{V}$ contradicts the optimality of the extraction order $\VGbfs$.
 \end{description}
 Now, let $VC\subseteq \bar{V}$ denote a minimum vertex cover of $\bar{G}$. If $|VC| > 1$ holds, then for any optimal extraction order $\VGbfs$, $\ewX(\VGbfs) = |VC|+1$ holds. Furthermore, a minimal vertex cover $VC$ can be recovered from any minimum width extraction order $\VGbfs$ by placing any node $v$ in the cover $VC$ whenever at least two edges are oriented towards it in $\VGbfs$. As the cases in which the minimal vertex cover is less or equal to $1$ can be trivially identified, computing a minimum width extraction order is $\NPhard$.
 \end{proof}
 }

\section{Conclusion}\label{sec:conclusion}
We have presented the first (fixed-parameter tractable) approximation algorithms
for the Virtual Network Embedding Problem (VNEP)
supporting arbitrary request graphs.
To enable the decomposability of general request graphs, we have developed a novel LP formulation whose size is parameterized by a novel graph number: the extraction width and exploring it further will be of great interest for practical applications. Finally, while having focused on the theoretic aspects of approximating the VNEP, we provide the research community with implementations and empirical evaluations 
at \url{https://vnep-approx.github.io/}.

\noindent\textbf{Acknowledgements.} This work was partially supported by Aalborg University's PreLytics project as well as by the German BMBF Software Campus grant 01IS1205.
The authors would like to thank Elias D\"ohne and Alexander Elvers for proof-reading parts of the paper and contributing significantly to our implementation at  \url{https://vnep-approx.github.io/}.

{
\bibliographystyle{plainurl}

\begin{thebibliography}{10}

\bibitem{aspnes1997line}
James Aspnes, Yossi Azar, Amos Fiat, Serge Plotkin, and Orli Waarts.
\newblock On-line routing of virtual circuits with applications to load
  balancing and machine scheduling.
\newblock {\em Journal of the ACM (JACM)}, 44(3):486--504, 1997.

\bibitem{awerbuch1993throughput}
Baruch Awerbuch, Yossi Azar, and Serge Plotkin.
\newblock Throughput-competitive on-line routing.
\newblock In {\em Proc. 34th Annual Symposium on Foundations of Computer
  Science (FOCS)}, pages 32--40, 1993.

\bibitem{oktopus}
Hitesh Ballani, Paolo Costa, Thomas Karagiannis, and Ant Rowstron.
\newblock Towards predictable datacenter networks.
\newblock In {\em ACM SIGCOMM Computer Communication Review}, volume~41, pages
  242--253. ACM, 2011.

\bibitem{bansal2009logarithmic}
Nikhil Bansal, Zachary Friggstad, Rohit Khandekar, and Mohammad~R Salavatipour.
\newblock A logarithmic approximation for unsplittable flow on line graphs.
\newblock In {\em Proc. 20th Annual ACM-SIAM Symposium on Discrete Algorithms
  (SODA)}, pages 702--709, 2009.

\bibitem{bansal2011minimum}
Nikhil Bansal, Kang-Won Lee, Viswanath Nagarajan, and Murtaza Zafer.
\newblock Minimum congestion mapping in a cloud.
\newblock In {\em Proc. ACM PODC}, 2011.

\bibitem{bartal1997line}
Yair Bartal and Stefano Leonardi.
\newblock On-line routing in all-optical networks.
\newblock {\em Theor. Comput. Sci.}, 221(1-2), 1999.

\bibitem{bjorklund2014shortest}
Andreas Bj{\"o}rklund and Thore Husfeldt.
\newblock Shortest two disjoint paths in polynomial time.
\newblock In {\em Proc. International Colloquium on Automata, Languages, and
  Programming (ICALP)}, pages 211--222, 2014.

\bibitem{vnep}
N.~Chowdhury, M.R. Rahman, and R.~Boutaba.
\newblock Virtual network embedding with coordinated node and link mapping.
\newblock In {\em Proc. IEEE INFOCOM}, 2009.

\bibitem{dubhashi2009concentration}
Devdatt~P Dubhashi and Alessandro Panconesi.
\newblock {\em Concentration of measure for the analysis of randomized
  algorithms}.
\newblock Cambridge University Press, 2009.

\bibitem{eppstein1995subgraph}
David Eppstein.
\newblock Subgraph isomorphism in planar graphs and related problems.
\newblock In {\em Proc. Annual ACM-SIAM Symposium on Discrete Algorithms
  (SODA)}, volume~95, pages 632--640, 1995.

\bibitem{sss-guy}
Guy Even, Moti Medina, and Boaz Patt-Shamir.
\newblock Online path computation and function placement in sdns.
\newblock In {\em Proc. International Symposium on Stabilization, Safety, and
  Security of Distributed Systems (SSS)}, 2016.

\bibitem{sirocco16path}
Guy Even, Matthias Rost, and Stefan Schmid.
\newblock An approximation algorithm for path computation and function
  placement in sdns.
\newblock In {\em Proc. SIROCCO}, 2016.

\bibitem{vnep-survey}
A.~Fischer, J.F. Botero, M.~Till~Beck, H.~de~Meer, and X.~Hesselbach.
\newblock Virtual network embedding: A survey.
\newblock {\em Comm. Surveys Tutorials, IEEE}, 15(4), 2013.

\bibitem{ford1962flows}
LR~Ford and DR~Fulkerson.
\newblock {\em Flows in networks}.
\newblock Princeton Univ. Press, 1962.

\bibitem{grotschel1988geometric}
Martin Gr{\"o}tschel, L{\'a}szl{\'o} Lov{\'a}sz, and Alexander Schrijver.
\newblock Geometric algorithms and combinatorial optimization.
\newblock 1988.

\bibitem{sirocco15}
Tamas Lukovszki and Stefan Schmid.
\newblock Online admission control and embedding of service chains.
\newblock In {\em Proc. 22nd International Colloquium on Structural Information
  and Communication Complexity (SIROCCO)}, 2015.

\bibitem{mehraghdam2014specifying}
Sevil Mehraghdam, Matthias Keller, and Holger Karl.
\newblock Specifying and placing chains of virtual network functions.
\newblock In {\em Proc. 3rd {IEEE} CloudNet}, October 2014.

\bibitem{menger1927allgemeinen}
Karl Menger.
\newblock Zur allgemeinen kurventheorie.
\newblock {\em Fundamenta Mathematicae}, 10(1):96--115, 1927.

\bibitem{ietf-sfc-use-case-mobility-06}
J.~Napper, W.~Haeffner, M.~Stiemerling, D.~R. Lopez, and J.~Uttaro.
\newblock {Service Function Chaining Use Cases in Mobile Networks}.
\newblock Internet-draft, April 2016.
\newblock URL:
  \url{https://tools.ietf.org/html/draft-ietf-sfc-use-case-mobility-06}.

\bibitem{plotkin1995competitive}
Serge Plotkin.
\newblock Competitive routing of virtual circuits in atm networks.
\newblock {\em IEEE Journal on Selected Areas in Communications},
  13(6):1128--1136, 1995.

\bibitem{HVSBFTF15}
{R.~Hartert et al.}
\newblock A declarative and expressive approach to control forwarding paths in
  carrier-grade networks.
\newblock In {\em SIGCOMM}, 2015.

\bibitem{merlin}
{R. Soul{\'e} et al.}
\newblock Merlin: A language for provisioning network resources.
\newblock In {\em Proc. ACM CoNEXT}, 2014.

\bibitem{Raghavan-Thompson}
P~Raghavan and C~D Thompson.
\newblock Provably good routing in graphs: Regular arrays.
\newblock In {\em Proc. 17th ACM STOC}, pages 79--87, 1985.

\bibitem{robertson1995graph}
Neil Robertson and Paul~D Seymour.
\newblock Graph minors. xiii. the disjoint paths problem.
\newblock {\em Journal of combinatorial theory, Series B}, 63(1):65--110, 1995.

\bibitem{ccr15emb}
Matthias Rost, Carlo Fuerst, and Stefan Schmid.
\newblock Beyond the stars: Revisiting virtual cluster embeddings.
\newblock In {\em Proc. ACM SIGCOMM Computer Communication Review (CCR)}, 2015.

\bibitem{rostSchmidRandomizedRoundingFirstTR2016}
Matthias Rost and Stefan Schmid.
\newblock Service chain and virtual network embeddings: Approximations using
  randomized rounding.
\newblock Technical Report arXiv:1604.02180 [cs.NI], April 2016.
\newblock URL: \url{http://arxiv.org/abs/1604.02180}.

\bibitem{rostSchmidVNEPComplexity}
Matthias Rost and Stefan Schmid.
\newblock Np-completeness and inapproximability of the virtual network
  embedding problem and its variants.
\newblock Technical Report arXiv:1801.03162 [cs.NI], January 2018.
\newblock URL: \url{http://arxiv.org/abs/1801.03162}.

\bibitem{rostSchmidIFIPLeveragingRandomizedRoundingWithPreprint}
Matthias Rost and Stefan Schmid.
\newblock Virtual network embedding approximations: Leveraging randomized
  rounding.
\newblock (to appear) In Proceedings of IFIP Networking 2018. \\
\newblock Preprint available at arXiv:1803.03622 [cs.NI], URL: \url{http://arxiv.org/abs/1803.03622}.

\bibitem{rostSchmidFeldmann2014}
Matthias Rost, Stefan Schmid, and Anja Feldmann.
\newblock {It's About Time: On Optimal Virtual Network Embeddings under
  Temporal Flexibilities}.
\newblock In {\em Proc. IEEE IPDPS}, 2014.

\bibitem{sherry2012making}
Justine Sherry, Shaddi Hasan, Colin Scott, Arvind Krishnamurthy, Sylvia
  Ratnasamy, and Vyas Sekar.
\newblock Making middleboxes someone else's problem: network processing as a
  cloud service.
\newblock {\em ACM SIGCOMM Computer Communication Review}, 42(4):13--24, 2012.

\bibitem{vnep-rethink}
Minlan Yu, Yung Yi, Jennifer Rexford, and Mung Chiang.
\newblock Rethinking virtual network embedding: Substrate support for path
  splitting and migration.
\newblock {\em SIGCOMM Comput. Commun. Rev.}, 38(2):17--29, March 2008.

\end{thebibliography}

}

\appendix

\vspace{-8pt}
\section{Deferred Proofs}
\vspace{-6pt}
\label{app:deferred-proofs}

\approximationSingleEdge*
\begin{proof}
\vspace{-4pt}
Each random variable~$A_{r,u,v}$ is clearly contained in the interval~$[0, \maxAllocE]$. We choose~$t = \varepsilon \cdot \sqrt{2 \cdot \Delta \cdot \log |\SV|   } \cdot \Scap(u,v)$ and apply Hoeffding's Inequality:
{
\vspace{-4pt}
\allowdisplaybreaks
\begin{alignat}{2}
& \mathbb{P} \Big(A_{u,v} - \mathbb{E}(A_{u,v})~\geq t \Big)~&&  \notag \\
& ~ \leq exp \Bigg(\frac{-2\cdot t^2}{\sum_{\req \in \requests}~(\maxAllocE)^2} \Bigg)~&& \\
& ~ \leq exp \Bigg(\frac{-4 \cdot \varepsilon^2 \cdot \Delta \cdot \log |\SV| \cdot  \Scap^2(u,v)}{\sum_{\req \in \requests} (\varepsilon \cdot \Scap(\type,u)~\cdot \maxAllocE / \maxDemandE)^2} \Bigg)~&& \label{formula:proof:singleEdgeViolation:upperCapacities}\\
& ~ \leq exp \Bigg(\frac{-4 \cdot \varepsilon^2 \cdot \sum_{\req \in \requests} (\maxAllocE / \maxDemandE)^2 \cdot \log |\SV| \cdot \Scap^2(u,v)}{\varepsilon^2\cdot\Scap^2(u,v) \cdot \sum_{\req \in \requests} (\maxAllocE / \maxDemandE)^2} \Bigg)= |\SV|^{-4}&& \label{formula:proof:singleEdgeViolation:upperCapacities_2} 
%\\
%& ~ = exp \Bigg(\hspace{-4pt}-4 \cdot \log |\SV| \Bigg) = |\SV|^{-4}&& 
\end{alignat}

}
\vspace{-4pt}
We have used $\maxAllocV \leq \varepsilon \cdot \Scap(\type,u)~\cdot \maxAllocV / \maxDemandV$ in the second step, which follows from  $\maxDemandV \leq \varepsilon \cdot \Scap(\type,u)$. In the third step we plugged in the definition of $\Delta$. Lastly, we utilize that the expected load $\mathbb{E}(A_{\type,u})$ is upper bounded by the resource's capacity $\Scap(\tau,u)$ to obtain the lemma's statement.
\end{proof}

\end{document}